\documentclass[11pt, reqno]{amsart}
\usepackage{amsmath,amssymb,amsfonts,amscd,hyperref,color}
\usepackage[utf8]{inputenc}
\usepackage{csquotes}
\usepackage{nicefrac}

\usepackage[abs]{overpic}
\usepackage{verbatim}
\usepackage{enumerate}
\usepackage{tikz}

\newcommand{\Hmm}[1]{\leavevmode{\marginpar{\tiny%
$\hbox to 0mm{\hspace*{-0.5mm}$\leftarrow$\
	hss}%
\vcenter{\vrule depth 0.1mm height 0.1mm width \the\marginparwidth}%
\hbox to
0mm{\hss$\rightarrow$\hspace*{-0.5mm}}$\\\relax\raggedright #1}}}

\newtheorem{theorem}{Theorem}
\newtheorem{corollary}[theorem]{Corollary}
\newtheorem{lemma}[theorem]{Lemma}
\newtheorem{proposition}[theorem]{Proposition}

\theoremstyle{definition}

\newtheorem{example}[theorem]{Example}
\newtheorem*{remark}{Remark}

\numberwithin{equation}{section}
\newcommand{\Z}{{\mathbb Z}}
\newcommand{\R}{{\mathbb R}}

\newcommand{\N}{{\mathbb N}}

\newcommand{\D}{{\mathbb D}}

\let\L\undefined

\newcommand{\L}{\mathcal{L}}
\newcommand{\G}{\mathcal{G}}
\newcommand{\F}{\mathcal{F}}
\renewcommand{\P}{\mathcal{P}}
\newcommand{\al}{{\alpha}}
\newcommand{\be}{{\beta}}

\newcommand{\ph}{{\varphi}}
\newcommand{\eps}{{\varepsilon}}

\newcommand{\abs}[1]{\left| #1 \right|}
\newcommand{\norm}[1]{\left\| #1 \right\|}

\renewcommand{\d}{\,\mathrm{d}}
\newcommand{\ep}{\varepsilon}

\newcommand{\s}{\sigma}
\renewcommand{\L}{\mathcal{L}}
\renewcommand{\D}{\mathcal{D}}

\newcommand{\set}[1]{\left\lbrace #1 \right\rbrace }

\begin{document}
\title[Positive Criticality and Optimal Hardy Inequality]
{Positive Criticality and optimal Hardy Inequality for Fractional Laplacians}
\author[P.~Hake]{Philipp Hake}

\address{P.~Hake, Institut f\"ur Mathematik, Universit\"at Leipzig, 04109 Leipzig, Germany}\email{philipp.hake@math.uni-leipzig.de}
\author[M.~Keller]{Matthias Keller}
\address{M.~Keller, Israel Institute of Advanced Studies, Jerusalem, Israel; Institut f\"ur Mathematik, Universit\"at Potsdam
14476  Potsdam, Germany}
\email{matthias.keller@uni-potsdam.de}
\author[F.~Pogorzelski]{Felix Pogorzelski}
\address{ F.~Pogorzelski, Israel Institute of Advanced Studies, Jerusalem, Israel;  Institut f\"ur Mathematik, Universit\"at Leipzig, 04109 Leipzig, Germany}
\email{felix.pogorzelski@math.uni-leipzig.de}

\date{\today}

\begin{abstract} We characterize positive critical Hardy weights for general Laplacians on weighted graphs. We then apply this result to fractional Laplacians on general graphs and use the characterization to identify an optimal Hardy weight under suitable assumptions. We finally illustrate our results with examples of graphs which arise as Cayley graphs of groups, satisfy curvature assumptions or are fractal graphs.
\end{abstract}
\maketitle

\tableofcontents

\section{Introduction and main results}
The study of Hardy inequalities is a classical topic over the last century \cite{Davies,HardyHistory} which has seen tremendous attention for discrete and fractional operators especially in the last decades see e.g. \cite{BD,BSV,CGL,CR18,DFF,DKP26,Dyda,Fischer,FKP23,FLS,FSW08,FR25,GKS25,Go,Gup23,Gup24,KL16,KL23,KN,KPP18,KPPHardy,KPP20HardyOnManifoldsAndGraphs,KrSt,L20,L23,RS09,RS13,RS}. Besides their intrinsic interest, Hardy inequalities are a powerful tool to study spectral properties of operators and have applications in the study of partial differential equations, geometry and mathematical physics.

Throughout the last century these inequalities have been mainly studied as a fundamental tool in the analysis of partial differential equations. However, the recent interest in their original discrete form was made possible by the fundamental new perspective developed by \cite{DFP14,DP16}. These works answer a question raised by Agmon for optimal Hardy inequalities for general second order elliptic operators on domains in $\R^d$ and Riemannian manifolds, see \cite{Agmon}. Their supersolution construction allows to construct optimal Hardy weights from positive superharmonic functions. This method was transfered to graphs in \cite{KPP18} and allowed to construct optimal Hardy weights for the standard lattice graph over $\Z^d$. However, it has to be noted that while these weights have the anticipated asymptotics near infinity including the correct constant, they have higher order terms which are not positive except for the one-dimensional case \cite{KPPHardy,FKP23,KrSt,L20,RS}. Indeed, the comparison with the asymptotics of the optimal constant for the Hardy weight without higher order terms shows that these terms are necessary for the inequality to hold, cf.~\cite{Gup23}.

A second source of interest stems from the study of the fractional Laplacian which is a fundamental operator in geometric analysis, relativistic quantum mechanics, turbulence, elasticity, laser physics, and anomalous transport. While the optimal constant for the fractional Hardy inequality in Euclidean space is known for already a while \cite{BD,FLS,H77,Y99}, it was so far not studied in the continuum in regards for optimality of the Hardy weight. With the recent interest in its discrete counterpart \cite{CRSTV,DMRM,FRR,ZLY} the question of optimality arose as well \cite{CR18,Dyda} and was answered in the Euclidean setting \cite{DFF,HKP,KN} using criticality theory. These approaches use explicit knowledge of the asymptotics of the Riesz kernel  \cite{CR18,CRSTV,DMRM,FRR,GRM,Slade}.

In this paper we study the problem of critical Hardy inequalities on general graphs and treat then the case of the fractional Laplacian. Optimality of  a Hardy weight in the sense of \cite{DFP14,KPP18} assumes that the Hardy weight is maximal, i.e., is critical, and that the inequality does not have a minimizer, i.e., is null-critical.  Hardy weights which are critical but where the inequality allows for a minimizer are called positive critical. 

In the first part of the paper we characterize these positive critical Hardy weights for general graphs, see Theorem~\ref{thm:classification_positive-critical_weights_Hardy} below.  This result was already used as a specific tool in \cite{HKP}, where the focus is on determining the precise constant  in the specific setting of $ \Z^d $ rather than developing the general theory. Here, we are interested in the framework of general graphs and, therefore, present the result in full generality.

While positive critical Hardy weights are not optimal, the characterization  can be used to identify optimal Hardy weights. This is demonstrated for the case of fractional Laplacians later. For these operators, we study a family of Hardy weights arising from the Riesz kernel and characterize when they are positive critical, see Theorem~\ref{t:poscritical} for general fractional graph Laplacians. To identify null-critical and thus, optimal Hardy weights we need additional assumptions on the graphs.
Next, to a mild lower exponential bound on the measure we need Ahlfors regularity and two-sided Gaussian bounds with respect to one reference vertex which  is used to determine the asymptotics of the Riesz kernel. This control allows for an exact identification of the open interval of parameters for which the weight is positive critical. We then use the minimizers  to construct a null-sequence at the upper end of the interval which shows criticality of the corresponding weight. However, as we already know that it cannot be posivite critical, it must be null-critical. These results are given in Theorem~\ref{thm:MAIN2} below.

Let us introduce the setting to state the results explicitly.  Let $m:X\to (0,\infty)$ be a positive function on a countable set $X$. We call the pair $(X,m)$ a \emph{discrete measure space} where $m$ induces a measure of full support, defined for a set $A\subseteq X$ as $m(A)=\sum_{x\in A}m(x)$. 
%We let $\ell^2(X,m)$ be the associated real Hilbert space of square summable functions with respect to $m$. 

A \emph{graph} $(b,c)$ over a discrete measure space set $(X,m)$ is a symmetric function $b:X\times X\to [0,\infty)$ with zero diagonal satisfying $$\deg(x)=\sum_{y\in X}b(x,y)<\infty$$ for all $x\in X$ and a {\em killing term} $c:X\to[0,\infty)$. Throughout this work we will assume that graphs are {\em connected}, i.e.,\@ for any two $x,y \in X$ there are $x_1=x, x_2, \dots, x_n, x_{n+1}=y$ such that $b(x_i, x_{i+1})>0$ for all $1 \leq i \leq n$. Denoting by $C(X)$
the vector space of all real-valued functions on $X$,
the graph gives rise to the formal Laplacian $\L:\mathcal{F} \to C(X)$ acting on the space of functions $$\mathcal{F}=\{f:X\to \R\mid \sum_{y\in X} b(x,y)|f(y)|<\infty\mbox{ for all }x\in X\},$$ via 
\[\L f(x)=\frac{1}{m(x)}\sum_{y\in X}b(x,y)(f(x)-f(y))+\frac{c(x)}{m(x)}f(x)\] 
and the associated quadratic form $\mathcal{Q}$ with domain $\mathcal{D}$ given by
\[\mathcal{Q}(f)=\frac{1}{2}\sum_{x,y\in X}b(x,y)(f(x)-f(y))^2+\sum_{x\in X}c(x)f(x)^2,\] on the {\em functions of finite energy} \[\mathcal{D}=\{f:X\to \R \mid \mathcal{Q}(f)<\infty\}.\]
We write $C_c(X) \subseteq C(X)$ for the subspace of all finitely supported functions and let $\mathcal{D}_0$ be the {\em extended space} defined as the closure of $C_c(X)$ in $\mathcal{D}$ with respect to the norm $\|f\|_{o}=\sqrt{\mathcal{Q}(f)+|f(o)|^2}$ for some fixed $o\in X$, i.e., 
\begin{align*}
\mathcal{D}_0=\overline{ C_c(X)}^{\|\cdot\|_{o}}.
\end{align*}
%A function $u $ is called \emph{superharmonic} if $\L u\ge 0$ pointwise. 

Given a summable or positive function $ f $ over a set $ A $ (typically a subset of $X$ or $ X\times X $), we write $$  \sum_{A}f=\sum_{x\in A}f(x).$$

A  non-negative function $w:X \to [0,\infty)$ is called a \emph{Hardy weight} if
\[\mathcal{Q}(\varphi)\ge \sum_{X}wm\varphi^2\]
for all $\varphi\in C_c(X)$. Whenever the inequality holds, it can be extended to all $\varphi\in \mathcal{D}_0$.
 Furthermore, we call the graph and the corresponding Laplacian \emph{transient} if  there exists a non-trivial Hardy weight.	As is well-known \cite[Theorem~6.1]{KLW} or \cite[Theorem 5.3]{KPP20}, a graph $(b,c)$ over $(X,m)$ satisfies a Hardy inequality if and only if for every $x \in X$,  there is a unique function $G_x \in \D_0$ such that $\mathcal{L} G_x = 1_x$, where $1_x$ denotes the characteristic function of the set $\{x\} \subseteq X$. The function $G:X \times X \to (0,\infty)$, $ G(x,y) = G_y(x)$ is called the {\em Green's function} of $b$ associated with $\mathcal{L}$. 
 % It satisfies the symmetry relation $G(x,y)m(x) = G(y,x)m(y)$ for all $x,y \in X$, cf.\@ \cite[Theorem~6.26]{KLW}. 
The Green's function gives rise to a positivity preserving  \emph{Green's operator} $G$  on
$\mathcal{G}$ associated with $\L$, i.e.,
\begin{align*}
	Gk(x)&=\sum_{y\in X}G(x,y)k(y), \\
	\mathcal{G} &=\{k:X\to \R \mid\sum_{y\in X}G(x,y)|k(y)|< \infty\}.
\end{align*}

A Hardy  weight $w$ is {\em critical} if there is no Hardy  weight $w^{\prime} \geq w$ with $w^{\prime} \neq w$. By criticality theory \cite{KPP20}, criticality of $w$ is equivalent to the existence of a {\em ground state} $v \in \mathcal{F}$, i.e.,\@ a up to multiplication by scalars unique strictly positive function $v \in \mathcal{F}$ satisfying $\mathcal{L}v = wv$. A critical Hardy  weight is said to be {\em positive critical} if the ground state $v$ is square integrable with respect to the measure $wm$, which means that above inequality allows for a minimizer. A critical Hardy weight is {\em null-critical} if it is not positive critical.

The first main result is the following characterization of positive critical Hardy  weights. We denote by $\ell^1(X,m)$ the space of all absolutely summable functions in $C(X)$ with respect to the measure $m$.

\begin{theorem}[Positive critical Hardy  weights] \label{thm:classification_positive-critical_weights_Hardy}
	Let  $w:X \to [0, \infty)$ be a  function. Then the following are equivalent. 
	\begin{enumerate}[(i)]
		\item $w$ is a positive-critical non-trivial Hardy weight. 
		\item $w = \mathcal{L} v / v$ for some strictly positive, superharmonic and non-harmonic $v \in \D_0$ with $\mathcal{L} v^2 \in \ell^1(X,m)$.
		\item $w =\mathcal{L} v / v$ for some strictly positive, superharmonic and non-harmonic $v \in \D_0$.
		\item $w= {\mathcal{L} Gk}/{Gk} = {k}/{Gk}$ for some non-trivial, non-negative $k \in \mathcal{G}$  such that 
		\[
		\sum_{X} (G k) k \, m  < \infty.
		\]
	\end{enumerate}
\end{theorem}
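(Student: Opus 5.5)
I will prove the cycle (i)$\Rightarrow$(ii)$\Rightarrow$(iii)$\Rightarrow$(iv)$\Rightarrow$(i). The tools are the ground state representation for the form $\mathcal{Q}-wm$ and the criticality theory of \cite{KPP20}. Two standard facts about superharmonic functions will be used repeatedly. First, the supersolution construction: for strictly positive superharmonic $v$, the weight $w=\mathcal{L}v/v$ is a Hardy weight. Second, it satisfies the ground state identity
\[\mathcal{Q}(v\varphi)-\sum_X wm (v\varphi)^2=\frac12\sum_{x,y}b(x,y)v(x)v(y)(\varphi(x)-\varphi(y))^2.\]

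\emph{(i)$\Rightarrow$(ii).} Let $v$ be the ground state, so that $\mathcal{L}v=wv$ and $v\in\ell^2(X,wm)$. Then $v$ is superharmonic, and it is non-harmonic because $w$ is non-trivial.
\begin{itemize}
\item To get $v\in\D_0$, I use criticality: there is a null-sequence $\varphi_n\in C_c(X)$ with $\varphi_n\to v$ pointwise and $\mathcal{Q}(\varphi_n)-\sum wm\varphi_n^2\to0$. Truncating, $\varphi_n\le v$ can be assumed. Positive criticality gives $\sum wm\varphi_n^2\to\sum wmv^2<\infty$ by dominated convergence. Hence $\mathcal{Q}(\varphi_n)$ is bounded, and a Banach--Saks / weak compactness argument in $(\D_0,\|\cdot\|_o)$ gives $v\in\D_0$.
\item For $\mathcal{L}v^2\in\ell^1(X,m)$, compute
\[m\,\mathcal{L}v^2(x)=2v(x)\,m\,\mathcal{L}v(x)-\sum_y b(x,y)(v(x)-v(y))^2-c(x)v(x)^2.\]
Then $|m\mathcal{L}v^2|\le 2wmv^2+\sum_y b(v(x)-v(y))^2+cv^2$, and each term is summable because $v\in\ell^2(wm)\cap\D$.
\end{itemize}

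\emph{(ii)$\Rightarrow$(iii)} is trivial.

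\emph{(iii)$\Rightarrow$(iv).} Set $k=\mathcal{L}v\ge0$, which is non-trivial. Since $v\in\D_0$ and $v$ is superharmonic, I claim $v=Gk$. One route is the Riesz decomposition $v=Gk+h$ with $h\ge0$ harmonic. Here $h=v-Gk\in\D_0$, since $Gk\le v$ and $\D_0$ is a lattice-type space in which functions dominated by $\D_0$ superharmonic functions lie. A harmonic function in $\D_0$ vanishes by transience, because $\mathcal{Q}(h)=\langle \mathcal{L}h,h\rangle=0$ after approximation. Then $w=k/Gk$ and
\[\sum (Gk)k\,m=\sum v\,\mathcal{L}v\,m=\mathcal{Q}(v)<\infty.\]
The identity $\sum v\mathcal{L}v\,m=\mathcal{Q}(v)$ for $v\in\D_0$ superharmonic is a Green formula. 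It holds via Fatou in one direction and via the $\D_0$ approximation with $v\ge0$ in the other.

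\emph{(iv)$\Rightarrow$(i).} Let $v=Gk>0$, which is strictly positive by connectedness. Then $\mathcal{L}v=k$, so $w=k/v$ is a Hardy weight by the supersolution construction, and $\sum wmv^2=\sum kvm<\infty$, so $v\in\ell^2(wm)$. It remains to show criticality, with $v$ as ground state. Then positivity follows and $w$ is non-trivial. Since $\mathcal{Q}(v)=\sum kvm<\infty$ and $Gk\in\D_0$, I take $\varphi_n\in C_c$ with $\varphi_n\to v$ in $\|\cdot\|_o$. Then both $\mathcal{Q}(\varphi_n)\to\mathcal{Q}(v)$ and $\sum wm\varphi_n^2\to\sum wmv^2$, the latter by the Hardy inequality, which is continuous on $\D_0$. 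Their difference tends to $\mathcal{Q}(v)-\sum wmv^2=\sum kvm-\sum kvm=0$. So $\varphi_n$ is a null-sequence for $\mathcal{Q}-w$, which yields criticality by \cite{KPP20}, and $\mathcal{L}v=wv$ identifies $v$ as the ground state.

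The key input here is $Gk\in\D_0$ with $\mathcal{Q}(Gk)=\sum kGk\,m$ whenever the right side is finite. This is standard: approximate $k$ by finitely supported $k_n\uparrow k$. Then $Gk_n\in\D_0$ with $\mathcal{Q}(Gk_n-Gk_l)=\sum (k_n-k_l)G(k_n-k_l)m$, which is Cauchy by monotone convergence.

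\textbf{Main obstacle.} I expect the hardest step to be (i)$\Rightarrow$(ii), namely showing that the ground state lies in $\D_0$ rather than just in $\D$. The (iii)$\Rightarrow$(iv) step has the same difficulty, ruling out the harmonic part. In both I would lean on the closedness of $\D_0$ under bounded-energy pointwise limits together with positive criticality.
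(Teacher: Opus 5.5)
Your cycle works in outline, and (i)$\Rightarrow$(ii) is exactly the paper's argument (Propositions~\ref{prop:characterisation_positive_criticality} and~\ref{prop:v_in_D_0_summabilities}). However, one step in (iii)$\Rightarrow$(iv) is justified by a false principle. You conclude $h=v-Gk\in\D_0$ from $0\le Gk\le v$ on the grounds that ``functions dominated by $\D_0$ superharmonic functions lie in $\D_0$''. Pointwise domination does not pass finite energy down. On the standard lattice $\Z^3$, take $v=G(\cdot,o)\asymp|x|^{-1}$ and let $A$ be one class of the bipartition. Then $0\le v1_A\le v$, but $\mathcal{Q}(v1_A)=6\sum_{x\in A}v(x)^2=\infty$.

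For the superharmonic potential $Gk$ the conclusion is true, but that is exactly the content of this step and it needs an argument. The repair is already in your proposal, just in the wrong order. First prove the Green formula $\sum_X v\,\mathcal{L}v\,m=\mathcal{Q}(v)$ for superharmonic $v\in\D_0$; your Fatou/approximation sketch does this. Then $k\ge 0$ and $Gk\le v$ give $\sum_X k\,Gk\,m\le\sum_X k\,v\,m=\mathcal{Q}(v)<\infty$. Your key input then yields $Gk\in\D_0$, so $h\in\D_0$ is harmonic and must vanish. Alternatively, as the paper does in Lemma~\ref{lem:superharmonic_and_D_0_is_potential}, you can drop the Riesz decomposition entirely. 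Applying Green's formula twice with $G_x\in\D_0$ gives $v(x)m(x)=\mathcal{Q}(G_x,v)=m(x)\,G\mathcal{L}v(x)$ directly.

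Once this is fixed, your route differs from the paper's mainly in how you return to (i). For (iv)$\Rightarrow$(i) you build the null-sequence directly from a $\|\cdot\|_o$-approximation of $Gk$. This uses two facts: the Hardy inequality makes $\varphi\mapsto\sum_X wm\varphi^2$ continuous on $\D_0$, and $\mathcal{Q}(Gk)=\sum_X k\,Gk\,m=\sum_X wm(Gk)^2$. The paper instead proves (iii)$\Leftrightarrow$(iv) via Proposition~\ref{prop:equivalence_P_D_0}. It then goes (iii)$\Rightarrow$(ii)$\Rightarrow$(i) via Corollary~\ref{cor:superharmonic_implies_summability_condition} and the equality case of Lemma~\ref{lem:extension_hardy_inequality}, which is the same null-sequence argument. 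Your cycle is shorter for this theorem. The paper's detour also handles sign-changing Hardy-type weights (Theorem~\ref{thm:classification_positive-critical_weights}) and sets up the potential/$\D_0$ dictionary it reuses in Section~\ref{sec:fractional}.
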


We apply the theorem above to the fractional Laplacian. To this end, we consider the Dirichlet Laplacian $L$ for a graph $(b,c)$ on $\ell^2(X,m)$ which is the self-adjoint operator associated with the closure of the quadratic form $\mathcal{Q}$ defined on $C_c(X)$ in $\ell^2(X,m)$, see \cite[Chapter~1]{KLW}. 
The \emph{fractional Laplacian} $ L^{\sigma} $, $ \sigma\in(0,1] $ is then defined via the spectral theorem and, in particular, $ L^{1}=L  $. 
We will show in Section~\ref{sec:fractional} below that $L^{\sigma}$ is the restriction of a formal Laplacian $\mathcal{L}^{\sigma}$ associated with a connected graph $(b_{\sigma},c_{\sigma})$ over $(X,m)$,
where $b_{\sigma}$  and $c_{\sigma}$ have explicit representations via the  time integral of the semigroup $ e^{-tL} $ of $ L $.  
We refer to Theorem~\ref{t:fractionalgraph} below for the details.

To construct Hardy weights for the fractional Laplacian, we will use the following family of functions which are intimately related to the Green's functions of fractional Laplacians.
For fixed $ o\in X $, we consider the following family functions 
 $k_\alpha:X\to [0,\infty] $,
$\alpha>0$, by
\[
k_{\alpha} = k_{\alpha,o} = \frac{1}{|\Gamma(-\al)|}\int_{0}^{\infty}e^{-tL}1_o\frac{\d t}{t^{1+\al}}
\] 
We will observe in Lemma~\ref{lem:amax} below that $k_{\alpha}$ maps to $(0, \infty)$ for all $\alpha \in (0,\alpha_{\mathrm{max}})$, where 
\[
\alpha_{\mathrm{max}} = \sup\big\{ \alpha > 0 \mid k_{\alpha,o}(o) < \infty \big\}.
\] 

Applying  Theorem~\ref{thm:classification_positive-critical_weights_Hardy}, we obtain the following characterization of positive criticality of the aforementioned Hardy weights. 

\begin{theorem}[Positive critical Hardy weights of the fractional Laplacian]\label{t:poscritical}
	Let $(b,c)$ be a connected graph over $ (X,m) $, $ \sigma\in (0,1) $ and $\alpha\in (\sigma,\alpha_{\max})$. Then, $$w_{\sigma,\al} =\frac{k_{\al-\sigma}}{k_{\al}}$$ is a positive critical Hardy weight for $L^\sigma$ if and only if	
	\begin{align*}
		\sum_X {k_{\al-\sigma}}{k_{\al}} \, m<\infty.
	\end{align*}    
\end{theorem}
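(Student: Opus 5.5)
The plan is to apply Theorem~\ref{thm:classification_positive-critical_weights_Hardy}, specifically the equivalence (i)$\Leftrightarrow$(iv), to the graph $(b_\sigma,c_\sigma)$ from Theorem~\ref{t:fractionalgraph}, whose formal Laplacian $\mathcal{L}^\sigma$ restricts to $L^\sigma$. It then suffices to show three things. First, $(b_\sigma,c_\sigma)$ is transient with Green's operator $G_\sigma$. Second, $k:=k_{\alpha-\sigma}$ is non-trivial, non-negative and lies in $\mathcal{G}_\sigma$. Third, the identity $G_\sigma k_{\alpha-\sigma}=k_\alpha$ holds, with measures taken into account. Given these, $w_{\sigma,\alpha}=k/G_\sigma k$ is exactly of the form in (iv). Moreover, the summability condition $\sum_X (G_\sigma k)k\,m<\infty$ becomes $\sum_X k_{\alpha-\sigma}k_\alpha m<\infty$. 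So both directions follow at once.

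The core is the semigroup identity. Formally, $L^{-\sigma}=\frac{1}{\Gamma(\sigma)}\int_0^\infty e^{-sL}s^{\sigma-1}\d s$. Substituting the definition of $k_{\alpha-\sigma}$ and using the semigroup property together with Tonelli (everything is non-negative, since $e^{-tL}$ is positivity preserving), I would compute
\[
\frac{1}{\Gamma(\sigma)}\int_0^\infty e^{-sL}k_{\alpha-\sigma}\,s^{\sigma-1}\d s=\frac{1}{\Gamma(\sigma)|\Gamma(\sigma-\alpha)|}\int_0^\infty\!\!\int_0^\infty e^{-(s+t)L}1_o\, s^{\sigma-1}t^{\sigma-\alpha-1}\d t\d s .
\]
Setting $u=s+t$ and computing the inner integral gives a Beta function: $\int_0^u s^{\sigma-1}(u-s)^{\sigma-\alpha-1}\d s$. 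This is where care is needed, because for $\alpha>\sigma$ the exponent $\sigma-\alpha-1<-1$, so the naive Beta integral diverges. I suspect the normalisation in the paper is really $\frac{1}{\Gamma(\alpha)}\int e^{-tL}1_o\, t^{\alpha-1}\d t$, i.e.\ the Riesz potential $L^{-\alpha}1_o$. In that case $|\Gamma(-\alpha)|$ and $t^{-1-\alpha}$ should be read accordingly, or the convention is adjusted. I would first check this against Lemma~\ref{lem:amax}.

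Once that is settled, the computation gives $L^{-\sigma}L^{-(\alpha-\sigma)}1_o=L^{-\alpha}1_o$ via $B(\sigma,\alpha-\sigma)=\Gamma(\sigma)\Gamma(\alpha-\sigma)/\Gamma(\alpha)$. Finiteness for $\alpha<\alpha_{\max}$ comes from Lemma~\ref{lem:amax}. Positivity of the $k$'s comes from connectedness, since $e^{-tL}$ is positivity improving.

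Next I need the Green's function of $\mathcal{L}^\sigma$ to be given by the kernel of $\frac{1}{\Gamma(\sigma)}\int_0^\infty e^{-sL}s^{\sigma-1}\d s$, with $G_\sigma(x,y)=\frac{1}{\Gamma(\sigma)}\int_0^\infty e^{-sL}1_y(x)s^{\sigma-1}\d s\,/m(y)$. Then $G_\sigma k(x)=\sum_y G_\sigma(x,y)k(y)$ matches $L^{-\sigma}k$ in $\ell^2(X,m)$ terms, because $e^{-sL}$ is $m$-symmetric. For this step I would take $\alpha=\sigma$ in the identity above (note $\sigma<\alpha_{\max}$), so that $G_\sigma 1_o$ is finite. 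I would then verify that $v=k_\sigma$ solves $\mathcal{L}^\sigma v=1_o/m(o)$ and lies in $\D_0$ for $\mathcal{L}^\sigma$. This can be done by approximating $\int_\epsilon^\infty$, using the resolvent-type characterisation of the extended Dirichlet space: $\mathcal{Q}_\sigma(L^{-\sigma}f)=\langle f, L^{-\sigma}f\rangle<\infty$. Uniqueness in $\D_0$, as in \cite{KPP20}, then identifies it with the Green's function and gives transience. The remaining conditions of (iv) are easy: $k\in\mathcal{G}_\sigma$ holds because $G_\sigma k=k_\alpha<\infty$, and $\mathcal{L}^\sigma G_\sigma k=k$ follows.

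I expect the main obstacle to be the rigorous identification of $G_\sigma$ with the semigroup integral on $\D_0$ for the graph $(b_\sigma,c_\sigma)$, in particular showing $k_\sigma\in\D_0$ when $L$ has no spectral gap.
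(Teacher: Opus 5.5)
Your proposal is correct, apart from two small points listed at the end. It has the same skeleton as the paper's proof: apply (i)$\Leftrightarrow$(iv) of Theorem~\ref{thm:classification_positive-critical_weights_Hardy} to $(b_\sigma,c_\sigma)$ with $k=k_{\alpha-\sigma}$, and use $G^\sigma k_{\alpha-\sigma}=k_\alpha$. Your reading of the normalisation is right. Section~\ref{sec:fractional} uses $k_\alpha=\frac{1}{\Gamma(\alpha)}\int_0^\infty e^{-tL}1_o\,t^{\alpha-1}\d t$; the formula in the introduction is a misprint.

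The real difference is in the supporting facts, which the paper proves beforehand.
\begin{itemize}
\item \textbf{The identity $G^\sigma k_{\alpha-\sigma}=k_\alpha$.} Your Tonelli/Chapman--Kolmogorov/Beta-function computation works directly with non-negative kernels. It then gives $\mathcal{L}^\sigma k_\alpha=k_{\alpha-\sigma}$ for free via Proposition~\ref{prop:characterisation_potentials}. The paper (Proposition~\ref{p:exponents}) instead regularises with $k_{\beta,\varepsilon}=(L+\varepsilon)^{-\beta}1_o\in\ell^2(X,m)$, uses $(L+\varepsilon)^{-\sigma}(L+\varepsilon)^{-(\alpha-\sigma)}=(L+\varepsilon)^{-\alpha}$, and lets $\varepsilon\searrow0$ by monotone convergence. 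It needs a separate limit argument for $\mathcal{L}^\sigma k_\alpha$. Your version is the more elementary one.
\item \textbf{Identifying $G^\sigma$ with the semigroup integral.} The step you expect to be the main obstacle is Theorem~\ref{p:Green}, and it needs no $\D_0$ argument. For $h=1_x/\sqrt{m(x)}$, both $(\lambda^\sigma+E)^{-1}$ and $(\lambda+E)^{-\sigma}$ increase to $\lambda^{-\sigma}$ as $E\searrow0$. Monotone convergence under the spectral measure then gives $G^\sigma(x,x)=\frac{1}{\Gamma(\sigma)}\int_0^\infty e^{-tL}1_x(x)\,t^{\sigma-1}\d t$, with no spectral gap needed; polarization handles $x\neq y$. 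Your route would also work, but it is longer. It would use $Q^\sigma\bigl((L+\varepsilon)^{-\sigma}1_o\bigr)\le\langle 1_o,(L+\varepsilon)^{-\sigma}1_o\rangle\le m(o)k_\sigma(o)$, then Lemma~\ref{prop:charD_0}, and uniqueness from Proposition~\ref{thm:Royden_decomposition}.
\end{itemize}

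Two small fixes are needed.

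\textbf{Normalisation of the Green kernel.} The paper's convention is $\mathcal{L}G_y=1_y$ and $Gk(x)=\sum_y G(x,y)k(y)$. With it, the kernel is $G^\sigma(x,y)=\frac{1}{\Gamma(\sigma)}\int_0^\infty e^{-sL}1_y(x)\,s^{\sigma-1}\d s$, without your factor $1/m(y)$. With that factor, $\sum_y G_\sigma(x,y)k(y)$ would be $L^{-\sigma}(k/m)$ rather than $L^{-\sigma}k$. Correspondingly, $\mathcal{L}^\sigma k_\sigma=1_o$, not $1_o/m(o)$.

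\textbf{The ``only if'' direction.} The implication (i)$\Rightarrow$(iv) only yields \emph{some} $k'$, not necessarily $k_{\alpha-\sigma}$. Add the following: $k_\alpha>0$ solves $\mathcal{L}^\sigma k_\alpha=w_{\sigma,\alpha}k_\alpha$, so by Theorem~\ref{lem:characterisaton_criticality} it is a multiple of the ground state. Positive criticality then means exactly $\sum_X w_{\sigma,\alpha}k_\alpha^2\,m=\sum_X k_{\alpha-\sigma}k_\alpha\,m<\infty$. The paper's one-line proof also leaves this implicit.
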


We will use the latter result to show that under the assumption of suitable geometric conditions and heat kernel bounds, there is a critical parameter $\alpha_0 \in (\sigma,\alpha_{\max})$ such that $w_{\sigma,\alpha_0}$ is a null-critical Hardy weight. We briefly explain these assumptions and refer to Section~\ref{sec:asymp} for details. 

We suppose  that the graph $(b,c)$ over $(X,m)$ is endowed with an intrinsic metric $\varrho$ with jump size $1$, see \cite[Chapter~11]{KLW}. We denote the corresponing distance balls with respect to a fixed vertex $o \in X$ by $ B_r(o) =\{x\in X\mid |x|\leq r\}$, $ r\ge 0 $, where $|x|=\varrho(o,x)$. We say that the measured metric space $(X,m,\varrho)$
 is {\em  Ahlfohrs regular of dimension $d$} at $ o $ if for $r > 0$ large enough, we have 
\begin{align}\label{e:AR}\tag{A}
	m\big( B_r(o)\big) \, \asymp \, r^{d}
\end{align}
where the symbol $\asymp$ means two-sided estimates by positive constants.  Furthermore, we need to assume that the measure $ m $ is \emph{sub-exponentially bounded from below}, i.e., $\liminf_{|x|\to\infty}\frac{1}{|x|} \log m(x)\geq 0  $. Finally, we assume two-sided long time bounds for the heat kernel  with respect to $\varrho$, 
%an intrinsic metric $\varrho$ with jump size $1$, 
and parameters $d\ge 1,\beta\ge 2$,  i.e.\@ there are constants $c_1,C_1,c_2,C_2>0$ such that for all $x \in X$ with $|x| \geq 1$ and $t \geq |x|$,
\begin{align}\label{e:GB}\tag{HB}
\frac{C_1}{t^{\frac{d}{\beta}}}  \exp\left(-c_1\left(\frac{|x|^{\beta}}{t}\right)^{\frac{1}{\beta-1}} \right) \leq  p_t(x,o) \leq \frac{C_2}{t^{\frac{d}{\beta}}}    \exp\left(-c_2\left(\frac{|x|^\beta}{t}\right)^{\frac{1}{\beta-1}} \right).
\end{align}
For given $ \sigma\in (0,d/\beta) $, we define 
\[
\alpha_0 = \frac{d + \sigma\beta}{2\beta}. 
\]

Our next main result is the following. 

\begin{theorem} \label{thm:MAIN2}
	Assume that $(b,c)$ is a connected graph over $(X,m)$ with measure $ m $ sub-exponentially bounded from below and let $o \in X$, $d \ge 1$, $ \beta\ge 2 $, $\sigma \in (0,1]$. Assume \eqref{e:GB}, then $ \alpha_{\max}=d/\beta $ and the fractional Laplacian  is transient if and only if $ \sigma<d/\beta $. Moreover,  for every $\alpha \in (\sigma,d/\beta)$, the function
	\[
	w_{\sigma,\alpha} = \frac{k_{\alpha-\sigma}}{k_{\alpha}}
	\] 
	is a strictly positive Hardy weight with asymptotics $w_{\sigma,\alpha} \asymp |x|^{-\beta\sigma}$. If the graph is Ahlfors regular \eqref{e:AR}, then
	\begin{itemize}
		\item[(a)] $w_{\sigma,\alpha}$ is positive critical for $\sigma < \alpha < \alpha_0$, 
		\item [(b)] $w_{\sigma, \alpha}$ is null-critical for $\alpha= \alpha_0$. 
	\end{itemize} 
\end{theorem}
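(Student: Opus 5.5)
The plan is to first derive two-sided asymptotics for $k_\alpha$ from \eqref{e:GB}, and then read off every claim from them via Theorem~\ref{t:poscritical} and Theorem~\ref{thm:classification_positive-critical_weights_Hardy}.

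\textbf{Step 1: asymptotics of $k_\alpha$.} Write $k_\alpha(x)=|\Gamma(-\alpha)|^{-1}\int_0^\infty p_t(x,o)m(o)\,t^{-1-\alpha}\d t$, so that $e^{-tL}1_o(x)=p_t(x,o)m(o)$ up to the normalisation convention. At $x=o$ the short-time part is finite because $p_t\le 1/m(o)$. The long-time part behaves like $\int^\infty t^{-d/\beta-1-\alpha}\d t$ by \eqref{e:GB} (applied at a neighbour and transported to $o$ by a Harnack-type comparison, or directly if the bounds are available at $o$). It is therefore finite if and only if $\alpha<d/\beta$, which gives $\alpha_{\max}=d/\beta$. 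Transience of $L^\sigma$ is equivalent to finiteness of its Green's function, which is $k_\sigma$ via $L^{-\sigma}=\Gamma(\sigma)^{-1}\int_0^\infty e^{-tL}t^{\sigma-1}\d t$. This holds if and only if $\sigma<d/\beta$.

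For $|x|$ large, split the integral at $t=|x|$ and at $t=|x|^\beta$.
\begin{itemize}
\item On $[|x|^\beta,\infty)$ the Gaussian factors are bounded above and below, so the integral is $\asymp |x|^{-d-\beta\alpha}$.
\item On $[|x|,|x|^\beta]$ the upper bound in \eqref{e:GB} and the substitution $s=|x|^\beta/t$ give at most a constant times $|x|^{-d-\beta\alpha}$.
\item On $[0,|x|]$ we need $p_t(x,o)$ to be negligible. I would use a Davies-type estimate $p_t(x,o)\le C\,m(x)^{-1/2}m(o)^{-1/2}e^{-c|x|\log(|x|/t)}$ for intrinsic metrics with jump size $1$, so that $p_t\le m(x)^{-1/2}e^{-c|x|}$ there. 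Sub-exponential lower boundedness of $m$ then makes this part $o(|x|^{-d-\beta\alpha})$.
\end{itemize}
Hence $k_\alpha(x)\asymp |x|^{-d-\beta\alpha}$ for $0<\alpha<d/\beta$, and therefore $w_{\sigma,\alpha}\asymp|x|^{-d-\beta(\alpha-\sigma)}/|x|^{-d-\beta\alpha}=|x|^{-\beta\sigma}$.

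\textbf{Step 2: Hardy weight.} By the semigroup property of the Riesz kernels, $k_\alpha=L^{-\sigma}k_{\alpha-\sigma}$, i.e. $\mathcal L^\sigma k_\alpha=k_{\alpha-\sigma}$, with $k_\alpha$ strictly positive. The ground-state (Agmon–Allegretto) representation then shows that $w_{\sigma,\alpha}=\mathcal L^\sigma k_\alpha/k_\alpha$ is a Hardy weight for every $\alpha\in(\sigma,d/\beta)$. Equivalently, we may take $k=k_{\alpha-\sigma}$ in (iv) of Theorem~\ref{thm:classification_positive-critical_weights_Hardy} without the summability condition; that $k/Gk$ is a Hardy weight is the standard supersolution construction.

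\textbf{Step 3, part (a).} By Theorem~\ref{t:poscritical} it suffices to check $\sum_X k_{\alpha-\sigma}k_\alpha m<\infty$. The summand is $\asymp |x|^{-2d-\beta(2\alpha-\sigma)}m(x)$. Summing over the annuli $B_{2^{j+1}}\setminus B_{2^j}$ with \eqref{e:AR} gives $\sum_j 2^{j(d-2d-2\beta\alpha+\beta\sigma)}$. This converges if and only if $2\beta\alpha>\beta\sigma-d+2d$, i.e. $\alpha>\alpha_0$?

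The signs need checking. The exponent is $-d-2\beta\alpha+\beta\sigma$, which is negative if and only if $\alpha>(\beta\sigma-d)/(2\beta)$. That threshold is not $\alpha_0$, so the asymptotics $k_\alpha\asymp|x|^{-d-\beta\alpha}$ must be wrong. Recomputing on $[|x|^\beta,\infty)$: $\int_{|x|^\beta}^\infty t^{-d/\beta-1-\alpha}\d t\asymp|x|^{-d-\beta\alpha}$, and since $\Gamma(-\alpha)$ carries a negative power of $t$, the kernel decays. However, $k_\alpha$ should be the Green's function of $L^{\alpha}$, which decays like $|x|^{\beta\alpha-d}$. So the convention must be $t^{-1+\alpha}$; the paper's $1/t^{1+\alpha}$ with $\Gamma(-\alpha)$ I read as $L^{-\alpha}$ up to convention. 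Redoing the computation with $k_\alpha\asymp|x|^{\beta\alpha-d}$:
\begin{itemize}
\item $w\asymp|x|^{-\beta\sigma}$ still holds.
\item The summand is $\asymp|x|^{2\beta\alpha-\beta\sigma-2d}m(x)$, and the annulus sum has exponent $2\beta\alpha-\beta\sigma-d$.
\item This converges if and only if $\alpha<\alpha_0$, as claimed, and diverges at $\alpha=\alpha_0$.
\end{itemize}
Step 1 then has to be redone with the corresponding split: the large-$t$ region yields $|x|^{\beta\alpha-d}$, and the condition $\alpha<d/\beta$ is needed for convergence at infinity.

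\textbf{Step 4, part (b): main obstacle.} At $\alpha=\alpha_0$ the sum diverges, so Theorem~\ref{t:poscritical} rules out positive criticality, and it remains to prove criticality. Following the introduction, I would build a null-sequence from the positive-critical ground states $v_n=k_{\alpha_n}$ with $\alpha_n\uparrow\alpha_0$. Set $\varphi_n=k_{\alpha_n}/c_n$, normalised by $\varphi_n(o)=1$. Consider the energy functional
\[
h(\varphi)=\mathcal Q^\sigma(\varphi)-\sum_X w_{\sigma,\alpha_0}\varphi^2m=\sum_X (w_{\sigma,\alpha_n}-w_{\sigma,\alpha_0})\varphi_n^2m,
\]
where the identity uses that $\varphi_n$ is a ground state for $w_{\sigma,\alpha_n}$ and lies in the form domain by (iii) of Theorem~\ref{thm:classification_positive-critical_weights_Hardy}. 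The aim is to show this tends to $0$ while $\varphi_n\to k_{\alpha_0}$ pointwise. The difficulty is to control $w_{\sigma,\alpha_n}-w_{\sigma,\alpha_0}$ uniformly. I would first try to show $|w_{\sigma,\alpha_n}-w_{\sigma,\alpha_0}|\lesssim(\alpha_0-\alpha_n)\log(2+|x|)\,|x|^{-\beta\sigma}$, using that the Step 1 constants are uniform in $\alpha$ on compact subintervals. Then $h(\varphi_n)\lesssim(\alpha_0-\alpha_n)\sum|x|^{2\beta(\alpha_n-\alpha_0)-d}\log|x|\,m$, and by \eqref{e:AR} this is $\asymp(\alpha_0-\alpha_n)\cdot(\alpha_0-\alpha_n)^{-2}$, which diverges. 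So a sharper cancellation is needed, either by normalising with the divergent $\ell^2(w\,m)$-norm or by using a localised cutoff null-sequence. I expect this to be the genuinely hard step, with Ahlfors regularity giving the precise logarithmic divergence rates that make a suitable normalisation work.
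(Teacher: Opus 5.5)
Steps 1--3 follow the paper's route. Once you switch to $k_\alpha=\Gamma(\alpha)^{-1}\int_0^\infty e^{-tL}1_o\,t^{\alpha-1}\d t$ (the convention actually used in Section~\ref{sec:fractional}), you get the right $k_\alpha\asymp|x|^{\alpha\beta-d}$, $w_{\sigma,\alpha}\asymp|x|^{-\beta\sigma}$ and threshold $\alpha_0$. One slip in Step 1: a bound $e^{-c|x|\log(|x|/t)}$ gives no decay as $t\to|x|$, so it does not make $[0,|x|]$ negligible. You need the full Davies--Gaffney bound $\exp(-t\xi(|x|/t))$ with $\xi(r)=r\operatorname{arcsinh}r+1-\sqrt{1+r^2}$. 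This is $\le e^{-\xi(1)|x|}$ for all $t\le|x|$ because $\xi(u)/u$ is increasing.

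The genuine gap is part (b). You leave criticality at $\alpha_0$ unproved, and the obstruction you hit comes from two missing ingredients.

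\textbf{First ingredient: no logarithmic loss.} The estimate $|w_{\sigma,\alpha}(x)-w_{\sigma,\alpha_0}(x)|\le C|\alpha-\alpha_0|\,|x|^{-\sigma\beta}$ holds without the logarithm. Uniform two-sided bounds alone give no control of $\alpha$-differences at all. The paper uses that $\alpha\mapsto k_\alpha(x)$ extends analytically to a complex neighbourhood of compact subsets of $(0,d/\beta)$ (Theorem~\ref{theorem:RieszAsymptotics}). It then applies a Schwarz/Cauchy-type estimate to $\alpha\mapsto w_{\sigma,\alpha}(x)$. Heuristically, $\partial_\alpha\log k_\alpha(x)=\beta\log|x|+O(1)$, and the $\beta\log|x|$ cancels in the ratio $k_{\alpha-\sigma}/k_\alpha$. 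With this bound and \eqref{e:AR} you get $(\mathcal{Q}^\sigma-w_{\sigma,\alpha_0})(k_\alpha)=\sum_X(w_{\sigma,\alpha}-w_{\sigma,\alpha_0})k_\alpha^2m\le C|\alpha_0-\alpha|\sum_r r^{-1-2\beta(\alpha_0-\alpha)}\le C$, uniformly as $\alpha\nearrow\alpha_0$. The energy is bounded, not small.

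\textbf{Second ingredient: bounded energy is enough.} This is the key idea you are missing. Put $h=\mathcal{Q}^\sigma-w_{\sigma,\alpha_0}$.
\begin{enumerate}
\item Approximate each $k_{\alpha_n}\in\mathcal{D}_0^\sigma$ ($\alpha_n\nearrow\alpha_0$) in energy by finitely supported functions and take a diagonal sequence. This gives $f_n\in C_c(X)$ with $\sup_n h(f_n)<\infty$ and $f_n\to k_{\alpha_0}$ pointwise.
\item In the completion of $(C_c(X),h)$ a subsequence converges weakly to some $F$. By Banach--Saks, Ces\`aro means $g_N$ converge to $F$ in $h$-norm.
\item For $\varphi\in C_c(X)$ one has $h(F,\varphi)=\lim_n h(k_{\alpha_n},\varphi)=\lim_n\sum_X\varphi\,(k_{\alpha_n-\sigma}-w_{\sigma,\alpha_0}k_{\alpha_n})\,m=0$. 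This holds because the sum is finite and $k_{\alpha_0-\sigma}=w_{\sigma,\alpha_0}k_{\alpha_0}$.
\item Hence $F\perp C_c(X)$, so $F=0$. Thus $h(g_N)\to0$ while $g_N(o)\to k_{\alpha_0}(o)>0$.
\end{enumerate}
This is a null-sequence, so $w_{\sigma,\alpha_0}$ is critical by Theorem~\ref{lem:characterisaton_criticality}, and null-critical by (a). This is the paper's Banach--Saks/diagonal argument.

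Your suggested remedies do not lead there. Normalising by the divergent $\ell^2(w\,m)$-norm sends $\varphi_n(o)\to0$, which destroys the normalisation at $o$ required in Theorem~\ref{lem:characterisaton_criticality}(iii). With your log-lossy bound you only get an upper bound of order $(\alpha_0-\alpha)^{-1}$, so not even boundedness, and the compactness argument cannot start.
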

 
\begin{remark}
	Observe that in the case when the graph itself is transient, the theorem above gives a strictly positive optimal Hardy weight for the classical Laplacian, i.e., $ \sigma=1 $.
\end{remark}

The proofs of these results are based on Theorem~\ref{thm:classification_positive-critical_weights} together with a careful analysis of  bounds of the Riesz kernels $k_{\alpha}$ under the respective conditions on the heat kernel, see Theorem~\ref{theorem:RieszAsymptotics}. 
\bigskip

{\bf Organization of the article.}  
In Section~\ref{sec:potential} we study connections between the extended space, the space of potentials and the space of functions vanishing at infinity under suitable integrability conditions on the Laplacian of a function. These results are applied in Section~\ref{sec:positivecriticality} for the proof of the characterization of positive criticality stated in Theorem~\ref{thm:classification_positive-critical_weights_Hardy}. 
These considerations are essentially found in the dissertation thesis of the first named author, see \cite[Chapter~1 and 2]{hake2025optimal}, and are  extended here to the situation of arbitrary measures $m$ and Laplacians with a killing term $c$. 
Section~\ref{sec:fractional} is devoted to the discussion of the fractional Laplacian over a general graph. Using  the first main result, we prove Theorem~\ref{t:poscritical} there. Bounds on the Riesz kernels under the assumption of  Ahlfors regularity and suitable heat kernel bounds are carried out subsequently. We thereby prove Theorem~\ref{thm:MAIN2}. Finally, we discuss classes of graphs for which the assumptions of Theorem~\ref{thm:MAIN2} are satisfied.

	\section{Extended space,  potentials and functions vanishing at infinity} \label{sec:potential}
	
	In this section, we relate the space of   potentials, which is the range of the Green operator,  the extended space $\D_0$ and the functions vanishing at infinity $ C_0(X) $. We will throughout assume that all considered graphs are transient. To set the stage, we recall some well known facts for transient graphs and which we expand upon below.
	
	The first important observation is that for a transient graph, the Green's functions $G(\cdot,o)=G1_o$ are superharmonic potentials which are in the extended space $\D_0$. Beyond this there are two important decompositions. First there is the Royden decomposition which states that every function of finite energy can be decomposed into a part in the extended space and a harmonic part. Secondly, the Riesz decomposition of a positive superharmonic function is a decomposition into a potential and a harmonic part. This foreshadows the connection between the extended space  and potentials in regard to superharmonic functions. Furthermore, $ \mathcal{D}_0 $ and $ C_{0}(X) $ are closures of $ C_c(X) $ with respect to different norms, and the question arises how these spaces are related. We will see that functions in $ C_{0}(X) $ with integrable Laplacian are automatic potentials in the extended space.

\subsection{Preliminaries} \label{sec:prelim}
In this subsection, we give further definitions of central objects and clarify some terminology relevant for this paper. We also collect some known assertions that will be used repeatedly in the subsequent sections below. 

\medskip

Let $ X $ be a discrete countable set.
We write $C_c(X)$ and $C_0(X)$ for the subspaces of $ C(X) $ of functions with finite support and of functions vanishing at infinity, respectively. For $f,g \in C(X)$, we write $f \geq g$,  $ f>g $ (or equivalently $ g\le f $, $ g<f $) if the corresponding inequalities hold pointwise and $  f\gneq g $ or $ g\lneq f $, if $ f\ge g $ and $  f\neq g $. If $ f\ge0 $ (respectively $ f>0 $), then $ f $ is called \emph{positive} (respectively \emph{strictly positive}).
We denote by $f_\pm = \max(0,\pm f)$ the {\em positive and negative part} of  $f \in C(X)$. Given a subset $A \subseteq X$ we denote by $1_A \in C(X)$ the function taking the value $1$ when $x \in A$, and $0$ otherwise  and  $1_x = 1_{\{x\}} $ for $ x\in X $.

Given a function $m:X \to (0,\infty)$ and $1 \leq p  < \infty$, we define
\[
\ell^p(X,m) = \big\{ f \in C(X)\mid \|f\|_p < \infty \big\}
\]
with norm
\[
\| f \|_p = \|f\|_{\ell^p(X,m)} = \Big( \sum_{ X}m|f|^p  \Big)^{1/p}. 
\]
%For $m$ being a measure of full support, one obtains a normed space.

Given a transient connected graph $(b,c)$ over $(X,m)$ with formal Laplacian $ \mathcal{L} $ on $\mathcal{F}$, we call $ \mathcal{L}u $ the \emph{charge} of  a function $u \in \mathcal{F}$. Further, $u$ is called {\em superharmonic} if $\mathcal{L}u \geq 0$, {\em subharmonic} if $-u$ is superharmonic and harmonic if $ u $ is  both super- and subharmonic.  
If the graph is transient, a function $u \in \mathcal{F}$ is a {\em potential} if  $ \mathcal{L}u\in \mathcal{G} $ such that
\[
u= G \mathcal{L} u, 
\] 
where $ G $ is the Green's operator on $ \mathcal{G}=\{u\in \mathcal{F}\mid G|u|<\infty\} $. We denote the set of  potentials by $\mathcal{P}$. 

\begin{proposition}
	\label{prop:characterisation_potentials} We have $$  \mathcal{P}=G\mathcal{G} . $$
	In particular, $ u $ is a potential if and only if there is $ k\in \mathcal{G} $ such that $ u=Gk $. In this case, $ \mathcal{L}u=k $ and $ u=G\mathcal{L}u $.
\end{proposition}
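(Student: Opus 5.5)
The identity $\mathcal{P}=G\mathcal{G}$ comes down to one fact: for every $k\in\mathcal{G}$ we have $Gk\in\mathcal{F}$ and $\mathcal{L}Gk=k$. I will prove this first and then read off both inclusions.

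The inclusion $\mathcal{P}\subseteq G\mathcal{G}$ holds by definition. If $u\in\mathcal{P}$, then $\mathcal{L}u\in\mathcal{G}$ and $u=G\mathcal{L}u$, so $k=\mathcal{L}u$ works.

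For the converse, let $k\in\mathcal{G}$ and set $u=Gk$, which is well defined because $G|k|<\infty$. I will reduce to $k\ge 0$ by splitting $k=k_+-k_-$. Both parts lie in $\mathcal{G}$ since $|k_\pm|\le|k|$, and everything below is linear.

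Take $k\ge0$. The first task is to show $u\in\mathcal{F}$, i.e.\ $\sum_y b(x,y)Gk(y)<\infty$. By Tonelli,
\[
\sum_y b(x,y)Gk(y)=\sum_z k(z)\sum_y b(x,y)G(y,z).
\]
Fix $z$. The function $G_z=G(\cdot,z)$ lies in $\mathcal{F}$ and satisfies $\mathcal{L}G_z=1_z$. Evaluating this at $x$ gives
\[
\sum_y b(x,y)G(y,z)=(\deg(x)+c(x))G(x,z)-m(x)1_z(x)\le(\deg(x)+c(x))G(x,z).
\]
Summing against $k(z)$ yields
\[
\sum_y b(x,y)Gk(y)\le(\deg(x)+c(x))Gk(x)<\infty,
\]
so $u\in\mathcal{F}$. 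With absolute convergence in hand, Fubini lets me interchange sums in $\mathcal{L}u(x)$:
\[
\mathcal{L}u(x)=\sum_z k(z)\,\mathcal{L}G_z(x)=\sum_z k(z)1_z(x)=k(x).
\]
For signed $k$, apply this to $k_\pm$ and subtract. Hence $\mathcal{L}u=k\in\mathcal{G}$ and $u=Gk=G\mathcal{L}u$, so $u\in\mathcal{P}$. This argument also gives the remaining claims of the proposition: $\mathcal{L}u=k$ and $u=G\mathcal{L}u$.

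The main obstacle is justifying the interchange of summations. The domination estimate above, which relies only on the positivity of $G$ and the pointwise equation $\mathcal{L}G_z=1_z$, is exactly what makes the interchange legitimate.
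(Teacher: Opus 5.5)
Your proof is correct and is essentially the paper's argument: reduce to $k\ge 0$, write $Gk=\sum_z k(z)G_z$, and use $\mathcal{L}G_z=1_z$ to obtain $\sum_y b(x,y)Gk(y)=(\deg(x)+c(x))Gk(x)-m(x)k(x)$, which gives $Gk\in\mathcal{F}$ and $\mathcal{L}Gk=k$. The only difference is that you spell out the Tonelli/Fubini justification for interchanging the sums, which the paper compresses into ``a short calculation''.
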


\begin{proof}
	The inclusion $ \mathcal{P}\subseteq G\mathcal{G} $ follows by definition.
	 For the converse, 
	let $k \in \mathcal{G}$ and set $u=Gk$. By splitting $k$ into positive and negative part, we may assume $k \geq 0$, and so $u\geq 0$.
	A short calculation using the identities $ u=Gk=\sum_xk(x)G1_x $ and $\mathcal{L}G1_x = 1_x$ shows 
	\[
	\sum_{y \in X} b(x,y) u(y) = \deg(x)u(x) + c(x)u(x) -m(x)k(x) .
	\]
	Thus, $u \in \mathcal{F}$ and $\mathcal{L} u = k$. With $k \in \mathcal{G}$, we obtain $u=Gk = G \mathcal{L}u$ and all involved sums  converge absolutely. 
	\end{proof}

	We finish this section with a characterization of the space $\D_0=\overline {C_{c}(X)}^{\|\cdot\|_{o}}$. 
	
	\begin{lemma}\label{prop:charD_0}
		We have $u \in \D_0$ if and only if there is a sequence $(\varphi_n)$ in $C_c(X)$ converging pointwise to $u$ as $n \to \infty$ and satisfying
		\[
		\sup_{n} \mathcal{Q}(\varphi_n) < \infty.
		\]
	\end{lemma}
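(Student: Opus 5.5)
The plan is to prove the two implications separately, using only that $\|\cdot\|_o$ controls point evaluations and that $\mathcal{Q}$ is lower semicontinuous under pointwise convergence.

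\emph{Forward direction.} Let $u\in\D_0$ and choose $\varphi_n\in C_c(X)$ with $\|\varphi_n-u\|_o\to 0$. Then $\sup_n\mathcal{Q}(\varphi_n)<\infty$ is immediate from the triangle inequality for the seminorm $\mathcal{Q}^{1/2}$. For pointwise convergence it suffices to show that for every $x\in X$ there is $C_x$ with
\[
|f(x)|\le C_x\|f\|_o \qquad \text{for all } f\in\D .
\]
To prove this, use connectedness and pick a path $o=x_1,\dots,x_{n+1}=x$ with $b(x_i,x_{i+1})>0$. Each step satisfies
\[
|f(x_{i+1})-f(x_i)|\le b(x_i,x_{i+1})^{-1/2}\,\mathcal{Q}(f)^{1/2},
\]
since $\mathcal{Q}(f)\ge b(x_i,x_{i+1})(f(x_i)-f(x_{i+1}))^2$ (the factor $\tfrac12$ is compensated by the symmetric double sum). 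Telescoping along the path and adding $|f(o)|$ gives the bound. Applying it to $f=\varphi_n-u$ yields $\varphi_n\to u$ pointwise.

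\emph{Converse.} Suppose $\varphi_n\to u$ pointwise with $\sup_n\mathcal{Q}(\varphi_n)\le M$. By Fatou's lemma applied to the sums defining $\mathcal{Q}$, we get $\mathcal{Q}(u)\le\liminf_n\mathcal{Q}(\varphi_n)\le M$, so $u\in\D$. Now $(\D,\|\cdot\|_o)$ is a Hilbert space; completeness follows from the pointwise bound above together with Fatou again. Hence $(\varphi_n)$ is bounded in the closed subspace $\D_0$, and we pass to a subsequence converging weakly in $\D_0$ to some $v\in\D_0$. Point evaluations are continuous linear functionals by the pointwise bound, so weak convergence implies pointwise convergence and therefore $v=u$. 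This gives $u\in\D_0$.

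If one prefers to avoid weak compactness, an alternative is Banach--Saks: Cesàro means of a subsequence converge in norm, are still in $C_c(X)$, and converge pointwise to $u$, so $u\in\D_0$ directly.

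\emph{Main obstacle.} The key issue is identifying the weak limit with $u$, and this rests entirely on the continuity of point evaluations with respect to $\|\cdot\|_o$, which connectedness supplies.
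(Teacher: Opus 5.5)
Your proof is correct and follows essentially the paper's route. The forward direction rests on continuity of point evaluations; the converse takes a weakly convergent subsequence of the bounded sequence in the Hilbert space $\D_0$, identifies its limit with $u$ by pointwise convergence, and treats Banach--Saks as optional. You prove directly the facts the paper cites from the literature, and the one detail to state explicitly is that $\sup_n\|\varphi_n\|_o<\infty$ also uses $\varphi_n(o)\to u(o)$, not only $\sup_n\mathcal{Q}(\varphi_n)<\infty$.
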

	\begin{proof}In  \cite[Lemma~6.5]{KLW} it is shown that convergence in $ \mathcal{D}_0 $ is characterized by $ \ph_n\to u $ and $ \limsup_{n\to\infty}\mathcal{Q}(\ph_n) \leq \mathcal{Q}(u)$.
	Thus, the ``only if'' direction follows immediately. 
	The ``if'' direction follows by the Banach-Saks theorem  along the line of the proof of \cite[Lemma~6.12]{KLW} (iii) $ \Longrightarrow $ (i): Since $ \mathcal{D}_0 $ is a Hilbert space, \cite[Lemma~6.3~(d)]{KLW}, the bounded sequence $ (\ph_n) $ has a weakly convergent subsequence which has the limit $u $ due to the pointwise convergence. By the Banach-Saks theorem, this subsequence has a strongly convergent subsequence of Ces\`aro means with limit $u$ as well. Hence, we obtain that $u \in \mathcal{D}_0$.
	\end{proof}

	\subsection{Potentials and integrable charges} 
	
	The goal of this section is to study  some relations between the extended space $\D_0$ and the set of potentials $\mathcal{P}$ for a connected transient graph $ (b,c) $ over $ (X,m) $. The main result, Propostion~\ref{prop:equivalence_P_D_0} says that a function $u$ in $ \mathcal{P} $ or $ \D_{0} $ can be decomposed into the difference of two potentials whenever the (negative part of the)  charge $ \mathcal{L}u $ of $ u $ is in the space $ \mathcal{G}_2 $, which is defined next.
	
	\medskip
	
	We define
	\begin{align*} 	%\label{def:G_2}
		\mathcal{G}_2 = \{k \in \mathcal{G}\mid \sum_{ X}m|k| G|k| < \infty\}
	\end{align*}
	which is easily seen by the by elementary inequality $ (\al+\beta)^{2}\le2 \alpha^{2}+2\beta^{2} $, $ \al,\be\in\mathbb{R} $ to be a vector space. The next lemma gives some more structural insight.
	
	\begin{lemma} \label{lem:G_2}
		We have $ C_c(X) \subseteq \mathcal{G}_2$, 
		$$ G\mathcal{G}_2\subseteq \D_0 \cap \mathcal{P}$$ and for $ k\in \mathcal{G}_2$, we have $ k_{\pm}\in \mathcal{G}_{2} $ and
		\begin{align*}
			\mathcal{Q}(Gk) = \sum_{ X}m k Gk.
		\end{align*}
	\end{lemma}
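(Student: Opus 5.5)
The plan is to verify the four claims in order: $C_c(X)\subseteq\mathcal{G}_2$, then closure of $\mathcal{G}_2$ under positive and negative parts, then the energy identity for $k\ge0$, and finally membership of $Gk$ in $\D_0\cap\mathcal{P}$ for general $k\in\mathcal{G}_2$.

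For $C_c(X)\subseteq \mathcal{G}_2$, let $k\in C_c(X)$. Then $G|k|=\sum_y |k(y)|G1_y$ is a finite combination of finite Green's functions, so $k\in\mathcal{G}$. The sum $\sum_X m|k|G|k|$ is then a finite sum. The positive and negative parts satisfy $0\le k_\pm\le |k|$. Since $G$ is positivity preserving, $k_\pm G k_\pm\le |k|G|k|$ pointwise, so $k_\pm\in\mathcal{G}_2$. By Proposition~\ref{prop:characterisation_potentials}, $Gk\in G\mathcal{G}=\mathcal{P}$. It therefore suffices to prove the energy identity and $\D_0$-membership for $k\ge 0$, and then pass to $k=k_+-k_-$ using linearity of $G$ and the fact that $\D_0$ is a vector space.

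For $k\ge0$ I would approximate by finitely supported charges. Fix an exhaustion $K_n\uparrow X$ by finite sets and put $k_n=k1_{K_n}$ and $u_n=Gk_n=\sum_{y\in K_n}k(y)G1_y$. Each $u_n$ lies in $\D_0$, being a finite combination of the $G1_y\in\D_0$. The key identity is
\[
\mathcal{Q}(G1_x,G1_y)=\sum_X m\,1_x\,G1_y = m(x)G(x,y).
\]
To justify it, note that $G1_y\in\D_0$ solves $\mathcal{L}G1_y=1_y$, and the Green's formula $\mathcal{Q}(\varphi,u)=\sum_X m\varphi\,\mathcal{L}u$ holds for $\varphi\in C_c(X)$ and $u\in\mathcal{F}\cap\mathcal{D}$. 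Passing to $\varphi\in\D_0$ by approximation is the defining property of $G1_y$ as the $\D_0$-solution; this is the standard characterization $\mathcal{Q}(\varphi,G1_y)=\varphi(y)m(y)$ for $\varphi\in\D_0$ from \cite{KLW}. Bilinearity then gives
\[
\mathcal{Q}(u_n)=\sum_X m\,k_n\,Gk_n \le \sum_X m\,k\,Gk<\infty .
\]
Monotone convergence gives $u_n\uparrow Gk$ pointwise, since each value is finite because $k\in\mathcal{G}$.

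Lemma~\ref{prop:charD_0} needs an approximating sequence in $C_c(X)$, not in $\D_0$. I would handle this by a diagonal argument: pick $\varphi_n\in C_c(X)$ with $\|\varphi_n-u_n\|_o<1/n$. Then $\varphi_n\to Gk$ pointwise, because $\|\cdot\|_o$-convergence implies pointwise convergence, and $\sup_n\mathcal{Q}(\varphi_n)<\infty$. Hence $Gk\in\D_0$.

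For the energy identity, lower semicontinuity of $\mathcal{Q}$ under pointwise convergence (Fatou) gives $\mathcal{Q}(Gk)\le\liminf\mathcal{Q}(u_n)=\sum_X m\,k\,Gk$. For the reverse inequality, use $\mathcal{Q}(Gk)\ge \mathcal{Q}(Gk,u_n)^2/\mathcal{Q}(u_n)$ together with
\[
\mathcal{Q}(Gk,u_n)=\sum_{y\in K_n}k(y)\,\mathcal{Q}(Gk,G1_y)=\sum_X m\,k_n\,Gk ,
\]
again applying the $\D_0$ characterization of $G1_y$, now with $\varphi=Gk\in\D_0$. Letting $n\to\infty$, both $\sum_X m\,k_n\,Gk$ and $\mathcal{Q}(u_n)=\sum_X m\,k_n\,Gk_n$ increase to $S=\sum_X m\,k\,Gk$, which yields $\mathcal{Q}(Gk)\ge S$.

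For general $k\in\mathcal{G}_2$, expand $\mathcal{Q}(Gk_+-Gk_-)$ bilinearly. The cross term satisfies $\mathcal{Q}(Gk_+,Gk_-)=\sum_X m\,k_+\,Gk_-$ by the same limit argument, and this sum is finite by the symmetry $m(x)G(x,y)=m(y)G(y,x)$ and Cauchy--Schwarz for the positive definite form. Adding the terms gives $\sum_X m\,k\,Gk$.

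The main obstacle is justifying $\mathcal{Q}(\varphi,G1_y)=m(y)\varphi(y)$ for all $\varphi\in\D_0$, which is not an earlier result in this paper. I would import it from \cite{KLW} or \cite{KPP20}. Failing that, I would derive it from the Green's formula on $C_c(X)$ and continuity of $\mathcal{Q}(\cdot,G1_y)$ and of point evaluation in $\|\cdot\|_o$.
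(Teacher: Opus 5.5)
Your proof is correct and follows the paper's route: you approximate $k\ge0$ by finitely supported $k_n\nearrow k$, bound $\mathcal{Q}(Gk_n)\le\sum_X m\,k\,Gk$, and conclude $Gk\in\D_0$ via Lemma~\ref{prop:charD_0}; your diagonal step makes explicit the passage from $Gk_n\in\D_0$ to a $C_c(X)$-sequence, which the paper leaves tacit, and you use $k\in\mathcal{G}$ directly for finiteness instead of the paper's Harnack argument. The only substantive difference is that you derive the energy identity from the reproducing property $\mathcal{Q}(\varphi,G1_y)=m(y)\varphi(y)$ on $\D_0$ together with Fatou and Cauchy--Schwarz, rather than citing the Green's formula \cite[Lemma~6.8]{KLW}; for your cross term, state explicitly that these identities give $\mathcal{Q}(Gk_- - u_n^-)=\sum_X m\,k_-\,Gk_- - 2\sum_X m\,k_-1_{K_n}Gk_- + \sum_X m\,k_-1_{K_n}G(k_-1_{K_n})\to 0$ with $u_n^-=G(k_-1_{K_n})$, since this convergence is what justifies $\mathcal{Q}(Gk_+,u_n^-)\to\mathcal{Q}(Gk_+,Gk_-)$.
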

	
	\begin{proof}
			We prove the formula for the energy first. Assume that $k \in  \mathcal{G}_2$ is such that $Gk\in \D_0$. Then by  \cite[Lemma~6.8]{KLW} 
			(which is applicable after splitting $ k $ into positive and negative part since $ Gk_{\pm}\in \mathcal{D}_{0} $ by \cite[Lemma~6.7]{KLW} and $ \mathcal{L}Gk_{\pm} =k_{\pm}\ge 0 $)
		\begin{align*}
			\mathcal{Q}(Gk) = \sum_{ X}m (\mathcal{L} Gk) Gk = \sum_{X}m k Gk,
		\end{align*}
where $ \mathcal{L}Gk=k $ follows by Proposition~\ref{prop:characterisation_potentials}. 
 Since $ G(\cdot,y) \in \mathcal{D}_0$ by \cite[Lemma 6.27]{KLW}, it follows $ Gk \in \mathcal{D}_0$ for every $ k\in C_c(X)$ and, therefore, the above formula gives $ C_c(X) \subseteq \mathcal{G}_2$.

Let $ k\in \mathcal{G}_2  $. We will  show that  $ Gk\in \D_0 $. To this end, we first show that $ k\in \mathcal{G}_2 $ implies $ Gk \in \mathcal{G}$. Note  that $\mathcal{G}_2$ is closed under taking the positive part or the negative part of a function, since 
			\begin{align*} %\label{eqn:posnegativepart}
				\abs{k(x)k(y)} %= (k_+(x)+k_-(x))(k_+(y)+k_-(y)) %\nonumber \\				&
				\geq k_+(x)k_+(y) + k_-(x)k_-(y).
			\end{align*}
		Hence, we can assume $ k\ge0 $.  If $k=0$, there is nothing to  show. So assume that there is some $o \in X$ with $k(o) > 0$.
		Let $(k_n)$ be a  sequence in $C_c(X)$ such that $0\le k_n\nearrow k$ pointwise montonously as $n \to \infty$. Then $(Gk_n)$ is also pointwise monotonically increasing. 		
		
		We claim that for every $z \in X$ the limit $(Gk_n(z))$ exists and is finite. Indeed, for $z \in X$, the local Harnack inequality \cite[Theorem~4.1]{KLW} yields a constant $C_{z,o} > 0$ such that $G(z,y) \leq C_{z,o} G(o,y)$ for every $y \in X$. Thus, for every $n \in \N$,
				\begin{multline*}
				m(o)C_{z,o}^{-1} k_n(o) Gk_n(z) = m(o) C_{z,o}^{-1} k_n(o) \sum_{y \in X} G(z,y) k_n(y) \\
				\leq \sum_{y \in X} m(o)  G(o,y)k_n(o) k_n(y) 
				\leq \sum_{x, y \in X}m(y) G(x,y) k(x) k(y) < \infty.
			\end{multline*}
			It follows that the pointwise limit of $(Gk_n)$ exists since $\lim_{n \to \infty} k_n(o) = k(o) > 0$.  With the monotone convergence theorem we can conclude  $ Gk_{n}\nearrow Gk $ pointwise monotonously. Since $G k_n\in \mathcal{D}_{0} $ as discussed above, the formula in the beginning of the proof yields $\sup_n \mathcal{Q}(Gk_n) \le \sum_{X}m k Gk<\infty $ as we assumed $k\in \mathcal{G}_2  $. Hence, by Lemma~\ref{prop:charD_0}, we obtain $ Gk \in \mathcal{D}_0$ and the formula formula in the beginning holds for $ k\in \mathcal{G}_2 $. Finally, $ Gk\in \mathcal{P} $ holds by definition and we have finished the proof.
	\end{proof}
	
		The characterization of positive criticality in Theorem~\ref{thm:classification_positive-critical_weights} relies on the following technical proposition.
	
	\begin{proposition}
		\label{prop:equivalence_P_D_0}
		Let $u \in \F$. Then the following assertions are equivalent.
		\begin{enumerate}[(i)]
			\item $u \in \mathcal{P}$ and $\mathcal{L} u \in \mathcal{G}_2$.
			\item $u \in \mathcal{P} \cap \D$ and $(\mathcal{L} u)_- \in \mathcal{G}_2$.
			\item $u \in \D_0$ and $(\mathcal{L} u)_- \in \mathcal{G}_2$.
			\item There are superharmonic functions $u^+,u^- \in \D_0$ such that $u = u^{+} - u^{-}$. Specifically, $ u^{\pm} $ can be chosen as the positive potentials $ G(\mathcal{L} u)_{\pm} $.
		\end{enumerate}
				In particular, if $u$ is superharmonic then $u \in \D_0$ if and only if $u \in \mathcal{P} \cap \D$. Moreover, if $u$ satisfies these equivalent conditions then it also holds that
		\begin{align*}
			\mathcal{Q}(u) = \sum_{ X}m(\mathcal{L} u)( G\mathcal{L} u).
		\end{align*}
	\end{proposition}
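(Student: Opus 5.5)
I will prove the cycle (i)$\Rightarrow$(iv)$\Rightarrow$(iii)$\Rightarrow$(ii)$\Rightarrow$(i). The tools are Lemma~\ref{lem:G_2}, Proposition~\ref{prop:characterisation_potentials} and the standard facts from \cite{KLW}. These facts are the Royden decomposition of $\D$ into $\D_0$ plus harmonic functions of finite energy, the Green formula $\mathcal{Q}(f,g)=\sum_X m(\mathcal{L}f)g$ for suitable $f\in\D_0$, and the fact that a positive superharmonic function in $\D_0$ has no nonzero harmonic part.

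\emph{(i)$\Rightarrow$(iv).} Let $k=\mathcal{L}u\in\mathcal{G}_2$. By Lemma~\ref{lem:G_2}, $k_\pm\in\mathcal{G}_2$ and $u^\pm:=Gk_\pm\in\D_0\cap\mathcal{P}$. By Proposition~\ref{prop:characterisation_potentials}, $\mathcal{L}u^\pm=k_\pm\ge0$, and $u=Gk=u^+-u^-$.

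\emph{(iv)$\Rightarrow$(iii).} Clearly $u\in\D_0$. For the charge, write $\mathcal{L}u=\mathcal{L}u^+-\mathcal{L}u^-$, so $(\mathcal{L}u)_-\le \mathcal{L}u^-=:h\ge0$. I need $h\in\mathcal{G}_2$. Since $u^-\in\D_0$ is positive superharmonic, the Riesz decomposition gives $u^-=Gh+H$ with $H\ge0$ harmonic, so $Gh\le u^-<\infty$. Then \cite[Lemma~6.8]{KLW} yields
\[
\sum_X m\,h\,u^-=\mathcal{Q}(u^-)<\infty ,
\]
and since $Gh\le u^-$ this gives $h\in\mathcal{G}_2$. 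Monotonicity then passes this to $(\mathcal{L}u)_-$, because $\mathcal{G}_2$ is defined through positive quantities and $0\le(\mathcal{L}u)_-\le h$.

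\emph{(iii)$\Rightarrow$(ii).} This is the main step. Put $k_-=(\mathcal{L}u)_-\in\mathcal{G}_2$, so $v:=u+Gk_-\in\D_0$ by Lemma~\ref{lem:G_2}, and $v$ is superharmonic with $\mathcal{L}v=(\mathcal{L}u)_+$. For a superharmonic $v\in\D_0$ I must show $v\ge0$ and $v=G\mathcal{L}v$.
\begin{itemize}
\item Positivity: by a minimum principle for $\D_0$, $v_-\in\D_0$ and $\mathcal{Q}(v,v_-)\ge0$ via testing with $C_c$ approximations. Since also $\mathcal{Q}(v,v_-)\le-\mathcal{Q}(v_-)$, this forces $v_-=0$.
\item Potential: by Riesz, $v=G\mathcal{L}v+H$ with $H\ge0$ harmonic. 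Since $G\mathcal{L}v\le v$, taking $\sup_n$ over $G$ of truncated charges gives a bounded energy, so $G\mathcal{L}v\in\D_0$ by Lemma~\ref{prop:charD_0}. Hence $H=v-G\mathcal{L}v\in\D_0$ is harmonic, and harmonic functions in $\D_0$ vanish because $\mathcal{Q}(H)=\lim\mathcal{Q}(H,\varphi_n)=0$ and transience gives $H=0$.
\end{itemize}
Thus $u=G(\mathcal{L}u)_+-Gk_-\in\mathcal{P}$, and $u\in\D_0\subseteq\D$. I expect this step to be the hardest, in particular justifying the energy bound on $G\mathcal{L}v$ and the Green formula for $v$ without knowing $\mathcal{L}v\in\mathcal{G}_2$ a priori. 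I would try \cite[Lemmas~6.7,~6.8]{KLW} first, which handle positive superharmonic functions in $\D_0$.

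\emph{(ii)$\Rightarrow$(i).} Let $u=G\mathcal{L}u\in\D$ with $(\mathcal{L}u)_-\in\mathcal{G}_2$. Then $w:=u+G(\mathcal{L}u)_-=G(\mathcal{L}u)_+\in\D$, since $G(\mathcal{L}u)_-\in\D_0$, and $w$ is a positive potential. Approximating $(\mathcal{L}u)_+$ by $k_n\nearrow$ in $C_c$, the Green formula gives
\[
\mathcal{Q}(Gk_n)=\sum_X m\,k_nGk_n\le\sum_X m\,k_n w .
\]
Also $\sum_X m\,k_n w=\mathcal{Q}(Gk_n,w)\le\mathcal{Q}(Gk_n)^{1/2}\mathcal{Q}(w)^{1/2}$. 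Hence $\mathcal{Q}(Gk_n)$ is bounded, and by monotone convergence $\sum_X m(\mathcal{L}u)_+\,w<\infty$, i.e.\ $(\mathcal{L}u)_+\in\mathcal{G}_2$. Consequently $\mathcal{L}u\in\mathcal{G}_2$.

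The final statements follow from these steps. The energy identity is Lemma~\ref{lem:G_2} applied to $k=\mathcal{L}u$, since $u=Gk$. For superharmonic $u$ we have $(\mathcal{L}u)_-=0\in\mathcal{G}_2$, and (ii)$\Leftrightarrow$(iii) gives $u\in\D_0$ if and only if $u\in\mathcal{P}\cap\D$.
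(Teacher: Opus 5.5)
Your proof is correct, but it is organized differently from the paper's.

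\textbf{The paper's route.} All the work goes through Lemma~\ref{lem:superharmonic_and_D_0_is_potential}. For superharmonic $u\in\D_0$, Green's formula applied to $G_x\in\D_0$ and to $u$ gives directly $u(x)m(x)=\mathcal{Q}(G_x,u)=m(x)\,G\mathcal{L}u(x)$. This shows at once that $u$ is a potential, hence positive, with $\mathcal{L}u\in\mathcal{G}_2$. With this lemma, (iv)$\Rightarrow$(iii) and (iii)$\Rightarrow$(ii) are one-liners. For (ii)$\Rightarrow$(i), the paper takes the Royden decomposition $u^+=u^+_0+u^+_h$ of $u^+=u+G(\mathcal{L}u)_-$ and applies the lemma to $u^+_0$.

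\textbf{Your route.} You re-derive the content of that lemma from several pieces: a minimum principle to get $v\ge 0$, the Riesz decomposition, an energy bound on $G\mathcal{L}v$, and the vanishing of harmonic functions in $\D_0$. For (ii)$\Rightarrow$(i) you replace Royden by the estimate $\sum_X m\,k_n w=\mathcal{Q}(Gk_n,w)\le\mathcal{Q}(Gk_n)^{1/2}\mathcal{Q}(w)^{1/2}$ combined with $\mathcal{Q}(Gk_n)\le\sum_X m\,k_n w$. This yields the quantitative bound $\sum_X m(\mathcal{L}u)_+G(\mathcal{L}u)_+\le\mathcal{Q}(G(\mathcal{L}u)_+)$.

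\textbf{Comparison.} The paper's route is shorter and gets positivity and the potential property from a single identity. Yours avoids the Royden decomposition and is more elementary in (ii)$\Rightarrow$(i).

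\textbf{The step you flagged.} The same Cauchy--Schwarz trick settles the energy bound you were worried about in (iii)$\Rightarrow$(ii). Take $0\le k_n\nearrow\mathcal{L}v$ with $k_n\in C_c(X)$; then $Gk_n\le G\mathcal{L}v\le v$. Hence $\mathcal{Q}(Gk_n)\le\sum_X m\,k_n v=\mathcal{Q}(Gk_n,v)\le\mathcal{Q}(Gk_n)^{1/2}\mathcal{Q}(v)^{1/2}$, so $\mathcal{Q}(Gk_n)\le\mathcal{Q}(v)$. This uses only Green's formula for $G_y$ against $v\in\D$, so you need no a priori knowledge that $\mathcal{L}v\in\mathcal{G}_2$.

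\textbf{An ordering issue.} In (iv)$\Rightarrow$(iii) you apply the Riesz decomposition to $u^-$ with minorant $0$, which uses $u^-\ge 0$. However, (iv) only provides superharmonic $u^\pm\in\D_0$, not positive ones. Your minimum-principle argument from (iii)$\Rightarrow$(ii), showing that superharmonic functions in $\D_0$ are nonnegative, is therefore needed already at this point. State it as a preliminary lemma before the cycle of implications.
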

	
		To prove this proposition we will  make  use of the following lemma.
	
	\begin{lemma}
		\label{lem:superharmonic_and_D_0_is_potential}
		If $u \in \D_0$ is superharmonic, then $u \in \mathcal{P} \cap \D$,  $\mathcal{L} u \in \mathcal{G}_2$ and 
		\begin{align*}
			\mathcal{Q}(u) = \sum_{  X}m(\mathcal{L} u)( G\mathcal{L} u).
		\end{align*}
	\end{lemma}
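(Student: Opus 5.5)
Let $u\in\D_0$ be superharmonic and put $k=\mathcal{L}u\ge 0$. The strategy is to show first that $u\ge Gk$ and $k\in\mathcal{G}$. Then we show that the harmonic remainder $h=u-Gk$ vanishes, using that $h\in\D_0$ is harmonic. Finally we get $k\in\mathcal{G}_2$ and the energy formula from Lemma~\ref{lem:G_2}.

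\textbf{Step 1: $u\ge 0$ and $u\ge Gk$.} For positivity, we use the standard fact for transient graphs that a superharmonic function in $\D_0$ is nonnegative. This is a minimum principle: test $\mathcal{Q}(u,u_-)\ge 0$, i.e.\ use $\sum m(\mathcal{L}u)\varphi=\mathcal{Q}(u,\varphi)$ for $\varphi\ge0$ in $\D_0$, which is \cite[Lemma~6.8]{KLW}-type. This gives $\mathcal{Q}(u_-)\le 0$, hence $u_-=0$ since $\D_0$ contains no nonzero constants. Alternatively, cite the corresponding statement in \cite{KLW}. Next, take finite sets $K_n\nearrow X$ and $k_n=k1_{K_n}\in C_c(X)$. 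Then $Gk_n\in\D_0$ by Lemma~\ref{lem:G_2}, and $u-Gk_n\in\D_0$ is superharmonic, because $\mathcal{L}(u-Gk_n)=k-k_n\ge0$. Hence $u-Gk_n\ge0$. By monotone convergence $Gk=\lim Gk_n\le u<\infty$, so $k\in\mathcal{G}$, and $Gk$ is a potential by Proposition~\ref{prop:characterisation_potentials}.

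\textbf{Step 2: energy bound and $k\in\mathcal{G}_2$.} By Lemma~\ref{lem:G_2},
\[\mathcal{Q}(Gk_n)=\sum_X m k_nGk_n=\mathcal{Q}(u,Gk_n)\le \mathcal{Q}(u)^{1/2}\mathcal{Q}(Gk_n)^{1/2}.\]
Here the middle identity is the Green formula for $u\in\D_0$ paired with the finitely supported charge. Concretely, $\sum_X m\,\mathcal{L}u\,Gk_n=\mathcal{Q}(u,Gk_n)$, which follows from $\sum_X m\,u\,k_n=\mathcal{Q}(u,Gk_n)$ together with symmetry of $G$ with respect to $m$. To verify that identity, approximate $u$ by $C_c$ functions, for which it is immediate since $\mathcal{L}Gk_n=k_n$. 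Hence $\sum_X m k_nGk_n\le\mathcal{Q}(u)$. Monotone convergence, using $k_n\le k$ and $Gk_n\le Gk$ increasing, gives $\sum_X m kGk\le\mathcal{Q}(u)<\infty$. So $k\in\mathcal{G}_2$, and Lemma~\ref{lem:G_2} gives $Gk\in\D_0$ with $\mathcal{Q}(Gk)=\sum_X m kGk$.

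\textbf{Step 3: $u=Gk$.} The function $h=u-Gk\in\D_0$ is harmonic and $h\ge0$. A harmonic function in $\D_0$ vanishes: for $\varphi\in C_c(X)$ we have $\mathcal{Q}(h,\varphi)=\sum m\varphi\mathcal{L}h=0$, and by density of $C_c(X)$ in $\D_0$ we get $\mathcal{Q}(h)=0$. Thus $h$ is constant, and since transience excludes nonzero constants from $\D_0$, $h=0$. The Green formula here is valid because $\varphi$ has finite support. Therefore $u=G\mathcal{L}u\in\mathcal{P}$, $u\in\D$ trivially, and $\mathcal{Q}(u)=\mathcal{Q}(Gk)=\sum_X m(\mathcal{L}u)(G\mathcal{L}u)$.

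The main obstacle is justifying the Green-type identities $\mathcal{Q}(u,\varphi)=\sum m\varphi\mathcal{L}u$ for $\varphi$ in $\D_0$, rather than in $C_c(X)$, together with the minimum principle in Step 1. I would rely on the results of \cite[Section~6]{KLW} (Lemmas~6.7, 6.8) for these, as Lemma~\ref{lem:G_2} already does.
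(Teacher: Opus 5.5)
Your route differs from the paper's, and Step~2 contains a wrong step as written, though it is easily repaired.

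\textbf{The error in Step~2.} Your chain asserts $\sum_X m\,k_nGk_n=\mathcal{Q}(u,Gk_n)$. The identity you actually verify by approximation is $\mathcal{Q}(u,Gk_n)=\sum_X m\,u\,k_n$. Since $u\ge Gk\ge Gk_n$, this is in general strictly larger than $\sum_X m\,k_nGk_n$; it is strictly larger whenever $k$ is not supported in $K_n$.

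Your derivation of $\sum_X m\,\mathcal{L}u\,Gk_n=\mathcal{Q}(u,Gk_n)$ from that identity is also circular. Symmetry of $G$ only turns $\sum_X m\,\mathcal{L}u\,Gk_n$ into $\sum_X m\,k_n\,G\mathcal{L}u$. Equating this with $\sum_X m\,u\,k_n$ requires $u=G\mathcal{L}u$ on $K_n$, which is the statement of the lemma and is only obtained in Step~3.

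Only an inequality is needed, and your Step~1 supplies it:
\[
\mathcal{Q}(Gk_n)=\sum_X m\,k_nGk_n\le\sum_X m\,k_n\,u=\mathcal{Q}(u,Gk_n)\le\mathcal{Q}(u)^{1/2}\mathcal{Q}(Gk_n)^{1/2}.
\]
After this, Steps~2 and~3 go through as you wrote them.

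\textbf{Comparison with the paper.} With this correction your proof is valid, but the paper's proof is much shorter. It applies the Green formula on the extended space \cite[Lemma~6.8]{KLW} twice with $G_x=G(\cdot,x)\in\D_0$:
\[
m(x)u(x)=\sum_X m\,(\mathcal{L}G_x)\,u=\mathcal{Q}(G_x,u)=\sum_X m\,G_x\,\mathcal{L}u=m(x)\,G\mathcal{L}u(x),
\]
using symmetry of $G$ in the last step. So $u=G\mathcal{L}u$ at once. One more application gives $\mathcal{Q}(u)=\sum_X m(\mathcal{L}u)u=\sum_X m(\mathcal{L}u)(G\mathcal{L}u)$, and $\mathcal{L}u\in\mathcal{G}_2$ follows from $\mathcal{Q}(u)<\infty$.

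The Green formula you already invoke in Step~1, for nonnegative $\varphi\in\D_0$, gives exactly this computation when applied to $\varphi=G_x$. So the tools you assume in Step~1 prove the lemma in two lines, and the truncation in Steps~2 and~3 becomes unnecessary.

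Your route instead combines a comparison principle, monotone truncation of the charge, and the vanishing of harmonic functions in $\D_0$ (uniqueness in the Royden decomposition). This makes the Riesz-decomposition structure visible and obtains $\sum_X m\,kGk\le\mathcal{Q}(u)$ by a limiting argument. However, it is longer and does not avoid the extended-space Green formula, which enters both in Step~1 and through Lemma~\ref{lem:G_2}.
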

	
		\begin{proof}
			Let $u \in \D_0$ be superharmonic. It is clear that $u \in \D$ since $\D_0 \subseteq \D$. 
		Combining $\mathcal{L} G_x =  1_x $, $ x\in X $, with 	 
		Green's formula, \cite[Lemma~6.8]{KLW}, for $G_x = G(\cdot,x) \in \D_0$, \cite[Lemma~6.27]{KLW} and the superharmonic $ u\in\mathcal{D} $ applied twice and the 	symmetry of the Green's function, \cite[Theorem~6.26]{KLW},    yields
		\begin{align*}
		u(x)m(x)=\sum_{ X}m \mathcal{L} G_x u  =	\mathcal{Q}(G_x, u)=\sum_{X}mG_x\mathcal{L}u = m(x) G\mathcal{L}	u(x).
		\end{align*}
		Hence, $ u=G\mathcal{L}u $ is a potential. Another application of Green's formula \cite[Lemma~6.8]{KLW} applied for the superharmonic $ u\in \mathcal{D}_{0} $ yields
		\begin{align*}
			\mathcal{Q}(u) &= \sum_{ X} m (\mathcal{L} u)u = \sum_{ X}m  (\mathcal{L} u) (G\mathcal{L}u).
		\end{align*}
		Since $u$ has finite energy this implies that $\mathcal{L} u \in \mathcal{G}_2$ and the lemma is proven. 
	\end{proof}
	
	For the proof of Proposition~\ref{prop:equivalence_P_D_0}, we will also use the Royden decomposition from \cite{KLSW17}.

	\begin{proposition}[Royden decomposition, Proposition 5.1 in \cite{KLSW17}]
		\label{thm:Royden_decomposition}
		For any $u \in \D$ there exists a unique decomposition $u = u_0 + u_h$ with $u_0 \in \D_0$, $u_h \in \D$ and $u_h$ harmonic. If $u \geq 0$ then $u_h \geq 0$.
	\end{proposition}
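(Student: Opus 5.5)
The statement is quoted from \cite{KLSW17}, but I would reprove it with a Hilbert space projection argument in the energy form. I will work under the standing assumption of this section that the graph is transient.

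\textbf{Step 1: $\mathcal{Q}$ is a genuine Hilbert norm on $\D_0$.} Transience gives a nontrivial Hardy weight $w$. By connectedness one may take $w(o)>0$, for instance $w=k/Gk$ with $k=1_o$. Extending the Hardy inequality to $\D_0$ gives $\mathcal{Q}(\varphi)\ge w(o)m(o)\varphi(o)^2$ for all $\varphi\in\D_0$. Hence $\mathcal{Q}^{1/2}$ is equivalent to $\|\cdot\|_o$ on $\D_0$, and $(\D_0,\mathcal{Q})$ is a Hilbert space. In particular the only $h\in\D_0$ with $\mathcal{Q}(h)=0$ is $h=0$.

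\textbf{Step 2: existence.} Given $u\in\D$, I minimise $\varphi\mapsto\mathcal{Q}(u-\varphi)$ over $\varphi\in\D_0$.
\begin{itemize}
\item The functional is convex and continuous on the Hilbert space $(\D_0,\mathcal{Q})$.
\item The parallelogram law makes minimising sequences Cauchy, so a minimiser $u_0\in\D_0$ exists.
\item Set $u_h=u-u_0\in\D$. The variational equation is $\mathcal{Q}(u_h,\varphi)=0$ for all $\varphi\in C_c(X)$.
\item For $f\in\D\subseteq\F$, a direct computation (Green's formula for finitely supported test functions) gives $\mathcal{Q}(f,1_x)=m(x)\mathcal{L}f(x)$.
\item Testing with $\varphi=1_x$ therefore yields $\mathcal{L}u_h=0$, so $u_h$ is harmonic.
\end{itemize}

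\textbf{Step 3: uniqueness.} Suppose $u=u_0+u_h=u_0'+u_h'$ are two such decompositions. Then $h=u_0-u_0'\in\D_0$ is harmonic.
\begin{itemize}
\item Choose $\varphi_n\in C_c(X)$ with $\varphi_n\to h$ in $\D_0$.
\item Each term satisfies $\mathcal{Q}(h,\varphi_n)=\sum_X m\,(\mathcal{L}h)\,\varphi_n=0$.
\item Letting $n\to\infty$ gives $\mathcal{Q}(h)=0$, so $h=0$ by Step 1.
\end{itemize}
The same computation shows that $u_h$ is exactly the unique $\mathcal{Q}$-minimiser over the affine space $u+\D_0$.

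\textbf{Step 4: positivity.} This is the main obstacle. Let $u\ge0$. Since $u_h=u-u_0$, we have $(u_h)_-=(u_0-u)_+\le(u_0)_+$.
\begin{itemize}
\item I need $(u_h)_-\in\D_0$. It has finite energy, because normal contractions do not increase $\mathcal{Q}$. It is dominated by $(u_0)_+\in\D_0$, which lies in $\D_0$ by the contraction property of $\D_0$.
\item So I need the ideal-type property: a finite-energy function $f$ with $0\le f\le g\in\D_0$ belongs to $\D_0$. I would try to prove it via Lemma~\ref{prop:charD_0}. Approximate $g$ by $\varphi_n\in C_c(X)$ and consider $f\wedge(\varphi_n)_+\in C_c(X)$; these converge pointwise to $f$.
\item The difficulty is the uniform energy bound. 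I would bound $\mathcal{Q}\big(f\wedge(\varphi_n)_+\big)$ by a constant times $\mathcal{Q}(f)+\mathcal{Q}(\varphi_n)$. For a minimum of two functions this is elementary edgewise: one has $|a\wedge b-a'\wedge b'|\le|a-a'|+|b-b'|$, and the killing term is also controlled.
\end{itemize}

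Granting this, $(u_h)_+=u_h+(u_h)_-$ lies in $u+\D_0$ and satisfies $\mathcal{Q}((u_h)_+)\le\mathcal{Q}(u_h)$. Uniqueness of the minimiser from Step 3 then forces $(u_h)_+=u_h$, that is, $u_h\ge0$.
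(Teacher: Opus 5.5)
Your proof is correct. It necessarily takes a different route from the paper, because the paper gives no argument: it imports the statement from Proposition~5.1 of \cite{KLSW17} without proof.

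You instead prove it from scratch by an energy projection:
\begin{enumerate}
\item Transience makes $\mathcal{Q}^{1/2}$ an equivalent Hilbert norm on $\D_0$.
\item The minimiser $u_0$ of $\mathcal{Q}(u-\cdot)$ over $\D_0$ gives $0=\mathcal{Q}(u_h,1_x)=m(x)\mathcal{L}u_h(x)$, so $u_h$ is harmonic.
\item Uniqueness follows from $\mathcal{Q}(h)=\lim_n\mathcal{Q}(h,\varphi_n)=0$ for a harmonic $h\in\D_0$.
\item Positivity comes from characterising $u_h$ as the unique $\mathcal{Q}$-minimiser in $u+\D_0$. You combine this with the Markov property and the ideal property of $\D_0$. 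You derive the ideal property correctly from Lemma~\ref{prop:charD_0} using the truncations $f\wedge(\varphi_n)_+$ and the bound $\mathcal{Q}(f\wedge(\varphi_n)_+)\le 2\mathcal{Q}(f)+2\mathcal{Q}(\varphi_n)$. The killing term is harmless because $0\le f\wedge(\varphi_n)_+\le f$.
\end{enumerate}

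What your route buys:
\begin{itemize}
\item It is self-contained within Section~\ref{sec:potential}, relying only on Lemma~\ref{prop:charD_0} and Green's formula for finitely supported test functions.
\item It makes evident that the result holds for arbitrary $m$ and any killing term $c$. None of $\D$, $\D_0$, $\mathcal{Q}$, harmonicity or transience depends on $m$, and $c$ enters only through the form. This is precisely the extension the paper needs but leaves to the citation.
\item It yields extra structure for free: the orthogonality $\mathcal{Q}(u)=\mathcal{Q}(u_0)+\mathcal{Q}(u_h)$ and the variational characterisation of $u_h$.
\end{itemize}
The citation buys only brevity.

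One minor point concerns Step~1. Justify that $w=1_o/G(o,o)$ is a Hardy weight directly via the ground state transform (Proposition~\ref{lem:GST}) applied to $v=G_o$, which satisfies $v>0$ and $\mathcal{L}v=1_o$. Do not route it through the later characterisation theorem. That way no result proved after this proposition enters your argument.
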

	
	%\begin{proof}		See Proposition 5.1 in \cite{KLSW17} in combination with the observation that an arbitrary measure $m$ of full support does note change the transience properties involved in the statement.	\end{proof}

	We are now ready to give the proof of the Proposition~\ref{prop:equivalence_P_D_0}.
	
	\begin{proof}[Proof of Proposition~\ref{prop:equivalence_P_D_0}]
		Observe first that if we have established that $(ii) \Longleftrightarrow (iii)$, then we can conclude that a superharmonic $u$ is in $\D_0$ if and only if it is in $\mathcal{P} \cap \D$ since it satisfies $(\mathcal{L} u)_- = 0 \in \mathcal{G}_2$. 
		
		Next, if $u \in \F$ satisfies $(i)$, i.e., $u \in \mathcal{P}$ and $\mathcal{L} u \in \mathcal{G}_2$, then 
		\begin{align*}
			\mathcal{Q}(u) = \sum_{  X}m( \mathcal{L} u)(G\mathcal{L} u)
		\end{align*}
		by Lemma~\ref{lem:G_2} and Proposition~\ref{prop:characterisation_potentials}.
		It remains to prove the equivalences.

		We first prove $(i) \Longrightarrow (iv)$. If $\mathcal{L} u \in \mathcal{G}_2$ then $(\mathcal{L} u)_\pm \in \mathcal{G}_2$ by Lemma~\ref{lem:G_2}
		and we can define $u^\pm = G(\mathcal{L} u)_\pm$ which are positive since the Green's operator is positivity preserving, \cite[Theorem~6.26~(b)]{KLW}. Those are superharmonic functions which are in $\D_0$ by Lemma~\ref{lem:G_2}
		and Proposition~\ref{prop:characterisation_potentials}. 
		Since $u$ is a potential, we have		
		\begin{align*}
			u = G \mathcal{L} u = G \big( (\mathcal{L} u)_+ - (\mathcal{L} u)_- \big) = u^+ - u^-.
		\end{align*}
		Next, we prove $(iv) \Longrightarrow (iii)$. So, let $u = u^+ - u^-$ be a decomposition of $u$ into superharmonic functions in $\D_0$. Then $u \in \D_0$ and
		\begin{align*}
			0 \leq (\mathcal{L} u)_- = (-\mathcal{L} u)_+=  (\mathcal{L} u^- - \mathcal{L} u^+)_{+} \leq \mathcal{L} u^-
		\end{align*}
		since $\mathcal{L} u^+, \mathcal{L} u^- \geq 0$. Since,  $\mathcal{L} u^- \in \mathcal{G}_2$ by Lemma~\ref{lem:superharmonic_and_D_0_is_potential}, we conclude  $(\mathcal{L} u)_- \in \mathcal{G}_2$.
		
		To prove  $(iii) \Longrightarrow (ii) $ and $(ii) \Longrightarrow (i)$, we   set 
		$$ u^{-}=G(\mathcal{L} u)_- \quad\mbox{ and }\quad u^+ = u + u^- . $$ 

		\emph{Claim.} If $ (\mathcal{L} u)_- \in \mathcal{G}_2 $, then  $ u^{-}\in \mathcal{D}_0 \cap \mathcal{P}$ and   $$ \mathcal{L}u^\pm=(\mathcal{L}u)_{\pm}\ge 0.$$% i.e., $ u^{\pm} $ are superharmonic.

		\emph{Proof of the claim.} The first statement follows directly from by Lemma~\ref{lem:G_2}. Furthermore, we clearly have $\mathcal{L}u^-=(\mathcal{L}u)_{-}\ge 0$ by Proposition~\ref{prop:characterisation_potentials}. Moreover, 
		\begin{align*}
			\mathcal{L} u^+ = \mathcal{L} u + \mathcal{L} u^- = \mathcal{L} u + (\mathcal{L} u)_-  = (\mathcal{L} u)_+ \geq 0.
		\end{align*}
		This finishes the proof of the claim.\hfill\qedhere

		For $(iii) \Longrightarrow (ii)$, 	
		assume $ u\in \mathcal{D}_{0}  $ and $(\mathcal{L} u)_- \in \mathcal{G}_2$.  By the claim, $ u^{-}\in \mathcal{D}_{0} $, so   $u^+ = u + u^- \in \D_0$. Consequently, $u^+ \in \P$ by Lemma~\ref{lem:superharmonic_and_D_0_is_potential}. This yields $u = u^+ - u^- \in \P \cap \D$. By the claim,   $ u^{-}\in  \mathcal{D}_{0}\cap \mathcal{P} $, and, thus, 
		$u^+=u+u^{-} \in \P \cap \D$. We next use the Royden decomposition, Proposition~\ref{thm:Royden_decomposition}, to decompose $u^+ = u^+_0 + u^+_h$ with $u^+_0 \in \D_0$ and with a harmonic function $u^+_h \in \D$. Then, $\mathcal{L} u^+_0 = \mathcal{L} u^+ \geq 0$ and by the claim we obtain $(\mathcal{L} u)_+ = \mathcal{L} u^+ = \mathcal{L} u^+_0 \in \mathcal{G}_2$ by Lemma~\ref{lem:superharmonic_and_D_0_is_potential}. 	
		Since $\mathcal{G}_2$ is a vector space, it follows that $\mathcal{L} u = (\mathcal{L} u)_+ - (\mathcal{L} u)_- \in \mathcal{G}_2$. %That $u$ is in $\P$ is trivial.
	\end{proof}
	
	\subsection{Superharmonic functions vanishing at infinity}
		In this subsection, we  show that superharmonic functions in  $C_0(X)$ are extended space are potentials and show a Green's formula for functions in $ C_{0}(X) $ with integrable charge. %For $c=0$ and $m=1$, the following proposition can be found in \cite[Corollary~1.39]{hake2025optimal}.  
	
	\begin{proposition}
		\label{cor:superharmonic_in_C_0_is_potential}
		If $u \in C_0(X)$ is superharmonic, then $u$ is a potential.
	\end{proposition}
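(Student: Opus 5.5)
We plan to use the Riesz decomposition of positive superharmonic functions, together with the minimum principle on finite sets. Let $u\in C_0(X)$ be superharmonic and set $k=\mathcal{L}u\ge 0$.

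\emph{Step 1: $u\ge 0$.} Suppose $u$ takes a negative value. Since $u\in C_0(X)$, the infimum of $u$ is then a negative minimum, attained at some vertex. By superharmonicity and connectedness, the minimum principle propagates this minimum to all neighbours and hence to all of $X$. So $u$ would be a negative constant, which contradicts $u\in C_0(X)$ unless $X$ is finite. The finite case is handled directly: for a transient finite graph we have $c\neq 0$, and $\mathcal{L}$ applied to a negative constant gives $c\cdot(\text{negative})/m$, which is not $\ge 0$ somewhere. Hence $u\ge 0$.

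\emph{Step 2: $Gk\le u$, so $k\in\mathcal{G}$.} Take an exhaustion $K_n\nearrow X$ by finite sets and let $G^{(n)}$ be the Green's function of the operator restricted to $K_n$ with Dirichlet conditions. It is known that $G^{(n)}\nearrow G$ pointwise; this is the standard construction of the Green's function in \cite{KLW}. On $K_n$, the function $v_n = u - G^{(n)}(k1_{K_n})$ satisfies $\mathcal{L}v_n=0$ in $K_n$ and $v_n=u\ge 0$ outside $K_n$. The finite-set minimum principle then gives $v_n\ge 0$. Monotone convergence yields $Gk\le u<\infty$, so $k\in\mathcal{G}$. Moreover $h:=u-Gk$ satisfies $h\ge 0$ and $\mathcal{L}h=0$, the latter by Proposition~\ref{prop:characterisation_potentials}.

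\emph{Step 3: $h=0$.} We have $0\le h\le u$, and $u\in C_0(X)$, so $h\in C_0(X)$. Suppose $h\neq 0$. Then $h$ attains a positive maximum. Harmonicity, i.e.\ $\deg(x)h(x)+c(x)h(x)=\sum_y b(x,y)h(y)$, together with connectedness forces $h$ to be constant equal to this maximum. It also forces $c=0$ wherever the maximum is attained. For infinite $X$ a positive constant is not in $C_0(X)$, which is a contradiction; the finite case is as in Step 1. Hence $u=Gk=G\mathcal{L}u$, so $u$ is a potential.

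The main obstacle is Step 2: justifying that the finite-volume Green's functions increase to $G$, and that the Dirichlet problem on $K_n$ gives the comparison $u\ge G^{(n)}(k1_{K_n})$. This step uses only the minimum principle on finite sets, which is valid for $\mathcal{L}$ with $c\ge 0$. Alternatively one could invoke directly the Riesz decomposition $u=Gk+h$ for positive superharmonic functions, for which $h$ is the greatest harmonic minorant, and then conclude with Step 3.
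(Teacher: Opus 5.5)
Your proof is correct and follows the paper's argument. The paper gets $u\ge 0$ from the minimum principle, writes $u = G\mathcal{L}u + h$ with $h$ harmonic and $0\le h\le u$, and concludes $h=0$ because $h\in C_0(X)$. The only difference is that your Step~2 proves this decomposition directly, via exhaustion and the finite-set minimum principle, whereas the paper cites the Riesz decomposition from \cite{FK21} with the subharmonic minorant $f=0$; you also handle the finite-$X$ case more explicitly.
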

	
	The proof relies on the  Riesz decomposition for superharmonic functions.

	\begin{proposition}[Riesz decomposition, Theorem 2.1 in \cite{FK21}]
		\label{thm:Riesz_decomposition}
		If $u$ is superharmonic and if there exists a subharmonic function $f$ with $f \leq u$ then there exists a unique decomposition $u = u_p + u_h$ with a potential $u_p$ and a harmonic function $u_h$. Further $u_p \geq 0$ and $u_h \geq f$.
	\end{proposition}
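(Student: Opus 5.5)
We construct the harmonic part as a monotone limit of solutions to Dirichlet problems on an exhaustion, a greatest-harmonic-minorant argument. The potential part is then identified as $G\mathcal{L}u$. Throughout, the graph is connected and transient, as assumed in this section. Put $k=\mathcal{L}u\ge 0$ and fix an increasing exhaustion $K_1\subseteq K_2\subseteq\ldots$ of $X$ by finite sets with $\bigcup_n K_n=X$. We may assume $K_n\neq X$; if $X$ is finite the statement follows directly from the finite-set arguments below.

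\emph{Step 1 (Dirichlet problems).} For each $n$, let $h_n$ be the function with $h_n=u$ on $X\setminus K_n$ and $\mathcal{L}h_n=0$ on $K_n$. This is a finite linear system. It is uniquely solvable because $\mathcal{Q}$ restricted to $C_c(K_n)$ is positive definite, by connectedness and $K_n\neq X$. Moreover $h_n\in\mathcal{F}$, since $h_n$ differs from $u\in\mathcal{F}$ only on a finite set.

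We then use the minimum principle on finite sets: a function that is superharmonic on $K_n$ and non-negative outside $K_n$ is non-negative everywhere. To prove it, look at a negative minimum inside $K_n$ and use connectedness to propagate it to the exterior, a contradiction. We apply it three times:
\begin{itemize}
\item $u-h_n$ is superharmonic on $K_n$ and vanishes outside, so $u\ge h_n$.
\item $h_n-f$ is superharmonic on $K_n$ and non-negative outside, since $u\ge f$; hence $h_n\ge f$.
\item $h_n-h_{n+1}$ is harmonic on $K_n$, equals $u-h_{n+1}\ge 0$ on $K_{n+1}\setminus K_n$, and vanishes outside $K_{n+1}$; hence $h_n\ge h_{n+1}$.
\end{itemize}
More precisely, $u-h_n$ equals the Dirichlet Green's operator $G_{K_n}$ of $K_n$ applied to $k1_{K_n}$.

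\emph{Step 2 (passing to the limit).} The sequence $(h_n)$ is pointwise decreasing and bounded below by $f$, so it converges to some $u_h\ge f$. Since $f\le h_n\le u$ and $|f|+|u|\in\mathcal{F}$, dominated convergence in $\sum_y b(x,y)h_n(y)$ lets us pass to the limit in $\mathcal{L}h_n(x)=0$, which holds for $n$ large enough that $x\in K_n$. Hence $u_h$ is harmonic.

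Next, $u-h_n=G_{K_n}(k1_{K_n})$. The restricted Green's functions $G_{K_n}(x,y)$ increase to $G(x,y)$; this is the standard approximation of the Green's function in the transient case, see \cite{KLW}. By monotone convergence, $u-u_h=Gk$. In particular $Gk<\infty$, so $k\in\mathcal{G}$. By Proposition~\ref{prop:characterisation_potentials}, $u_p:=Gk$ is a potential, and $u_p\ge0$ because $k\ge 0$.

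\emph{Step 3 (uniqueness).} Suppose $u=p_1+h_1=p_2+h_2$ with potentials $p_i$ and harmonic $h_i$. Then $\mathcal{L}p_i=\mathcal{L}u=k$ for both $i$, so $p_i=G\mathcal{L}p_i=Gk$. Hence $p_1=p_2$ and consequently $h_1=h_2$.

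The main obstacle is Step 2, specifically the identification $\lim_n G_{K_n}=G$. I would take this from the construction of the Green's function in \cite{KLW}. Otherwise it can be proved by noting that $\lim_n G_{K_n}(\cdot,y)$ is a $\mathcal{Q}$-bounded pointwise limit of functions in $C_c(X)$, hence lies in $\D_0$ by Lemma~\ref{prop:charD_0}, and solves $\mathcal{L}v=1_y$. By the uniqueness of $G_y$ in $\D_0$ it must equal $G(\cdot,y)$.
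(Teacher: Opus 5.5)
The paper gives no proof of this proposition. It is quoted from \cite{FK21} and used as a black box, so there is no in-paper argument to match. Your proof is correct and self-contained, and it follows the classical potential-theoretic route. You obtain $u_h$ as the decreasing limit of the solutions $h_n$ of the Dirichlet problems with boundary values $u$ off $K_n$. You then identify $u-u_h$ with $G\mathcal{L}u$ via $u-h_n=G_{K_n}(k1_{K_n})$ and $G_{K_n}\nearrow G$.

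The only analytic input is the minimum principle on finite sets, and that is exactly where $c\ge 0$ enters. \cite{FK21} is formulated for Schr\"odinger operators on graphs, whose potential may take negative values, and there this elementary minimum principle is not directly available. So the citation is more general, while your argument is elementary and covers everything this paper needs. Your construction also shows that $u_h$ is the greatest harmonic minorant of $u$. If $h\le u$ is harmonic, then $h_n-h$ is harmonic on $K_n$ and equals $u-h\ge 0$ off $K_n$, hence $h_n\ge h$ and $u_h\ge h$.

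Two points should be made explicit rather than asserted.

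First, in your fallback proof of $G_{K_n}(\cdot,y)\nearrow G(\cdot,y)$ for $y\in K_n$, you should justify the monotonicity in $n$ and the energy bound. Both follow from your minimum principle: $G_{K_{n+1}}(\cdot,y)-G_{K_n}(\cdot,y)$ and $G(\cdot,y)-G_{K_n}(\cdot,y)$ are harmonic on $K_n$ and non-negative off $K_n$. Hence Green's formula for finitely supported functions gives
\[
\mathcal{Q}\big(G_{K_n}(\cdot,y)\big)=m(y)\,G_{K_n}(y,y)\le m(y)\,G(y,y)<\infty .
\]
The approximants are also dominated by $G(\cdot,y)\in\mathcal{F}$. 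This justifies passing to the limit in $\mathcal{L}G_{K_n}(\cdot,y)=1_y$ on $K_n$, before you invoke uniqueness of $G_y$ in $\mathcal{D}_0$.

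Second, for finite $X$ your positive-definiteness argument, which used $K_n\neq X$, must be replaced. There transience forces $c\neq 0$, so $\mathcal{L}$ is invertible with inverse $G$. Then $u=G\mathcal{L}u$ and $u_h=0$, and $u_h\ge f$ holds because $f=G\mathcal{L}f\le 0$.
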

	We also need the following lemma.
	
	\begin{lemma}
		\label{lem:superharmonic_in_C_0_is_positive}
		The Laplacian $\mathcal{L}$ is injective on $C_0(X)$ and if $u \in C_0(X)$ is superharmonic, then $u \geq 0$.
	\end{lemma}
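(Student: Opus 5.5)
We prove the second assertion first, namely that a superharmonic $u\in C_0(X)$ satisfies $u\ge 0$; this is a minimum principle. Injectivity then follows by linearity. Throughout we use that the graph is transient and connected.

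\emph{Positivity.} Let $u\in C_0(X)$ with $\mathcal{L}u\ge0$ and suppose $\inf_X u<0$. Since $u$ vanishes at infinity, the set $\{u\le \inf_X u/2\}$ is finite and nonempty, so the infimum is attained at some $x_0$ with $u(x_0)=\min_X u<0$. Superharmonicity at $x_0$ gives
\[
0\le \sum_{y}b(x_0,y)\bigl(u(x_0)-u(y)\bigr)+c(x_0)u(x_0).
\]
Every summand is $\le 0$, and $c(x_0)u(x_0)\le0$. Hence all terms vanish, so $u(y)=u(x_0)$ for all neighbours $y$ of $x_0$ (and $c(x_0)=0$). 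Each neighbour is again a minimum point, so we may iterate. By connectedness $u\equiv u(x_0)<0$ on all of $X$. If $X$ is infinite, this contradicts $u\in C_0(X)$. If $X$ is finite, a constant negative function can only satisfy the above equalities everywhere if $c\equiv0$; but a finite graph with $c\equiv 0$ is recurrent, contradicting transience (equivalently, there would be no nontrivial Hardy weight). Hence $u\ge0$.

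\emph{Injectivity.} Let $u\in C_0(X)$ with $\mathcal{L}u=0$. Then both $u$ and $-u$ are superharmonic and lie in $C_0(X)$, so the first part gives $u\ge0$ and $-u\ge0$, hence $u=0$. Since $\mathcal{L}$ is linear on $\mathcal{F}$, this is injectivity on $C_0(X)\cap\mathcal{F}$, which is the domain where $\mathcal{L}$ is defined.

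The only delicate point is the degenerate finite case. For infinite $X$ the constant-propagation argument alone gives the contradiction, so transience is needed only to exclude a finite graph with $c\equiv 0$, where constants are nonzero harmonic functions in $C_0(X)$.
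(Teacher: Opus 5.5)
Your proof is correct and follows the paper's route. In both, $u\in C_0(X)$ forces a negative global minimum to be attained, the minimum principle makes $u$ a negative constant, this contradicts vanishing at infinity, and injectivity follows by applying the result to $\pm u$. The only differences are that you verify the minimum principle directly by propagating along edges instead of citing \cite[Theorem~1.7]{KLW}, and that you explicitly rule out the finite-$X$ case using the standing transience assumption, a case the paper's step ``since $u\in C_0(X)$, we have $u=0$'' leaves implicit.
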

	
	\begin{proof}
		Let $u \in C_0(X)$ be superharmonic. Assume   that there exists $o \in X$ with $u(o) < 0$. The set $\set{x \in X \mid u(x) \leq u(o)}$ is non-empty and finite because $u \in C_0(X)$. Hence, $u$ must attain a global minimum. By  the minimum principle \cite[Theorem~1.7]{KLW}, $ u $ is constant. Since $ u\in C_{0}(X) $, we have $ u=0 $ which is a contradiction to $u(o) < 0$.
		
		If $\mathcal{L} u = 0$ for some $u \in C_0(X)$ then $u$ and $-u$ are superharmonic so we must have $u = 0$ which is the desired injectivity.
	\end{proof}
	
		\begin{proof}[Proof of Proposition~\ref{cor:superharmonic_in_C_0_is_potential}]
		Let $u \in C_0(X)$ be superharmonic. 
		The previous lemma shows that $u \geq 0$. By the Riesz decomposition we have $u = u_p + u_h$ with  $0 \leq u_h \leq u$. Therefore the harmonic part $u_h$ is in $C_0(X)$ and, thus, trivial by the lemma above. Therefore, $u = u_p$ is a potential. 
	\end{proof}
	
We next show a Green's formula for functions in $C_0(X)$ with integrable charge.
	
	\begin{proposition}
		\label{lem:C_0_u_Delta_u_in_ell^1_implies_D_0}
		Let $u \in C_0(X)$ with $\mathcal{L} u \in \ell^1(X,m)$. Then, $u \in \D_0$ and
		\begin{align*}
			\mathcal{Q}(u) = \sum_{  X}m u\mathcal{L} u.
		\end{align*}
		In particular, $\mathcal{Q}(u) \leq \|u\|_{\infty} \|\mathcal{L} u\|_{1}$.
	\end{proposition}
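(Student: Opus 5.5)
The natural strategy is to split the charge into its positive and negative parts and reduce to the superharmonic case. Set $k = \mathcal{L}u \in \ell^1(X,m)$ and $u^{\pm} = G k_{\pm}$. First I would check that $k_\pm \in \mathcal{G}$. By the local Harnack inequality and the symmetry $G(x,y)m(x)=G(y,x)m(y)$ we have $G(x,y) = G(y,x)m(y)/m(x)$. So it suffices that $\sup_y G(y,x) < \infty$ for fixed $x$. This holds because $G_x = G(\cdot,x)$ attains its maximum at $x$: by the maximum principle for the superharmonic $G_x \in \D_0$, $G_x$ is harmonic off $x$ and bounded by $G(x,x)$. Hence $G|k|(x) \le \frac{G(x,x)}{m(x)}\sum_X m|k| < \infty$, so $k\in\mathcal{G}$ and $v:=Gk = u^+-u^-$ is a potential with $\mathcal{L}v = k$ by Proposition~\ref{prop:characterisation_potentials}.

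Next I would identify $u$ with $v$. The difference $u - v$ satisfies $\mathcal{L}(u-v)=0$. If I knew $v\in C_0(X)$, Lemma~\ref{lem:superharmonic_in_C_0_is_positive} (injectivity of $\mathcal{L}$ on $C_0(X)$) would give $u=v$. To see $Gk_\pm \in C_0(X)$, I would write $Gk_\pm(x) = \sum_y G(x,y)k_\pm(y)$. Each $G(\cdot,y)$ vanishes at infinity: it is a superharmonic function in $\D_0$, so it is a potential $G1_y$, and it is bounded by $\sup G_y$. Dominated convergence then gives $Gk_\pm\in C_0(X)$, since the summands are bounded uniformly in $x$ by $\frac{G(y,y)}{m(y)}m(y)|k(y)|\cdot$const. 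The delicate point is that $G(\cdot,y)\to 0$ at infinity. This is not automatic on general graphs, because Green's functions need not vanish at infinity (consider, e.g., graphs with non-trivial harmonic functions or non-uniformly behaved measures). This is the step I expect to be the main obstacle.

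To avoid needing $G_y\in C_0$, I would rather argue through $\D_0$ with Lemma~\ref{prop:charD_0}. Take the truncations $\varphi_\eps = (u-\eps)_+ - (u+\eps)_-$, which are finitely supported since $u\in C_0(X)$. The standard estimate for truncations gives
\[
\mathcal{Q}(\varphi_\eps) \le \sum_X m\, \varphi_\eps \mathcal{L}u + \text{(killing term correction)}.
\]
This follows from pairing $\mathcal{L}u$ with the finitely supported $\varphi_\eps$ and using $(a-b)(\varphi_\eps(a)-\varphi_\eps(b)) \ge (\varphi_\eps(a)-\varphi_\eps(b))^2$, which holds since $t\mapsto t-\varphi_\eps(t)$ is nondecreasing. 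The killing part works the same way, because $t\varphi_\eps(t)\ge \varphi_\eps(t)^2$. Thus $\mathcal{Q}(\varphi_\eps)\le \sum_X m\,\varphi_\eps\mathcal{L}u \le \|u\|_\infty\|\mathcal{L}u\|_1$ uniformly in $\eps$, and $\varphi_\eps\to u$ pointwise. Lemma~\ref{prop:charD_0} then yields $u\in\D_0$.

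Finally, for the identity, I would use lower semicontinuity (Fatou) to get $\mathcal{Q}(u)\le \liminf_{\eps \to 0}\mathcal{Q}(\varphi_\eps) \le \sum_X m u\mathcal{L}u$, the last step by dominated convergence with the dominant $\|u\|_\infty|\mathcal{L}u|m$. For the reverse inequality, Green's formula in $\D_0$ \cite[Lemma~6.8]{KLW} applied to $\varphi_\eps$ gives $\mathcal{Q}(u,\varphi_\eps) = \sum_X m\varphi_\eps\mathcal{L}u$, which is an identity for finitely supported test functions. Letting $\eps\to0$, the left side converges to $\mathcal{Q}(u)$. This needs $\varphi_\eps\to u$ weakly in $\D_0$, which follows from the bounded energy together with pointwise convergence. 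The right side converges to $\sum_X m u\mathcal{L}u$ by dominated convergence. The bound $\mathcal{Q}(u)\le\|u\|_\infty\|\mathcal{L}u\|_1$ is then immediate.
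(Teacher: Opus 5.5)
Your final argument is correct and is essentially the paper's proof. Your $\varphi_\eps=(u-\eps)_+-(u+\eps)_-$ is exactly the paper's $T_n\circ u$ with $\eps=1/n$, and the paper likewise combines the pointwise inequality $(\varphi_\eps(x)-\varphi_\eps(y))^2\le (u(x)-u(y))(\varphi_\eps(x)-\varphi_\eps(y))$ with Green's formula for finitely supported test functions and Lemma~\ref{prop:charD_0}. The differences are minor. You derive the identity by hand via weak convergence $\varphi_\eps\to u$ in $\D_0$ and dominated convergence, where the paper simply cites \cite[Lemma~6.8]{KLW}. Your opening Green's-function detour is unnecessary, and you are right to abandon it, since $G(\cdot,y)$ need not vanish at infinity.
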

	
		%For $c=0$ and $m=1$, the above proposition can be also be found in \cite[Lemma~1.40 and Corollary~1.41]{hake2025optimal}. 
	
	\begin{proof}The result follows from  \cite[Lemma~6.8]{KLW}, if we can show that $u \in \D_0$. 		Consider the normal contraction $ T_{n}:\R\to\R $
\begin{align*}
T_{n}(t)= \begin{cases}
				0, &t \in [-\frac{1}{n},\frac{1}{n}], \\
				t - \frac{1}{n}, &t > \frac{1}{n}, \\
				t + \frac{1}{n}, &t < -\frac{1}{n}.
			\end{cases}
\end{align*}
for $ n\in \N $ and let $ 
u_n=T_n \circ u $. One checks easily that $ |u_n|\le |u| $ and by case distinction that
\begin{align*}
(u_{n}(x)-u_{n}(y))^{2} \le (u(x)-u(y)) (u_n(x)-u_n(y)) .
\end{align*}
for all $ x,y\in X $.
		Since $u$ is in $C_0(X)$, each $u_n$ is in $C_c(X)$.  Hence, for every $n \in \N$, we obtain
		\begin{align*}
			\mathcal{Q}(u_n) \leq \mathcal{Q}(u, u_n) = \sum_{x \in X}mu_n \mathcal{L} u \leq \norm{u}_{\infty}\norm{\mathcal{L} u}_{1},
		\end{align*}
		where we used the Green formula for $u \in \F$, $u_n \in C_c(X)$, \cite[Proposition 1.5]{KLW}. Moreover, $\sup_{n \in \N} \mathcal{Q}(u_n) \leq \mathcal{Q}(u)  < \infty$ and since $(u_n)$ converges pointwise to $u$ we infer from Lemma~\ref{prop:charD_0}  that $u \in \D_0$. The ``in particular'' statement follows from Fatou's lemma. 
	\end{proof}

\section{Hardy type weights and positive criticality} \label{sec:positivecriticality}

We now use the potential theoretic aspects developed above to characterize positive critical Hardy type weights, thus establishing Theorem~\ref{thm:classification_positive-critical_weights_Hardy}.
%The main results in this section are taken from Chapter~2 of the dissertation thesis of the first named author, see \cite[Section~2.1]{hake2025optimal} and extended to general discrete measure spaces $(X,m)$ and Laplacians with a non-vanishing killing term $c$. 

\subsection{Critical Hardy type weights}

Let $(b,c)$ be  a connected graph over $(X,m)$. We extend the notion of Hardy weights to functions that are not necessarily positive. Precisely, $w:X \to \R$  is called a {\em Hardy type weight} if 
\[\mathcal{Q}(\varphi)\ge w(\varphi) =  \sum_{ X}mw\varphi^{2} \quad \mbox{ for all } \varphi \in C_c(X).
\]
Whenever the inequality holds, we say that $w$ satisfies a {\em Hardy type inequality}. If $w \geq 0$, then $w$ is a {\em Hardy weight} and satisfies a {\em Hardy inequality}. We call a Hardy type weight {\em critical} if for every Hardy type weight $w^{\prime} \geq w$, we have $w=w^{\prime}$ or equivalently if $ 0 $ is a critical Hardy weight for the Schrödinger operator $ \mathcal{L}-w $. Using criticality theory for graphs with a measure, cf.\@  \cite[Section~6]{KLW}, it is straight forward to adapt the characterization of criticality from \cite[Theorem~5.3]{KPP20} to our setting. 

\begin{theorem}[Criticality of Hardy type weights]
	\label{lem:characterisaton_criticality}
Let $w$ be a Hardy-type weight. Then the following are equivalent.
\begin{enumerate}[(i)]
	\item $w$ is critical.
	\item There exists a unique (up to scalar multiplication) function $v \in \F$ such that $v \gneq 0$ and $\mathcal{L} v \geq wv$. 
	\item There exists a sequence $(e_n)$ in $C_c(X)$ such that $(\mathcal{Q}-w)(e_n) \to 0$ and such that $(e_n(o))$ converges to a non-zero constant for some $o \in X$.
	\item There exists a sequence $(e_n)$ in $C_c(X)$ such that $(\mathcal{Q}-w)(e_n) \to 0$ and such that $(e_n)$ converges pointwise to a function $v \in \F$ with $v \gneq 0$ and $\mathcal{L} v = wv$. 
\end{enumerate}
If these equivalent conditions hold then any $v \in \F$ with $\mathcal{L} v \geq wv$ satisfies $v > 0$ and $\mathcal{L} v = wv$ and the sequence $(e_n)$ from $(iv)$ can be chosen such that $0 \leq e_n \leq v$ for all $n \in \N$. 
	\end{theorem}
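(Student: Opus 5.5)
The plan is to reduce everything to the Schrödinger operator $\mathcal{H} = \mathcal{L} - w$, which is again a formal Laplacian-type operator on the same graph but with potential $c - wm$ that may be negative. The statement is the discrete analogue of \cite[Theorem~5.3]{KPP20}, and that proof uses only the positivity of the form, the local Harnack inequality, and the ground state representation. I will follow that proof and adapt it to a general measure $m$ and killing term $c$ using the tools of \cite[Section~6]{KLW}.

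The first key step is the ground state transform. Suppose $v>0$ with $\mathcal{L}v \ge wv$. Then for $\varphi \in C_c(X)$ one has
\[
(\mathcal{Q}-w)(v\varphi) = \frac12\sum_{x,y} b(x,y)v(x)v(y)(\varphi(x)-\varphi(y))^2 + \sum_X m\,(\mathcal{L}v - wv)v\varphi^2 .
\]
By the Allegretto--Piepenbrink type theorem in \cite{KLW}, Hardy-type-ness of $w$ is equivalent to the existence of such a $v>0$. With this in hand, the implications go as follows.
\begin{enumerate}[(a)]
\item For $(iii)\Rightarrow(iv)$: normalize $e_n(o)\to 1$ and replace $e_n$ by $|e_n|$, which does not increase $\mathcal{Q}-w$. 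The ground state representation with a positive supersolution $v$, together with a local Harnack/Poincaré bound on finite connected sets, shows that $(e_n/v)$ is locally bounded. A diagonal subsequence then converges pointwise to some $u\gneq 0$. Since $(e_n)$ is a null-sequence, $u$ satisfies $\mathcal{L}u = wu$: testing the form with $e_n + t1_x$ and letting $n\to\infty$ yields the harmonicity equation at $x$.
\item For $(iv)\Rightarrow(ii)$: given any $\tilde v\gneq0$ with $\mathcal{L}\tilde v\ge w\tilde v$, the Harnack inequality forces $\tilde v>0$. Applying the ground state representation with $\tilde v$ to $e_n=\tilde v\,(e_n/\tilde v)$ shows that both terms tend to $0$. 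Connectedness and the vanishing of the first term give $e_n/\tilde v \to$ const, so $v$ is a multiple of $\tilde v$. The vanishing of the second term gives $\mathcal{L}\tilde v = w\tilde v$. This yields both uniqueness and the final addendum that every supersolution is strictly positive and a solution.
\item For $(ii)\Rightarrow(i)$: if $w'\ge w$ is a Hardy type weight with $w'\ne w$, take a positive supersolution $v'$ for $w'$. Then $v'$ is also a supersolution for $w$, with $\mathcal{L}v'-wv' \gneq 0$. Perturbing by the Green's function of $\mathcal{L}-w'$ (or directly by $v$) produces a second, non-proportional supersolution for $w$, contradicting uniqueness.
\item For $(i)\Rightarrow(iii)$: argue by contraposition. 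If no null-sequence exists, then $\inf\{(\mathcal{Q}-w)(\varphi) : \varphi(o)=1\} > 0$. This gives $\mathcal{Q}-w \ge \varepsilon 1_o/m(o)$ as a Hardy type inequality, so $w+\varepsilon 1_o/m(o)$ is a strictly larger Hardy type weight, contradicting criticality.
\end{enumerate}

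The choice $0\le e_n\le v$ comes from truncation. Since $(\mathcal{Q}-w)(\varphi\wedge v)$ is controlled via the ground state form, where $\varphi/v\mapsto (\varphi/v)\wedge 1$ is a normal contraction, the truncated sequence is still a null-sequence.

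The main obstacle is (a), specifically extracting a pointwise convergent subsequence whose limit is a nontrivial solution. The local bound on $e_n/v$ uses only connectedness: along a path from $o$ to $x$, the weighted Dirichlet sum controls the differences $(e_n/v)(x_i)-(e_n/v)(x_{i+1})$. Combined with $e_n(o)\to1$, this also shows that the limit satisfies $u(o)\neq0$, so $u$ is nontrivial. Deriving the equation from the null-sequence property requires a variational argument, testing with $e_n \pm t1_x$. I expect this to go through because $w$ is a Hardy type weight, and I would carry it out exactly as in \cite{KPP20} with $m$-weighted sums.
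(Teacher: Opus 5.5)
Your overall route (Allegretto–Piepenbrink, the ground state transform and the Harnack/minimum principle, adapted from \cite[Theorem~5.3]{KPP20}) is the one the paper has in mind; it gives no proof beyond this reference. Your steps (i)$\Rightarrow$(iii) and (iv)$\Rightarrow$(ii), and the truncation $|e_n|\wedge v$, are correct. Step (a), however, has a genuine gap in the generality required here.

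\begin{itemize}
\item Testing with $e_n+t1_x$ and using Cauchy–Schwarz gives only $(\mathcal{L}-w)e_n(x)\to0$.
\item To conclude $(\mathcal{L}-w)u(x)=0$ you need $\sum_y b(x,y)e_n(y)\to\sum_y b(x,y)u(y)$. Pointwise convergence does not give this unless $x$ has finitely many neighbours; Fatou yields only $u\in\F$ and $(\mathcal{L}-w)u\ge0$.
\item This is not a side case. The theorem is applied to $(b_\sigma,c_\sigma)$, where $b_\sigma(x,y)>0$ for all $x\neq y$.
\end{itemize}

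The missing point is that the ground state transform gives more than local bounds. Take a positive supersolution $v$. Then $\mathcal{Q}_v(e_n/v)\le(\mathcal{Q}-w)(e_n)\to0$ forces every edge difference of $e_n/v$ to vanish. By connectedness the whole sequence converges, $e_n\to v/v(o)$, so no diagonal argument is needed and the limit lies in $\F$. Fatou applied to $\sum_X m(\mathcal{L}v-wv)v(e_n/v)^2\to0$ then gives $\mathcal{L}v=wv$. This is exactly the argument you already use in (b).

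Step (c) does not work as written either.
\begin{itemize}
\item Under (ii), the supersolution $v'$ is forced to be a multiple of $v$, so perturbing by $v$ gives nothing new.
\item $\mathcal{L}-w'$ need not have a Green's function, for example when $w'$ is itself critical.
\end{itemize}
A correct replacement: pick $x_0$ with $(\mathcal{L}-w)v'(x_0)>0$. For small $\varepsilon>0$ the function $v'-\varepsilon1_{x_0}$ is strictly positive and satisfies $(\mathcal{L}-w)(v'-\varepsilon1_{x_0})\ge0$. Indeed, off $x_0$ the left side only increases, by $\varepsilon b(\cdot,x_0)/m$, and at $x_0$ it changes by $O(\varepsilon)$. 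Once $X$ has a second point, this function is not proportional to $v'$, which contradicts the uniqueness in (ii).
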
 
	
Given a critical Hardy type weight $w$, the function $v$ obtained from the characterizing property~$ (iv) $  in the above theorem is unique up to scalar multiplication. We call $v$ the {\em ground state} for $w$.
A critical Hardy type weight $w$ is {\em positive critical} if the ground state $v$ belongs to $\ell^2(X,|w|m)$. For Hardy weights, this is equivalent to 
 $ v\in \mathcal{D}_0 $, see below. Note that  $(\mathcal{\mathcal{Q}}-w)(v)=0 $ in this case, i.e., the Hardy inequality allows for a minimizer for the quadratic form $\mathcal{Q}-w$. 
We say in analogy to  \cite{DFP14, KPP18}, that a Hardy type weight $w$ is called {\em optimal} if it is null-critical and {\em optimal near infinity}, which means that there is no $\lambda  > 0$ such that $\mathcal{Q} \geq (w+\lambda|w|)$ holds for all $\varphi\in C_c(X)$ supported on $X \setminus K$ for some finite $K \subseteq X$. 
By now, it has been observed that null-critical Hardy weights are automatically optimal at infinity, see \cite{KPP20,KN,Fischer,hake2025optimal,KovarikPinchover}. Since there is no published reference yet in the literature which includes our setting, we give a proof for the convenience of the reader.

\begin{proposition}[Optimality near infinity]\label{lem:optimalnearinfty}
	A null-critical Hardy type weight $w$ is optimal near infinity, i.e.,\@ for each $\lambda > 0$ and each finite set $K \subseteq X$, there is some $\varphi \in C_c(X)$ supported in $X \setminus K$ such that 
	\[
	\mathcal{Q}(\varphi) \, < \, \sum_{X \setminus K}m(w+\lambda |w|) \varphi^2.
	\] 
\end{proposition}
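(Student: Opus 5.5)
We argue by contradiction. Suppose there are $\lambda>0$ and a finite $K\subseteq X$ such that
\[
\mathcal{Q}(\varphi)\ge \sum_X m(w+\lambda|w|)\varphi^2 \quad\text{for all } \varphi\in C_c(X\setminus K).
\]
From this we aim to show that the ground state $v$ of $w$ lies in $\ell^2(X,|w|m)$, i.e., that $w$ is positive critical. This contradicts null-criticality.

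First we need $w\neq 0$ away from $K$; otherwise $|w|$ vanishes there and the inequality is empty. More precisely, if $\sum_{X\setminus K}|w|v^2m<\infty$ then $v\in\ell^2(X,|w|m)$, since $K$ is finite, and we already have the contradiction. So we may assume this sum diverges.

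Next take the null-sequence $(e_n)$ from Theorem~\ref{lem:characterisaton_criticality}(iv). It satisfies $0\le e_n\le v$, $e_n\to v$ pointwise, and
\[
(\mathcal{Q}-w)(e_n)\to 0.
\]
Fix a cutoff $\chi\in C_c(X)$ with $0\le\chi\le 1$ and $\chi=1$ on $K\cup\partial K$; if $b$ is not locally finite, choose a suitable finite neighbourhood instead. Then $\psi_n=(1-\chi)e_n$ is supported in $X\setminus K$. By the assumed inequality,
\[
\lambda\sum_X m|w|\psi_n^2 \le (\mathcal{Q}-w)(\psi_n).
\]
We compare $(\mathcal{Q}-w)(\psi_n)$ with $(\mathcal{Q}-w)(e_n)$. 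Writing $e_n=\psi_n+\chi e_n$ and expanding the form $h=\mathcal{Q}-w$ gives
\[
h(\psi_n)=h(e_n)-2h(\psi_n,\chi e_n)-h(\chi e_n).
\]
The function $\chi e_n$ is supported in a fixed finite set and is bounded by $v$, so $h(\chi e_n)$ is bounded uniformly in $n$; indeed it converges to $h(\chi v)$.

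The cross term $h(\psi_n,\chi e_n)$ is the delicate part. It involves the sums
\[
\sum_{x,y} b(x,y)\bigl(\psi_n(x)-\psi_n(y)\bigr)\bigl(\chi e_n(x)-\chi e_n(y)\bigr).
\]
Since $\chi e_n$ has finite support $F$, we can write $h(\psi_n,\chi e_n)=\sum_{x\in F} m(x)\,\chi e_n(x)\,(\mathcal{L}-w)\psi_n(x)$, using the Green formula with a finitely supported function. The quantity $(\mathcal{L}-w)\psi_n(x)$ at a fixed $x$ is controlled by $\sum_y b(x,y)v(y)<\infty$ and the value $v(x)$, because $v\in\mathcal{F}$ and $0\le\psi_n\le v$. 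Hence the cross term is bounded uniformly in $n$. Alternatively one could use the ground state representation $h(\varphi v)=\frac12\sum b\,v(x)v(y)(\varphi(x)-\varphi(y))^2$ to make this explicit.

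Combining these estimates, $\lambda\sum_X m|w|\psi_n^2$ is bounded uniformly in $n$. Since $\psi_n\to(1-\chi)v$ pointwise, Fatou's lemma gives
\[
\sum_X m|w|(1-\chi)^2v^2<\infty.
\]
Because $\chi$ has finite support, it follows that $v\in\ell^2(X,|w|m)$. So $w$ is positive critical, which is a contradiction.

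The main obstacle is the uniform control of the cross term when $b$ is not locally finite. We handle it through the pointwise bound $0\le e_n\le v$ together with $v\in\mathcal{F}$, which makes all the relevant sums dominated independently of $n$.
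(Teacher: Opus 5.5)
Your proof is correct and follows the same strategy as the paper: cut off the null-sequence $(e_n)$ outside $K$, show that $(\mathcal{Q}-w)$ of the cut-off functions stays bounded uniformly in $n$, and combine Fatou's lemma with null-criticality. You phrase it as a contradiction where the paper argues directly, and you bound the truncation error via the bilinear expansion and Green's formula instead of the ground state transform $\mathcal{Q}_v$; the requirement $\chi=1$ on $\partial K$ is unnecessary, and $\chi=1_K$ (as in the paper) works even when $b$ is not locally finite.
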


The proof of the proposition is based on the {\em ground state transform} involving the form $\mathcal{Q}_v$ defined for  $v \in \mathcal{F}$ as
\[
\mathcal{Q}_v(\varphi) = \frac{1}{2} \sum_{x,y \in X} b(x,y)v(x)v(y) \left( {\varphi(x)} - {\varphi(y)}  \right)^2.
%\, + \, \sum_{x \in X} \frac{\mathcal{L} v(x)}{v(x)} \varphi^2(x)m(x).
\] 

\begin{proposition}[Ground state transform,  Proposition~4.8 in \cite{KPP20}] \label{lem:GST}
	Let $v \in \mathcal{F}$ be strictly positive. Then for $\varphi \in C_c(X)$,
	\begin{align*}
		\mathcal{Q}(\varphi) = \mathcal{Q}_v\left( \frac{\varphi}{v} \right) \, + \, \sum_{ X} m\frac{\mathcal{L}v}{v}\varphi^2.
	\end{align*}
\end{proposition}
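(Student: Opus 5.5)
The plan is a direct algebraic computation: substitute $\varphi = v\psi$ with $\psi = \varphi/v$, compare the two quadratic forms edge by edge, and then resum the difference using symmetry of $b$ and the assumption $v \in \mathcal{F}$. Since $v$ is strictly positive, $\psi$ is well defined and has the same finite support $K$ as $\varphi$.

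First, for fixed $x,y \in X$, expand both squares:
\begin{align*}
(\varphi(x)-\varphi(y))^2 &= \varphi(x)^2 + \varphi(y)^2 - 2\varphi(x)\varphi(y),\\
v(x)v(y)(\psi(x)-\psi(y))^2 &= \varphi(x)^2\frac{v(y)}{v(x)} + \varphi(y)^2\frac{v(x)}{v(y)} - 2\varphi(x)\varphi(y).
\end{align*}
Subtracting, the cross terms cancel and we get the pointwise identity
\begin{align*}
(\varphi(x)-\varphi(y))^2 - v(x)v(y)(\psi(x)-\psi(y))^2 = \frac{\varphi(x)^2}{v(x)}(v(x)-v(y)) + \frac{\varphi(y)^2}{v(y)}(v(y)-v(x)).
\end{align*}

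Second, multiply by $\tfrac12 b(x,y)$ and sum over $x,y$. Justifying the splitting of the double sum is the only delicate point, and I expect it to be the main (mild) obstacle.
\begin{itemize}
\item Every term vanishes unless $x \in K$ or $y \in K$, so by symmetry each sum reduces to finitely many sums over a single free variable.
\item For fixed $x \in K$, the series $\sum_y b(x,y)|v(x)-v(y)|$ converges, because $\deg(x)<\infty$ and $v \in \mathcal{F}$.
\item The same two facts give absolute convergence of $\sum_{x,y} b(x,y)(\varphi(x)-\varphi(y))^2$ and of $\sum_{x,y} b(x,y)v(x)v(y)(\psi(x)-\psi(y))^2$.
\end{itemize}
Hence all three double sums converge absolutely. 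We may therefore split them and use the symmetry of $b$ to merge the two terms on the right-hand side. This yields
\begin{align*}
\frac12\sum_{x,y\in X} b(x,y)(\varphi(x)-\varphi(y))^2 - \mathcal{Q}_v(\psi) = \sum_{x\in X}\frac{\varphi(x)^2}{v(x)}\sum_{y\in X} b(x,y)(v(x)-v(y)).
\end{align*}

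Finally, add the killing term, written as $\sum_X c\varphi^2 = \sum_X \frac{c\,v}{v}\varphi^2$. The right-hand side then becomes $\sum_X m\,\frac{\mathcal{L}v}{v}\varphi^2$ by the definition of $\mathcal{L}$. This gives $\mathcal{Q}(\varphi) = \mathcal{Q}_v(\varphi/v) + \sum_X m\frac{\mathcal{L}v}{v}\varphi^2$, as claimed.
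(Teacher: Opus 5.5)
Your proof is correct. The pointwise edge identity, the absolute-convergence bookkeeping (using only $\deg(x)<\infty$, $v\in\mathcal{F}$ and the finite support of $\varphi$), the symmetrization via $b(x,y)=b(y,x)$, and the absorption of the killing term into $\mathcal{L}v$ are exactly what is needed. The paper does not reprove this statement but simply cites \cite{KPP20}, where the identity is likewise obtained by an elementary direct computation of this kind, so your argument takes essentially the same approach.
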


%We are now in position to prove optimality near infinity for null-critical Hardy type weights.  For the proof of Hardy type inequalities, we will also come back to the ground state transform below. 

\begin{proof}[Proof of Proposition~\ref{lem:optimalnearinfty}]
		Fix $\lambda > 0$ and let $K \subseteq X$ be finite. Let $v$ be the ground state of $w$. Observe that $w= \mathcal{L}v/v$ by Theorem~\ref{lem:characterisaton_criticality}. By the very same theorem, there is a sequence $(e_n)$  in $C_c(X)$ that converges pointwise to $v$ such that $0 \leq e_n \leq v$ and $\lim_{n \to \infty} (\mathcal{Q}-w)(e_n) = 0$.
		By the ground state transform, we have
		%\begin{align*} 			\mathcal{Q}_v \!\left( \frac{e_n}{v} \right) = (\mathcal{Q}-w)(e_n) \xrightarrow{n \to \infty} 0, 		\end{align*}
		and 
		\begin{align*}
			(\mathcal{Q}-w)(1_{X \backslash K} e_n)& = \mathcal{Q}_v \!\left( \frac{e_n - 1_{K} e_n}{v} \right) \leq 2 \mathcal{Q}_v \!\left( \frac{e_n}{v} \right) + 2 \mathcal{Q}_v \!\left( \frac{1_{K} e_n}{v} \right)\\
&\le 2(\mathcal{Q}-w)(e_n)+\sum_{x \in K} v(x) \sum_{y \in X} b(x,y) v(y)
,
		\end{align*}
		where we used the ground state transform and   $0 \leq e_n \leq v$ as well as $0 \leq \frac{1_K e_n}{v} \leq 1$ in the last estimate.
		Since $K$ is finite and $v \in \mathcal{F}$, it follows that % $\sup_{n \in \N} \mathcal{Q}_v\!\left( \frac{1_A e_n}{v} \right) < \infty$ and in turn that 
		$\sup_{n \in \N} (\mathcal{Q}-w)(1_{X \backslash K} e_n) < \infty$. On the other hand, it follows from Fatou's lemma %that for the quadratic form $|w|$, 
		\begin{align*}
			\liminf_{n \to \infty} \abs{w} \!(1_{X \backslash K} e_n)% &= \liminf_{n \to \infty} \sum_{x \in X \backslash A} \abs{w(x)} \! e_n^2(x)m(x) \\
			 &\geq \sum_{X \backslash K} m\abs{w} v^2.
		\end{align*}
		The sum on the right hand side is infinite since $w$ is null-critical and $K$ is finite.
		It follows for sufficiently large $n \in \N$ that $(\mathcal{Q}-w)(1_{X \backslash K} e_n) < \lambda \!\abs{w} \!(1_{X \backslash K} e_n)$. Therefore, $w$ is optimal near infinity. 
	\end{proof}
	
	We obtain the following representation of a critical Hardy weight via the Green kernel.

	\begin{proposition}
		\label{prop:ground_state_potential}
		Let $w\ge0$ be a critical and non-trivial Hardy weight. Then the ground state $v$ is a non-trivial superharmonic potential, i.e.,\@ there exists a nontrivial $0\le k\in \mathcal{G}$ such that
		\begin{align*}
			v=Gk\quad\mbox{and}\quad w=\frac{k}{Gk}.
		\end{align*}
	\end{proposition}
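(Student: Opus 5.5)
The plan is to show that the ground state $v$ is a positive superharmonic function dominating a nonnegative subharmonic function. The Riesz decomposition (Proposition~\ref{thm:Riesz_decomposition}) then splits it into a potential plus a harmonic part, and criticality will force the harmonic part to vanish. By Theorem~\ref{lem:characterisaton_criticality}, $v\in\F$ satisfies $v>0$ and $\mathcal{L}v=wv\ge 0$, so $v$ is superharmonic. The function $f=0$ is subharmonic with $f\le v$, so we can write $v=v_p+v_h$ with a potential $v_p\ge 0$ and a harmonic $v_h\ge 0$. Set $k:=\mathcal{L}v_p=\mathcal{L}v=wv\ge 0$. By Proposition~\ref{prop:characterisation_potentials}, $k\in\mathcal{G}$ and $v_p=Gk$.

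The main step is to show $v_h=0$. Since $\mathcal{L}v_p=wv\ge wv_p$, the function $v_p$ is a supersolution in the sense of Theorem~\ref{lem:characterisaton_criticality}(ii). If $v_p\gneq 0$, uniqueness up to scalars gives $v_p=\lambda v$ for some $\lambda>0$. Then $v_h=(1-\lambda)v$ is harmonic, while $\mathcal{L}v_h=(1-\lambda)wv$. Because $w$ is non-trivial and $v>0$, this forces $\lambda=1$, hence $v_h=0$.

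It remains to exclude $v_p=0$. In that case $v=v_h$ would be harmonic, so $wv=\mathcal{L}v=0$, and $v>0$ gives $w=0$, contradicting non-triviality. So $v_p\gneq0$, and therefore $v=Gk$ with $k=wv\gneq 0$.

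Finally, $w=\mathcal{L}v/v=k/Gk$ follows directly from $k=wv$ and $v=Gk$. The only subtle point is the use of uniqueness in Theorem~\ref{lem:characterisaton_criticality}(ii). It applies to any $u\in\F$ with $u\gneq0$ and $\mathcal{L}u\ge wu$, and $v_p$ qualifies: it lies in $\F$ by Proposition~\ref{prop:characterisation_potentials} and satisfies $\mathcal{L}v_p=wv\ge wv_p$ because $w\ge 0$ and $v\ge v_p$. This is exactly where the non-negativity of $w$ is used.
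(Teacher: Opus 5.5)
Your proof is correct and follows the paper's route: Riesz decomposition of $v$, then uniqueness of positive supersolutions from Theorem~\ref{lem:characterisaton_criticality}~(ii) together with $w\gneq 0$ to eliminate the harmonic part. The only difference is packaging. The paper states this step as Lemma~\ref{lem:critical_weights_subharmonic_function}, applied to $u=v_h$ with the supersolution $Cv-v_h$ for some $C>1$, which is automatically $\gneq 0$. You take $C=1$, i.e.\ you work with $v_p=v-v_h$ directly, and so must exclude $v_p=0$ separately, which you do correctly.
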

	
	The proof is based on the next lemma which is a variant of saying that ground states have minimal growth.

	%\Hmm{Say something about minimal growth here?}
	%and on the Riesz decomposition, see Theorem~\ref{thm:Riesz_decomposition}.
	
	\begin{lemma}
		\label{lem:critical_weights_subharmonic_function}
		Let $w$ be a critical and non-trivial Hardy weight with ground state $v$. If $u \geq 0$ is a subharmonic function such that $u/v$ is bounded, then $u = 0$.
	\end{lemma}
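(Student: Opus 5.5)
Since $u/v$ is bounded, there is $C>0$ with $0\le u\le Cv$. The plan is to turn $Cv-u$ into a non-negative supersolution of $\mathcal{L}-w$. The uniqueness part of the characterization of criticality, Theorem~\ref{lem:characterisaton_criticality}, then forces it to be a multiple of $v$. Finally, subharmonicity of $u$ together with non-triviality of $w$ pins this multiple down so that $u=0$.

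\emph{Step 1: $v'=Cv-u$ is a supersolution.} Set $v'=Cv-u$. Then $v'\ge 0$ and $v'\in\mathcal{F}$, because $u\in\mathcal{F}$ (it is subharmonic) and $v\in\mathcal{F}$. By Theorem~\ref{lem:characterisaton_criticality}, $\mathcal{L}v=wv$, and $\mathcal{L}u\le 0$ by subharmonicity. Hence
\[
\mathcal{L}v' = Cwv-\mathcal{L}u \;\ge\; Cwv \;\ge\; Cwv - wu = wv',
\]
where the second inequality uses $w\ge 0$ and $u\ge 0$. So $v'\ge 0$ satisfies $\mathcal{L}v'\ge wv'$.

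\emph{Step 2: the case $v'=0$.} If $v'=0$, then $u=Cv$, so $\mathcal{L}u=Cwv$. This function is non-negative, and it is non-zero because $w$ is non-trivial and $v>0$. That contradicts $\mathcal{L}u\le 0$, so this case cannot occur.

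\emph{Step 3: the case $v'\gneq 0$.} By the uniqueness in Theorem~\ref{lem:characterisaton_criticality}(ii), $v'=\lambda v$ for some $\lambda>0$. Therefore $u=(C-\lambda)v$, and $u\ge 0$ together with $v>0$ gives $C-\lambda\ge 0$. Subharmonicity then yields
\[
0\ge \mathcal{L}u=(C-\lambda)wv.
\]
Because $wv\gneq 0$, this forces $C-\lambda\le 0$. Hence $C=\lambda$ and $u=0$.

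The only delicate point is the sign bookkeeping in Step 1, specifically the inequality $-wu\le 0$. This is exactly where $w\ge 0$ (a genuine Hardy weight rather than a Hardy type weight) and $u\ge 0$ are used. No obstacle beyond this is expected.
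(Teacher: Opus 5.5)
Your proof is correct and follows the paper's argument. You show that $Cv-u$ is a non-negative supersolution of $\mathcal{L}-w$, use the uniqueness in Theorem~\ref{lem:characterisaton_criticality}~(ii) to get $Cv-u=\lambda v$, and then use $\mathcal{L}u\le 0$ together with $wv\gneq 0$ to force $\lambda=C$. The only cosmetic difference is that the paper skips your Step~2 by choosing $C$ large enough that $Cv-u\gneq 0$ from the outset.
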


	\begin{proof}
		If ${u}/{v}$ is bounded, then there exists $C > 1$ such that $Cv - u\gneq0$. Further, we compute
		\begin{align*}
			(\mathcal{L} - w)(Cv - u) = C \underbrace{(\mathcal{L} - w)v}_{=0} - \underbrace{\mathcal{L} u}_{\leq 0} + \underbrace{wu}_{\geq 0} \geq 0.  
		\end{align*}
		By Theorem~\ref{lem:characterisaton_criticality}~$  (ii) $ this implies that there exists $a \in \R$ such that $Cv - u = av$. Since $u \geq 0$ and $v > 0$, we must have $C \geq a$ and because of
		\begin{align*}
			awv = \mathcal{L} av = \mathcal{L} (Cv - u) = C \mathcal{L} v - \mathcal{L} u \geq Cwv\gneq0,
		\end{align*}
		 we obtain $a \geq C$. We conclude $a = C$ and $u = 0$.
	\end{proof}

	\begin{proof}[Proof of Proposition~\ref{prop:ground_state_potential}]
		Let $w$ be a critical non-trivial Hardy weight with ground state $v$. Then $\mathcal{L} v = wv \gneq 0$ and so $v$ is superharmonic but not harmonic. By the Riesz decomposition, Theorem~\ref{thm:Riesz_decomposition}, there exists a potential $v_p \geq 0$ and a harmonic function $v_h \geq 0$ such that $v = v_p + v_h$. Then ${v_h}/{v} \leq 1$ and so $v_h = 0$ by the previous lemma (applied with $u=v_h$), and $v = v_p$ is a potential. Therefore, we get with Proposition~\ref{prop:characterisation_potentials} the representation $ v=Gk $
		%\begin{align*} 			v(x) = \sum_{y \in X} G(x,y) K(y) 		\end{align*}
		with $k = \mathcal{L} v \gneq 0$, which leads to $ w =\mathcal{L}v/v=k/Gk $.
	%	\begin{align*} 			w(x) = \frac{\mathcal{L} v(x)}{v(x)} = \frac{K(x)}{\displaystyle \sum_{y \in X} G(x,y) K(y)}. 		\end{align*} 
	\end{proof}

\subsection{Characterization of positive critical Hardy type weights}

In this section we will characterize positive critical Hardy type weights. %As before, we assume that $b$ is a connected graph over $(X,m)$. %The two following theorems can be found in the dissertation thesis of the first named author, cf.\@ \cite[Theorem~2.13, Theorem~2.15 and Theorem~2.16]{hake2025optimal} and are slightly extended here to arbitrary discrete measure spaces $(X,m)$ and Laplacians with possibly non-trivial killing term $c$.  

\begin{theorem}
	\label{thm:classification_positive-critical_weights}
	Let $w:X \rightarrow \R$. Then $w$ is a positive critical Hardy-type weight if and only if $w = {\mathcal{L} v}/{v}$ for some $v \in \D_0$, $v > 0$ and $\mathcal{L} v^2 \in \ell^1(X,m)$. 
\end{theorem}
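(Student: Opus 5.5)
We prove both directions through the ground state transform, Proposition~\ref{lem:GST}, combined with the characterization of criticality in Theorem~\ref{lem:characterisaton_criticality}. The key identity, valid for $v>0$ with $w=\mathcal{L}v/v$, is the pointwise formula
\begin{align*}
\mathcal{L}v^2(x)=2v(x)\mathcal{L}v(x)-\frac{1}{m(x)}\sum_{y}b(x,y)(v(x)-v(y))^2-\frac{c(x)}{m(x)}v(x)^2 .
\end{align*}
Summing against $m$ relates $\sum_X m\,\mathcal{L}v^2$ to $2\sum_X m w v^2-\mathcal{Q}(v)-\sum_X c v^2$. Formally, $\sum_X m\,\mathcal{L}v^2$ is the "boundary term" that must vanish, and $\ell^1$-integrability of $\mathcal{L}v^2$ is what makes the rearrangement legitimate.

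\emph{($\Leftarrow$).} Let $v\in\D_0$, $v>0$, $w=\mathcal{L}v/v$ and $\mathcal{L}v^2\in\ell^1(X,m)$.
\begin{enumerate}[(1)]
\item By Proposition~\ref{lem:GST}, $\mathcal{Q}(\varphi)-\sum_X m w\varphi^2=\mathcal{Q}_v(\varphi/v)\ge0$ for all $\varphi\in C_c(X)$, so $w$ is a Hardy type weight.
\item Since $v\in\D_0$, choose $\varphi_n\in C_c(X)$ with $\varphi_n\to v$ in $\|\cdot\|_o$, hence pointwise and with $\mathcal{Q}(\varphi_n)\to\mathcal{Q}(v)$. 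The aim is to show $(\mathcal{Q}-w)(\varphi_n)\to0$; Theorem~\ref{lem:characterisaton_criticality}(iii) then gives criticality with ground state $v$.
\item For this, show $\sum_X m w v^2$ is finite and equals $\mathcal{Q}(v)$. Write $wv^2=v\mathcal{L}v$. The displayed identity gives $2v\mathcal{L}v=\mathcal{L}v^2+(\text{pointwise energy density})$. The energy density is summable because $v\in\D$, and $\mathcal{L}v^2\in\ell^1$, so $\sum_X m\,|w|v^2<\infty$. This is exactly positivity once criticality is known.
\item To get convergence of $\sum_X m w\varphi_n^2$, I would rather choose cutoffs $e_n\le v$ (e.g.\ via normal contractions as in Proposition~\ref{lem:C_0_u_Delta_u_in_ell^1_implies_D_0}) and use dominated convergence with dominating function $|w|v^2\in\ell^1$.
\item Finally, show $\mathcal{Q}(v)=\sum_X m v\mathcal{L}v$ by a Green formula for $v\in\D_0$ with $v\mathcal{L}v\in\ell^1$. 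This is the subtle point: \cite[Lemma~6.8]{KLW} needs a sign or integrability condition. I would derive it by showing $\sum_X m\,\mathcal{L}v^2=0$ for $v^2$ with integrable charge. Approximating $v$ by $C_c$-functions, $\sum_X m\,\mathcal{L}\varphi^2=\sum_X c\,\varphi^2$ for finitely supported $\varphi$, and I would pass to the limit using the $\ell^1$-bound.
\end{enumerate}

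\emph{($\Rightarrow$).} Let $w$ be positive critical with ground state $v$.
\begin{enumerate}[(1)]
\item By Theorem~\ref{lem:characterisaton_criticality}, $w=\mathcal{L}v/v$ and there are $0\le e_n\le v$ in $C_c(X)$ with $e_n\to v$ and $(\mathcal{Q}-w)(e_n)\to0$.
\item Positivity means $\sum_X m|w|v^2<\infty$, so by dominated convergence $\sum_X m w e_n^2\to\sum_X m w v^2$. Hence $\mathcal{Q}(e_n)$ is bounded, and Lemma~\ref{prop:charD_0} gives $v\in\D_0$.
\item Then $\mathcal{L}v^2\in\ell^1$ follows from the displayed identity. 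Indeed, $|v\mathcal{L}v|=|w|v^2\in\ell^1$, the energy-density term is summable since $\mathcal{Q}(v)<\infty$, and $c v^2$ is summable as part of $\mathcal{Q}(v)$.
\end{enumerate}

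\textbf{Main obstacle.} The main difficulty is step (5) of the converse direction, namely the Green formula $\mathcal{Q}(v)=\sum_X m\,w v^2$ for $v\in\D_0$ without any sign on $\mathcal{L}v$. Criticality itself comes from the fact that the ground state transform shows $(\mathcal{Q}-w)(e_n)=\mathcal{Q}_v(e_n/v)$, and $e_n/v\to1$ has $\mathcal{Q}_v$-energy tending to zero; getting there needs the boundary term $\sum_X m\,\mathcal{L}v^2$ to vanish. I would first try truncations $e_n=T_n\circ v$ or $\min(v,\psi_n)$ and control the error via $\mathcal{L}v^2\in\ell^1$.
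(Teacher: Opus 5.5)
Your ($\Rightarrow$) direction is correct and is the paper's route. You take the null sequence $0\le e_n\le v$ from Theorem~\ref{lem:characterisaton_criticality}, use dominated convergence and Lemma~\ref{prop:charD_0} to get $v\in\D_0$, and then use the pointwise identity for $\mathcal{L}v^2$. Your ($\Leftarrow$) direction also has the paper's structure, but there is a genuine gap in step (5). The Green formula $\mathcal{Q}(v)=\sum_X m\,wv^2$ is the only nontrivial point, you leave it open, and the route you sketch for it does not work.

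Knowing $\mathcal{L}v^2\in\ell^1(X,m)$ gives no control over $\mathcal{L}\varphi_n^2$ for the approximants $\varphi_n$. So nothing dominates $\sum_X m\,\mathcal{L}\varphi_n^2$. Moreover, by the same pointwise identity, passing to the limit there is equivalent to passing from $\sum_X m\,\varphi_n\mathcal{L}\varphi_n=\mathcal{Q}(\varphi_n)$ to $\sum_X m\,v\mathcal{L}v$, which is exactly the formula you want, so the argument is circular. Two smaller errors:
\begin{enumerate}[(1)]
\item The boundary term is $\sum_X m\,\mathcal{L}v^2=\sum_X c\,v^2$, not $0$, when $c\neq0$.
\item $T_n\circ v$ is finitely supported only if $v\in C_0(X)$, which is not assumed. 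Instead use \cite[Lemma~6.6]{KLW} (or $e_n=\min(v,\psi_n^+)$ with $\psi_n\to v$ in $\D_0$) to get $e_n\in C_c(X)$ with $0\le e_n\le v$ and $e_n\to v$ in $\D_0$.
\end{enumerate}

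The missing idea is the \emph{bilinear} Green formula with one finitely supported entry, which needs no sign or integrability condition. For $e_n\in C_c(X)$ and $v\in\D\subseteq\F$, \cite[Proposition~1.5]{KLW} gives
\begin{align*}
\mathcal{Q}(e_n,v)=\sum_X m\,e_n\,\mathcal{L}v=\sum_X m\,e_n\,w\,v .
\end{align*}
As $n\to\infty$, the left side tends to $\mathcal{Q}(v)$, since $|\mathcal{Q}(e_n-v,v)|\le\mathcal{Q}(e_n-v)^{1/2}\mathcal{Q}(v)^{1/2}\to0$. The right side tends to $\sum_X m\,wv^2$ by dominated convergence, because $|e_n w v|\le|w|v^2$ and you already showed $\sum_X m\,|w|v^2<\infty$ in step (3). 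Hence $\mathcal{Q}(v)=w(v)$. Together with $\mathcal{Q}(e_n)\to\mathcal{Q}(v)$ and $w(e_n)\to w(v)$, this gives $(\mathcal{Q}-w)(e_n)\to0$ with $e_n(o)\to v(o)>0$. Your steps (2)--(4) then finish the proof exactly as the paper does via Propositions~\ref{prop:v_in_D_0_summabilities} and~\ref{prop:characterisation_positive_criticality}.
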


%For the special situation of Hardy weights we are going to prove Theorem~\ref{thm:classification_positive-critical_weights_Hardy} stated in the introduction. \medskip

We start with a lemma that provides an extension of a Hardy type inequality to a larger space. 

\begin{lemma}
	\label{lem:extension_hardy_inequality}
	Let $w$ satisfy a Hardy-type inequality. Then 
	\begin{align*}
		\D_0 \cap \ell^2(X,\abs{w}m) = \D_0 \cap \ell^2(X, w_-m).
	\end{align*}
	Moreover, for every $\varphi \in \D_0 \cap \ell^2(X, w_- m)$, we have $\mathcal{Q}(\varphi) \geq w(\varphi)$ with equality if and only if $\varphi = 0$ or if there exists a ground state $v$ and a constant $c \in \R$ with $\varphi = cv$.
\end{lemma}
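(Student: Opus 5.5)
Three things need to be proved: the two function spaces coincide, the Hardy-type inequality extends to all of $\D_0 \cap \ell^2(X,w_-m)$, and the equality case is characterized.

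\emph{Step 1: extending the inequality.} Take $\varphi \in \D_0 \cap \ell^2(X,w_-m)$. By \cite[Lemma~6.5]{KLW} there is a sequence $(\varphi_n)$ in $C_c(X)$ with $\varphi_n \to \varphi$ pointwise and $\mathcal{Q}(\varphi_n)\to\mathcal{Q}(\varphi)$. We would like to arrange in addition that $w_-(\varphi_n)\to w_-(\varphi)$. To get this, replace $\varphi_n$ by a cut-off version, for instance $(\varphi_n\wedge|\varphi|)\vee(-|\varphi|)$, which is a normal contraction of $\varphi_n$ relative to $\varphi$. Then dominated convergence with respect to $w_-m$ gives $w_-(\varphi_n)\to w_-(\varphi)$. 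Controlling the energy of these truncations is technical, and I regard it as the main obstacle; if it fails, I would instead truncate $\varphi$ itself by $T_k\circ\varphi$ combined with cut-offs. Once the sequence is in hand, apply Fatou's lemma to the positive part:
\[
w_+(\varphi)\le\liminf_n w_+(\varphi_n)\le \liminf_n\big(\mathcal{Q}(\varphi_n)+w_-(\varphi_n)\big)=\mathcal{Q}(\varphi)+w_-(\varphi)<\infty .
\]
This shows simultaneously that $\varphi\in\ell^2(X,w_+m)$, hence that the two spaces are equal (the inclusion $\subseteq$ is trivial), and that $\mathcal{Q}(\varphi)\ge w(\varphi)$.

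\emph{Step 2: equality case.} Suppose $\mathcal{Q}(\varphi)=w(\varphi)$ with $\varphi\neq0$. Since $\mathcal{Q}(|\varphi|)\le\mathcal{Q}(\varphi)$ and $w(|\varphi|)=w(\varphi)$, the function $|\varphi|$ is also a minimizer, so we may work with $|\varphi|$. The form $\mathcal{Q}-w$ is nonnegative on the space from Step 1, so minimizers satisfy the Euler--Lagrange equation. Concretely, for every $\psi\in C_c(X)$ and $t\in\R$, $(\mathcal{Q}-w)(\varphi+t\psi)\ge0$ while equality holds at $t=0$, so $(\mathcal{Q}-w)(\varphi,\psi)=0$. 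Green's formula \cite[Proposition~1.5]{KLW} then yields $\mathcal{L}\varphi=w\varphi$ pointwise, with $\varphi\in\F$ following from finite energy.

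Applying the same argument to $|\varphi|\gneq0$ gives $\mathcal{L}|\varphi|=w|\varphi|$. To conclude criticality, take the approximating sequence for $|\varphi|$ from Step 1 and normalize it at a point $o$ with $|\varphi|(o)\neq0$. Since $(\mathcal{Q}-w)(\varphi_n)\to(\mathcal{Q}-w)(|\varphi|)=0$, condition (iii) of Theorem~\ref{lem:characterisaton_criticality} holds, so $w$ is critical. Its ground state $v$ is unique up to scalars and positive, which forces $|\varphi|=cv$ and hence $\varphi\in\{\pm cv\}$ on the support.

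To handle the sign: from $\mathcal{L}\varphi=w\varphi$, the functions $\varphi_\pm$ satisfy $\mathcal{L}\varphi_\pm\ge w\varphi_\pm$. Uniqueness in (ii) then forces each of $\varphi_\pm$ to be either $0$ or a multiple of $v>0$, so $\varphi$ has one sign and $\varphi=cv$.

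\emph{Converse.} If $\varphi=cv$ lies in the space, then $\mathcal{Q}(v)=w(v)$. This follows from the null-sequence $0\le e_n\le v$, Fatou's lemma applied to $\mathcal{Q}_v(e_n/v)\to0$ via the ground state transform, together with the inequality already proved in Step 1.
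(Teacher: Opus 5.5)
Step~1 is the paper's argument. The ``main obstacle'' you flag is exactly \cite[Lemma~6.6]{KLW}, which the paper simply cites: for $\varphi\in\D_0$ there are $\varphi_n\in C_c(X)$ with $\varphi_n\to\varphi$ pointwise, $|\varphi_n|\le|\varphi|$ and $\mathcal{Q}(\varphi_n)\to\mathcal{Q}(\varphi)$.

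Your clamp would also do the job. Set $\psi_n=(\varphi_n\wedge|\varphi|)\vee(-|\varphi|)$. Then $\psi_n\in C_c(X)$, $|\psi_n|\le|\varphi_n|$, and $|\psi_n(x)-\psi_n(y)|\le|\varphi_n(x)-\varphi_n(y)|+|\varphi(x)-\varphi(y)|$. So if $\varphi_n\to\varphi$ in $\D_0$, generalized dominated convergence gives $\mathcal{Q}(\psi_n-\varphi)\to0$. You do need this convergence, not merely bounded energy. Bounded energy gives $w_+(\varphi)<\infty$, but not $\mathcal{Q}(\varphi)\ge w(\varphi)$, and not the limit $(\mathcal{Q}-w)(\varphi_n)\to(\mathcal{Q}-w)(\varphi)$ that your criticality step uses.

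In the equality case you obtain criticality exactly as the paper does, but then take a longer and genuinely different route. The paper applies the ground state transform to the approximating sequence itself: $\mathcal{Q}_v(\varphi_n/v)=(\mathcal{Q}-w)(\varphi_n)\to0$. Fatou then gives $\mathcal{Q}_v(\varphi/v)=0$, so $\varphi/v$ is constant by connectedness. This handles a signed $\varphi$ in one stroke, with no Euler--Lagrange equation, no passage to $|\varphi|$ and no sign analysis.

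Your route (Euler--Lagrange, $|\varphi|$ as a second minimizer, uniqueness in Theorem~\ref{lem:characterisaton_criticality}~(ii)) does work, but one justification is wrong. The inequality $\mathcal{L}\varphi_\pm\ge w\varphi_\pm$ does not follow from $\mathcal{L}\varphi=w\varphi$; Kato's inequality gives the opposite direction. For instance, at $x$ with $\varphi(x)>0$ one has $m(x)(\mathcal{L}\varphi_+-\mathcal{L}\varphi)(x)=-\sum_y b(x,y)\varphi_-(y)\le0$. The fix is the equation for $|\varphi|$ that you already derived. Since $\varphi_\pm=(|\varphi|\pm\varphi)/2$, you get $\mathcal{L}\varphi_\pm=w\varphi_\pm$ exactly, and then (ii) applies.

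In the converse, Fatou applied to $\mathcal{Q}_v(e_n/v)$ only returns the vacuous identity $\mathcal{Q}_v(1)=0$. What you need is Fatou for $\mathcal{Q}$ itself:
\[
\mathcal{Q}(v)\le\liminf_{n\to\infty}\mathcal{Q}(e_n)=\liminf_{n\to\infty}\Big(\mathcal{Q}_v\big(\tfrac{e_n}{v}\big)+w(e_n)\Big)=w(v).
\]
Here $w(e_n)\to w(v)$ by dominated convergence, using $0\le e_n\le v$ and $v\in\ell^2(X,|w|m)$ from Step~1. Combined with the inequality from Step~1, this yields $\mathcal{Q}(v)=w(v)$. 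This is the argument behind the paper's brief remark on this direction.
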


\begin{remark}
	If $w$ is a Hardy weight then $w_- = 0$ and so $\ell^2(X, w_-m) = C(X)$ and the lemma implies $\D_0 \subseteq \ell^2(X, w)$. Consequently, we can extend the Hardy inequality to all of $\D_0$. 
\end{remark}

\begin{proof}
	$ \subseteq $: Clearly $\ell^2(X,\abs{w}m) \subseteq \ell^2(X,w_-m)$ which gives the desired inclusion.
	%and so
	%\begin{align*} 		\D_0 \cap \ell^2(X,\abs{w}m) \subseteq \D_0 \cap \ell^2(X, w_-m). 	\end{align*}
	$ \supseteq $: Let $\varphi \in \D_0 \cap \ell^2(X, w_-m)$. Then, there exists a sequence $(\varphi_n)$ in $C_c(X)$ such that $\varphi_n \to \varphi$ pointwise, $\abs{\varphi_n} \leq \abs{\varphi}$ for all $n \in \N$ and such that $\mathcal{Q}(\varphi_n) \to \mathcal{Q}(\varphi)$, see \cite[Lemma~6.6]{KLW}.  
	Since $w$ satisfies a Hardy-type inequality, we have that
	\begin{align*}
		w_+(\varphi_n) \leq \mathcal{Q}(\varphi_n) +  w_-(\varphi_n)
	\end{align*}
	for all $n \in \N$. From the dominated convergence theorem, we get that $w_-(\varphi_n) \to w_-(\varphi)$. 
	Fatou's lemma implies that
	\begin{align*}
		w_+(\varphi) \leq \liminf_{n \to \infty} w_+(\varphi_n) \leq \liminf_{n \to \infty} \left( \mathcal{Q}(\varphi_n) +  w_-(\varphi_n) \right) = \mathcal{Q}(\varphi) + w_-(\varphi) < \infty.
	\end{align*}
	This yields that $\varphi \in \ell^2(X, w_+ m)$ and so $\varphi \in \D_0 \cap \ell^2(X,\abs{w}m)$ and
	\begin{align*}
		w(\varphi) = w_+(\varphi) - w_-(\varphi) \leq \mathcal{Q}(\varphi).
	\end{align*}
	Furthermore, with $w_+(\varphi) < \infty$, the dominated convergence theorem implies $w_+(\varphi) = \lim_{n \to \infty} w_+(\varphi_n)$, and so $(\mathcal{Q}-w)(\varphi) = \lim_{n \to \infty} (\mathcal{Q}-w)(\varphi_n)$. If $\varphi$ is a ground state, then $(\mathcal{Q}-w)(\varphi_n) \to 0$, see Theorem~\ref{lem:characterisaton_criticality}, and so then $\mathcal{Q}(\varphi) = w(\varphi)$. This immediately implies that this equality also holds if $\varphi = cv$ for any ground state $v$ and $c \in \R$. 
	
	Finally, assume that $\mathcal{Q}(\varphi) = w(\varphi)$ and $\varphi \neq 0$. %We have to show that $\varphi$ is necessarily of the form $\varphi = cv$ for some ground state $v$ and a constant $c \in \R$. 
	Pick $o \in X$ with $\varphi(o) \neq 0$. Then $\varphi_n(o) \to \varphi(o) \neq 0$ and $(\mathcal{Q}-w)(\varphi_n) \to (\mathcal{Q}-w)(\varphi) = 0$ and so $(\varphi_n)$ satisfies all the conditions from Theorem~\ref{lem:characterisaton_criticality} $(iii)$. Therefore $w$ must be critical and there exists a ground state $v \in \F$. By the ground state transform, cf.\@  Lemma~\ref{lem:GST},  
	\begin{align*} % \label{eqn:GST}
		\mathcal{Q}_v\left(\frac{\varphi_n}{v}\right) = (\mathcal{Q}-w)(\varphi_n),
	\end{align*}
	and this converges to $0$ as $n \to \infty$. 
	This implies that $(\frac{\varphi_n}{v})$ converges pointwise to a constant function. Therefore, $\varphi = cv$ for some $c \in \R$ by connectedness of the graph.
\end{proof}

We arrive at a first characterization of positive criticality.

\begin{proposition}
	\label{prop:characterisation_positive_criticality}
	Let $w$ satisfy a Hardy-type inequality. Then the following assertions are equivalent:
	\begin{enumerate}[(i)]
		\item $w$ is positive-critical.
		\item $w$ is critical and its ground state is in $\D_0 \cap \ell^2(X,w_-m)$. 
		\item There exists a non-trivial $\varphi \in \D_0 \cap \ell^2(X, \abs{w}m)$ with $Q(\varphi) = w(\varphi)$.
	\end{enumerate}
\end{proposition}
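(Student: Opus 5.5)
The plan is to go around the cycle (i) $\Rightarrow$ (ii) $\Rightarrow$ (iii) $\Rightarrow$ (i), with Lemma~\ref{lem:extension_hardy_inequality} doing almost all the work.

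For (i) $\Rightarrow$ (ii), let $v$ be the ground state of the critical weight $w$. Positive criticality means $v\in\ell^2(X,|w|m)$, so in particular $v\in\ell^2(X,w_-m)$. It remains to show $v\in\D_0$. By Theorem~\ref{lem:characterisaton_criticality}~(iv) there is a sequence $(e_n)$ in $C_c(X)$ with $0\le e_n\le v$, $e_n\to v$ pointwise and $(\mathcal{Q}-w)(e_n)\to 0$. I would then bound
\[
\mathcal{Q}(e_n)=(\mathcal{Q}-w)(e_n)+w(e_n)\le (\mathcal{Q}-w)(e_n)+\sum_X m|w|e_n^2\le (\mathcal{Q}-w)(e_n)+\sum_X m|w|v^2 .
\]
The right-hand side is bounded in $n$, so $\sup_n\mathcal{Q}(e_n)<\infty$, and Lemma~\ref{prop:charD_0} gives $v\in\D_0$.

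For (ii) $\Rightarrow$ (iii), take $\varphi=v$. Lemma~\ref{lem:extension_hardy_inequality} gives $\D_0\cap\ell^2(X,w_-m)=\D_0\cap\ell^2(X,|w|m)$, so $v$ lies in the latter space. The equality clause of that lemma then yields $\mathcal{Q}(v)=w(v)$, and $v$ is non-trivial since it is strictly positive.

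For (iii) $\Rightarrow$ (i), let $\varphi\neq 0$ be in $\D_0\cap\ell^2(X,|w|m)\subseteq\D_0\cap\ell^2(X,w_-m)$ with $\mathcal{Q}(\varphi)=w(\varphi)$. The equality case of Lemma~\ref{lem:extension_hardy_inequality} first shows that $w$ is critical; the proof of that lemma constructs a null-sequence via Theorem~\ref{lem:characterisaton_criticality}~(iii). It also gives $\varphi=cv$ for the ground state $v$ and some $c\neq 0$. Hence $v=\varphi/c\in\ell^2(X,|w|m)$, which is exactly positive criticality.

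The only step requiring some care is the uniform energy bound in (i) $\Rightarrow$ (ii). There one has to use the domination $0\le e_n\le v$ from Theorem~\ref{lem:characterisaton_criticality}, together with $v\in\ell^2(X,|w|m)$, to control $w(e_n)$ when $w$ may change sign. Bounding by $|w|$ as above handles this. Everything else is a direct appeal to Lemma~\ref{lem:extension_hardy_inequality}.
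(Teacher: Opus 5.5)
Your proof is correct and follows the paper's argument: (i)$\Rightarrow$(ii) uses the null-sequence $0\le e_n\le v$ together with Lemma~\ref{prop:charD_0}, and the remaining implications come directly from Lemma~\ref{lem:extension_hardy_inequality}. The only cosmetic difference is that you bound $w(e_n)\le\sum_X m|w|v^2$ to get $\sup_n\mathcal{Q}(e_n)<\infty$, whereas the paper uses dominated convergence to show that $\mathcal{Q}(e_n)$ converges; boundedness is all that Lemma~\ref{prop:charD_0} requires.
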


\begin{remark}
	% For $c=0$ and $m=1$, this proposition can also be found in \cite[Proposition~2.18]{hake2025optimal}.
	Again, for a Hardy weight, we have $\D_0 \cap \ell^2(X, w_-m) = \D_0$. Consequently, a critical Hardy weight is positive-critical if and only if the ground state is in $\D_0$. This equivalence can also be found in \cite[Theorem~6.2]{KPP20}.
\end{remark}

\begin{proof}
	The equivalence of $(ii)\Longleftrightarrow(iii)$ and the implication $(ii) \Longrightarrow (i)$ follow immediately from the previous lemma and the definition of positive criticality. To prove $(i) \Longrightarrow (ii)$, let $w$ be positive critical with ground state $v$. By the definition of positive criticality, $w$ is critical and the ground state $ v $ is in $\ell^2(X, \abs{w}m) \subseteq \ell^2(X, w_-m)$. Hence, it suffices to prove that $v \in \D_0$. For that purpose, let $(e_n)$ be a sequence in $C_c(X)$ that converges pointwise to $v$, satisfies $0 \leq e_n \leq v$ and $\lim_{n \to \infty} (\mathcal{Q}-w)(e_n) = 0$. Such a sequence exists by Theorem~\ref{lem:characterisaton_criticality}. Then, by the dominated convergence theorem, we have $w(e_n) \to w(v)$ since $v \in \ell^2(X, \abs{w}m)$. This implies that $(\mathcal{Q}(e_n))$ must be convergent as well and, thus, $v \in \D_0$ by Lemma~\ref{prop:charD_0}.
\end{proof}

The following proposition will give further insight into the integrability properties of a ground state. %For $m=1$ and $c=0$, see also \cite[Proposition~2.20]{hake2025optimal}.

\begin{proposition}
	\label{prop:v_in_D_0_summabilities}
	Let $v \in \D_0$, $v > 0$ and $w = {\mathcal{L} v}/{v}$. Then the following assertions are equivalent:
	\begin{enumerate}[(i)]
		\item $\mathcal{L} v^2 \in \ell^1(X,m)$.
		\item $v \in \ell^2(X, \abs{w}m)$.
		\item $v \in \ell^2(X, w_-m)$.
	\end{enumerate}
	If these  conditions hold then $\mathcal{Q}(v) = w(v)$ and $$\sum_{ X}m \mathcal{L} v^2  = \sum_{ X} cv^2.$$
\end{proposition}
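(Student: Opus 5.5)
The plan is to compute $\mathcal{L}v^2$ pointwise and sum it against $m$, using that $v\in\D_0$ gives a Green's formula. A direct computation from the definition of $\mathcal{L}$ and the identity $v(x)^2-v(y)^2 = 2v(x)(v(x)-v(y)) - (v(x)-v(y))^2$ gives
\begin{align*}
m(x)\mathcal{L}v^2(x) = 2 m(x) v(x)\mathcal{L}v(x) - \sum_{y} b(x,y)(v(x)-v(y))^2 - c(x)v(x)^2 .
\end{align*}
All sums converge because $v\in\mathcal{F}$ and $\deg(x)<\infty$, so $v^2\in\mathcal{F}$ as well. Since $v\mathcal{L}v = wv^2$, this can be rewritten as
\begin{align*}
m\mathcal{L}v^2 = 2 m w v^2 - \Gamma(v) - c v^2, \qquad \Gamma(v)(x) := \sum_{y} b(x,y)(v(x)-v(y))^2 .
\end{align*}
Because $v\in\D_0\subseteq\D$, the function $\Gamma(v)+cv^2$ is a nonnegative function in $\ell^1(X)$ with total sum $2\mathcal{Q}(v)-\sum_X cv^2\le 2\mathcal{Q}(v)$. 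Hence $\mathcal{L}v^2\in\ell^1(X,m)$ if and only if $wv^2\in\ell^1(X,m)$, which is exactly $v\in\ell^2(X,|w|m)$. This proves (i)$\Leftrightarrow$(ii).

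The implication (ii)$\Rightarrow$(iii) is trivial. For (iii)$\Rightarrow$(ii) I would invoke Lemma~\ref{lem:extension_hardy_inequality}. This needs $w$ to be a Hardy type weight, and that follows from the ground state transform (Proposition~\ref{lem:GST}): since $v>0$, $\mathcal{Q}(\varphi)-w(\varphi)=\mathcal{Q}_v(\varphi/v)\ge 0$ for $\varphi\in C_c(X)$. The lemma then gives $\D_0\cap\ell^2(X,w_-m)=\D_0\cap\ell^2(X,|w|m)$, so $v\in\ell^2(X,w_-m)$ implies $v\in\ell^2(X,|w|m)$.

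For the final identities, assume the conditions hold. I would use Green's formula for $v\in\D_0$ in the form $\mathcal{Q}(v)=\sum_X m\, v\mathcal{L}v$. This needs the integrability $v\mathcal{L}v\in\ell^1(X,m)$, which is exactly (ii). For this step I would cite \cite[Lemma~6.8]{KLW}, or otherwise approximate $v$ by the sequence from \cite[Lemma~6.6]{KLW} and pass to the limit by dominated convergence on $wv^2$. This yields $\mathcal{Q}(v)=w(v)$. Summing the pointwise identity over $X$ (all terms are absolutely summable) then gives
\begin{align*}
\sum_X m\mathcal{L}v^2 = 2w(v) - \Big(2\mathcal{Q}(v) - \sum_X cv^2\Big) - \sum_X cv^2 + 2\sum_X c v^2 ,
\end{align*}
after carefully accounting for the two appearances of the killing term. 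I expect the only delicate point to be this bookkeeping of the $c$-contributions, namely how $c v^2$ enters $\mathcal{Q}$ versus $\mathcal{L}v^2$. Since $\mathcal{L}v^2$ contains $+c v^2/m$, the correct pointwise identity reads
\begin{align*}
m\mathcal{L}v^2 = 2mv\mathcal{L}v - \Gamma(v) - c v^2 .
\end{align*}
Summing this and using $\sum_X\Gamma(v)=2\mathcal{Q}(v)-2\sum_X cv^2$ gives
\begin{align*}
\sum_X m\mathcal{L}v^2 = 2w(v)-2\mathcal{Q}(v)+\sum_X cv^2 = \sum_X cv^2,
\end{align*}
as claimed. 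The main obstacle is justifying Green's formula $\mathcal{Q}(v)=\sum_X m\,v\mathcal{L}v$ for $v\in\D_0$ when $\mathcal{L}v$ has no fixed sign.
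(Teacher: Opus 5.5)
Your proposal is correct and essentially the paper's proof: the same pointwise identity for $m\mathcal{L}v^2$ together with $v\in\D$ gives (i)$\Leftrightarrow$(ii); the ground state transform together with Lemma~\ref{lem:extension_hardy_inequality} gives (ii)$\Leftrightarrow$(iii); and the approximating sequence from \cite[Lemma~6.6]{KLW} with dominated convergence on $wv^2$ gives $\mathcal{Q}(v)=w(v)$, which also settles the Green's formula step you flag as the main obstacle. One small fix: $v^2\in\mathcal{F}$ follows from $v\in\D$ (finiteness of $\sum_y b(x,y)(v(x)-v(y))^2$ and $\deg(x)<\infty$), not from $v\in\mathcal{F}$ alone, which fails on non-locally finite graphs; also, your first summed display miscounts the $c$-terms, though the corrected one you give afterwards is right.
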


\begin{proof}
	Let $v \in \D_0$, $v > 0$ and $w = {\mathcal{L} v}/{v}$. From the ground state transform,  Lemma~\ref{lem:GST}, we know that $w$ satisfies a Hardy-type inequality. Moreover, we have 
		$ \D_0 \cap \ell^2(X,\abs{w}m) = \D_0 \cap \ell^2(X, w_-m) $
	by Lemma~\ref{lem:extension_hardy_inequality}. This immediately implies $(ii) \Longleftrightarrow (iii)$. Note that $v,v^2 \in \F$ since $v \in \D_0 \subseteq \D$. A straight forward computation, cf.~\cite[Lemma 2.1]{KPP18} yields  
	\begin{align*}
		\mathcal{L} v^2(x)m(x) = 2 v(x) \mathcal{L} v(x) m(x) - \sum_{y \in X} b(x,y) (v(x)-v(y))^2- c(x)v^2(x)
	\end{align*}
	for all $x \in X$. The fact  $v \in \D$ %means
	%\begin{align*}		\sum_{x \in X} \sum_{y \in X} b(x,y) (v(x)-v(y))^2 + 2 \sum_{x \in X} c(x)v(x)^2 = 2\mathcal{Q}(v)  < \infty. 	\end{align*} This 
	implies 
	that $\sum_{ X} m\mathcal{L} v^2$ is absolutely convergent if and only if $$ \sum_{ X} m v \mathcal{L} v = \sum_{X} mwv^{2}$$ is. This is exactly the equivalence $(i) \Longleftrightarrow (ii)$.
	
	Now, let us assume that the equivalence holds.
	We calculate using $ w=\mathcal{L}v/v $ and a sequence $(v_n)$ in $C_{c}(X) $ with $|v_n| \leq v$ which converges to $ v $ with respect to $ \mathcal{Q} $ (see \cite[Lemma~6.6]{KLW}) 
	\begin{align*}
		\sum_{X}m wv^{2} = \sum_{ X}m v \mathcal{L} v=\lim_{n\to\infty} \sum_{ X}m v_{n} \mathcal{L} v =\lim_{n\to\infty}\mathcal{Q}(v_{n},v) = \mathcal{Q}(v),
	\end{align*}
where the  second equality follows by dominated convergence since $ \mathcal{L}v =wv$ and $ v\in \ell^{2}(X,|w|m) $ and the third inequality follows from Green's formula  \cite[Proposition~1.5]{KLW}. Furthermore, by the  considerations above, we have
	\begin{align*}
		\sum_{ X}m \mathcal{L} v^2 &= 2 \sum_{X}m v \mathcal{L} v- 2\mathcal{Q}(v)+\sum_{X}cv^{2}= \sum_{X}cv^{2}. 
	\end{align*}
	This finishes the proof.
\end{proof}

\begin{corollary}
	\label{cor:superharmonic_implies_summability_condition}
	If $v \in \D_0$, $v > 0$ is superharmonic, then $\mathcal{L} v^2 \in \ell^1(X,m)$ and $$\displaystyle \sum_{  X}m \mathcal{L} v^2  = \sum_{ X} c v^2 .$$
\end{corollary}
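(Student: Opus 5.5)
The claim is an immediate consequence of Proposition~\ref{prop:v_in_D_0_summabilities}. The plan is to verify its condition (iii), namely $v \in \ell^2(X,w_-m)$ for $w = \mathcal{L}v/v$.

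Since $v$ is superharmonic and strictly positive, $w = \mathcal{L}v/v \geq 0$ pointwise. Hence $w_- = 0$, and condition (iii) holds trivially, because $\sum_X m\, w_- v^2 = 0 < \infty$. The hypotheses $v\in\D_0$ and $v>0$ are exactly those of Proposition~\ref{prop:v_in_D_0_summabilities}, so its equivalence (iii) $\Longleftrightarrow$ (i) gives $\mathcal{L} v^2 \in \ell^1(X,m)$.

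The identity $\sum_X m\,\mathcal{L}v^2 = \sum_X c v^2$ is then the final assertion of the same proposition, which holds whenever the equivalent conditions are satisfied. This also yields finiteness of $\sum_X c v^2$, consistent with $\sum_X c v^2 \le \mathcal{Q}(v) < \infty$ for $v \in \D_0$.

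There is no real obstacle. The only point to check is that superharmonicity together with strict positivity forces $w\ge 0$, which is immediate from the definition $w=\mathcal{L}v/v$.
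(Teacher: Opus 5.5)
Your proposal is correct and is the paper's argument: strict positivity and superharmonicity give $w=\mathcal{L}v/v\ge 0$, so $w_-=0$. Hence condition (iii) of Proposition~\ref{prop:v_in_D_0_summabilities} holds trivially, and both the summability of $\mathcal{L}v^2$ and the identity $\sum_X m\,\mathcal{L}v^2=\sum_X cv^2$ follow from that proposition.
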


\begin{proof}
	Since $v > 0$ is superharmonic,  the condition $(iii)$ in Proposition~\ref{prop:v_in_D_0_summabilities} is trivially satisfied for $w= {\mathcal{L} v}/{v}$,
	because $w_- = 0$. 
\end{proof}

We have now everything in place for  the proof the characterization of positive critical Hardy-type weights in Theorem~\ref{thm:classification_positive-critical_weights}.

\begin{proof}[Proof of Theorem~\ref{thm:classification_positive-critical_weights}]
	Since $w$ is a positive critical Hardy-type weight,  by Proposition~\ref{prop:characterisation_positive_criticality} its ground state $v$ is in $\D_0 \cap \ell^2(X,w_-)$. By Proposition~\ref{prop:v_in_D_0_summabilities}, this gives  $\mathcal{L} v^2 \in \ell^1(X,m)$ and the ground state is strictly positive. 
	
	For the converse, let $v \in \D_0$, $v > 0$, $\mathcal{L} v^2 \in \ell^1(X,m)$ and $w = {\mathcal{L} v}/{v}$. Then $w$ satisfies a Hardy-type inequality by the ground state transform,  Lemma~\ref{lem:GST}. Further, we infer from Proposition~\ref{prop:v_in_D_0_summabilities}  that $v \in \D_0 \cap \ell^2(X, \abs{w}m)$ and $\mathcal{Q}(v) = w(v)$ and so $w$ is positive critical by Proposition~\ref{prop:characterisation_positive_criticality}.
\end{proof}

\begin{proof}[Proof of Theorem~\ref{thm:classification_positive-critical_weights_Hardy}]
Given a critical Hardy weight $ w $ with ground state,	it follows from the characterization of critical Hardy weights, Theorem~\ref{lem:characterisaton_criticality}~$(iv)$, that $w=\mathcal{L} v / v$. Moreover, if $w$ is positive critical, then $v \in \D_0$ by Proposition~\ref{prop:characterisation_positive_criticality} and $\mathcal{L} v^2 \in \ell^1(X,m)$ by Proposition~\ref{prop:v_in_D_0_summabilities}. This shows $(i)\Longrightarrow(ii)$. The converse is immediate from Proposition~\ref{prop:v_in_D_0_summabilities} and Theorem~\ref{lem:characterisaton_criticality}.
	Since $\ell^2(X,w_{-}m) =C(X)$,
	the equivalence $(ii) \Longleftrightarrow (iii)$ follows from Proposition~\ref{prop:v_in_D_0_summabilities}. As for $(iii) \Longrightarrow (iv)$, note that we observe that a superharmonic $v \in \D_0$ is necessarily a potential with $\mathcal{L} v \in \mathcal{G}_2$
	by Proposition~\ref{prop:equivalence_P_D_0} which is exactly of the form $v = Gk$ with $k = \mathcal{L} v \gneq 0$. For the converse $(iv) \Longrightarrow (iii)$, the very same Proposition~\ref{prop:equivalence_P_D_0} yields $v=Gk \in \D_0$ and $\mathcal{L} v = k \gneq 0$. Clearly, $v$ is strictly positive. 
\end{proof}

%INSERT LATER:
%A Hardy type weight $w$ is called \emph{critical} if for every Hardy weight $w'\ge w$ we have $w'=w$. A critical Hardy type weight $ w $ is called \emph{positive critical} if  there exists a  function $ v\in \mathcal{D}_0 $ such that $(\mathcal{\mathcal{Q}}-w)(v)=0 $, i.e., the Hardy inequality allows for a minimizer in $\mathcal{D}_0$.

\section{Fractional  Laplacians on graphs} \label{sec:fractional}

In this section, we define the fractional Laplacian over a connected  graph $(b,c)$ over $(X,m)$. Further, we prove the characterization of positive criticality stated in the introduction as Theorem~\ref{t:poscritical}.

\subsection{The fractional Laplacian}
Given the self-adjoint operator $ L $ associated to the closure $ Q$ of the restriction of $ \mathcal{Q} $ to $ C_{c}(X) $ on $ \ell^{2} (X,m)$, by the spectral theorem, the fractional Laplacian $L^\sigma$, $\sigma\in (0,1)$, can be represented via the semigroup for $f\in D(L^\sigma)$ as
\begin{align*}
    L^{\sigma}f=\frac{1}{|\Gamma(-\sigma)|}\int_{0}^{\infty}(I-e^{-tL})f\frac{d t}{t^{1+\sigma}}.
\end{align*}
This operator comes with a quadratic form $Q^\sigma$ defined on $D(Q^\sigma)=D(L^{\sigma/2})$ by $Q^\sigma(f)=\langle L^{\sigma/2}f,L^{\sigma/2}f\rangle$.

We show below that $L^\sigma$ is associated with a connected graph $(b_\sigma,c_\sigma)$ over $(X,m)$ for $ \sigma\in (0,1) $. 
Since $L$ is the generator of a Markovian semigroup, cf. \cite[Chapter~1]{KLW}, the semigroup extends to $ \ell^{\infty}(X) $. We  define the killing term
\[c_\sigma(x)=\frac{1}{|\Gamma(-\sigma)|}\int_0^\infty m(x)\bigl(1-q_t(x)\bigr)\,
\frac{dt}{t^{1+\sigma}}, \]
for $ x\in X $, 
where 
$$q_t(x) = e^{-tL}1(x) = {\sum_{y\in X} p_t(x,y)m( y)},\quad
p_t(x,y) = m(y)^{-1} e^{-tL}1_y(x).$$
 Since, $0\le e^{-tL}1\leq 1 $, the function $ q_t $ takes values in $[0,1]$ and, therefore, $c_\sigma\ge 0$. Furthermore, we define the edge weights 
 $b_\sigma:X\times X\to [0,\infty] $  by $b_\sigma(x,x)=0$  and otherwise
\begin{align*}% \label{eqn:balpha}
 b_{\sigma}(x,y)
 %&=\frac{1}{|\Gamma(-\al)|}\int_{0}^{\infty}m(x)e^{-tL}1_y(x)\frac{\d t}{t^{1+\al}} \nonumber \\
 &=\frac{1}{|\Gamma(-\sigma)|}\int_{0}^{\infty}m(x)m(y)p_t(x,y)\frac{\d t}{t^{1+\sigma}}, 
\end{align*}
 where  strict positivity of $ p_t $, \cite[Theorem~1.26]{KLW}, implies the strict positivity of $ b_{\sigma} $ for all $ x,y\in X $ whenever $ b_{\sigma}(x,y) $ is finite. Note that $b_\sigma$ is symmetric since $p_t$ is symmetric. The following theorem is a generalization of \cite[Theorem 2.1]{ZLY} to the setting of general graphs which are allowed to be non-locally finite or stochastically incomplete. Recall that a quadratic form $Q$ is called a {\em Dirichlet form} if $c \circ f \in D(Q)$ and $Q(c \circ f ) \leq Q(f)$ for all normal contractions $c:\R \to \R$. For graphs, a Dirichlet form is {\em regular} if $C_c(X)$ is dense in $D(Q)$. For more background information on (regular) Dirichlet forms, we refer to \cite[Chapter~1]{KLW}.

\begin{theorem}\label{t:fractionalgraph}
    Let $(b,c)$ be a connected graph over $ (X,m) $ and $\sigma\in (0,1)$.  Then,  $  D(Q)\subseteq  D(Q^{\sigma}) $ and $  D(L)\subseteq  D(L^{\sigma}) $, and we have for $f\in D(Q^{\sigma})$
    \begin{align*}
        Q^{\sigma}(f)=\frac{1}{2}\sum_{x,y\in X}b_\s(x,y)(f(x)-f(y))^2+\sum_{x\in X} c_\s(x) f(x)^2.
    \end{align*}
In particular, $ Q^{\sigma} $ is a regular Dirichlet form and $(b_\sigma,c_\sigma)$ is a graph over $ (X,m) $ and $L^\sigma$ is a restriction of
    \begin{align*}
        \L^{\sigma}f(x)=\frac{1}{m(x)}\sum_{y\in X}b_{\sigma}(x,y)(f(x)-f(y))+\frac{c_\sigma(x)}{m(x)} f(x)
    \end{align*}
    on 
    \begin{align*}
        \F^\sigma=\{f:X\to \R\mid \sum_{y\in X}b_\sigma(x,y)|f(y)|<\infty\mbox{ for all }x\in X\}.  
    \end{align*}
\end{theorem}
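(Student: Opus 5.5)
The plan is to derive the formula for $Q^\sigma$ first on a convenient core and then extend it by closedness. Start from the spectral representation
\[
Q^\sigma(f)=\frac{1}{|\Gamma(-\sigma)|}\int_0^\infty \langle (I-e^{-tL})f,f\rangle \frac{\d t}{t^{1+\sigma}}.
\]
This holds for $f\in D(Q^\sigma)$ because $\lambda^\sigma = |\Gamma(-\sigma)|^{-1}\int_0^\infty (1-e^{-t\lambda})t^{-1-\sigma}\d t$ for $\lambda\ge0$, together with Tonelli applied to the spectral measure of $f$. The same scalar identity, combined with $\lambda^\sigma\le 1+\lambda$, gives $D(Q)=D(L^{1/2})\subseteq D(L^{\sigma/2})=D(Q^\sigma)$ and $D(L)\subseteq D(L^\sigma)$.

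Next, fix $t>0$ and use the Markovian kernel $p_t$ with $e^{-tL}f(x)=\sum_y p_t(x,y)m(y)f(y)$ to obtain the standard pointwise identity
\[
\langle (I-e^{-tL})f,f\rangle = \frac12\sum_{x,y}m(x)m(y)p_t(x,y)(f(x)-f(y))^2+\sum_x m(x)(1-q_t(x))f(x)^2 .
\]
It is valid for $f\in\ell^2(X,m)$ by symmetry of $p_t$, and all terms are nonnegative since $q_t\le1$. Integrate this against $t^{-1-\sigma}\d t/|\Gamma(-\sigma)|$ and interchange integral and sums by Tonelli. This yields exactly $\frac12\sum b_\sigma (f(x)-f(y))^2+\sum c_\sigma f^2$ for every $f\in D(Q^\sigma)$, and it also shows that the right-hand side is $+\infty$ whenever the left-hand side is.

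From the formula we read off the graph properties. Take $f=1_x\in C_c(X)\subseteq D(Q)\subseteq D(Q^\sigma)$; then $Q^\sigma(1_x)=\sum_y b_\sigma(x,y)+c_\sigma(x)<\infty$. Hence $b_\sigma$ is finite everywhere, so strictly positive off the diagonal (the graph is connected, in fact complete), and $\deg_\sigma(x)<\infty$. Thus $(b_\sigma,c_\sigma)$ is a graph over $(X,m)$.

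The Dirichlet property comes from the pointwise form of $Q^\sigma$, since normal contractions decrease each term; equivalently it is inherited from the subordinate Markov semigroup.

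The main obstacle is regularity, i.e.\ that $C_c(X)$ is dense in $D(Q^\sigma)$ with respect to the form norm. I would use that $C_c(X)$ is a form core for $Q$ (by the definition of $L$), that $D(Q)$ is a core for $Q^\sigma$ (since $D(L^{1/2})$ is a core for $L^{\sigma/2}$ by the spectral theorem), and that $Q^\sigma(g)+\|g\|^2\le C(Q(g)+\|g\|^2)$ by $\lambda^\sigma\le1+\lambda$. Chaining these gives density.

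Finally, identify $L^\sigma$. For $f\in D(L^\sigma)$ and $\varphi=1_x$, we have $\langle L^\sigma f,1_x\rangle=Q^\sigma(f,1_x)$, and polarizing the explicit formula gives $Q^\sigma(f,1_x)=\sum_y b_\sigma(x,y)(f(x)-f(y))+c_\sigma(x)f(x)$. This requires absolute convergence, i.e.\ $f\in\F^\sigma$. I would obtain it from Cauchy--Schwarz: $\sum_y b_\sigma(x,y)|f(y)|\le(\sum_y b_\sigma(x,y)/m(y))^{1/2}\|f\|_2$, where the first factor is finite by Fatou applied to $\sum_y m(x)p_t(x,y)\le m(x)/\dots$, using $b_\sigma(x,\cdot)/m(\cdot)$ expressed through $p_t(x,\cdot)$. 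If that bound is delicate, I would instead argue as in the graph case \cite[Chapter~1]{KLW}, which shows that $D(L)\subseteq\F$ for regular forms. Dividing by $m(x)$ then gives $L^\sigma f=\L^\sigma f$.
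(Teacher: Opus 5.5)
Your argument is correct in substance and reaches the result by a somewhat different route. The paper computes $\langle f,L^\sigma f\rangle$ through the $p_t$-identity only for $f\in D(L^\sigma)$ and passes to $D(Q^\sigma)$ ``by density''. It gets the Dirichlet property from the Markov property of the semigroup. It then invokes the representation theorem for regular Dirichlet forms on discrete sets \cite[Theorem~1.18]{KLW} to know that $(b_\sigma,c_\sigma)$ is a graph, and \cite[Theorem~1.6]{KLW} for the operator.

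You instead apply the identity $\lambda^\sigma=|\Gamma(-\sigma)|^{-1}\int_0^\infty(1-e^{-t\lambda})t^{-1-\sigma}\,\d t$ to the spectral measure of an arbitrary $f\in\ell^2(X,m)$. This gives the energy formula as an identity in $[0,\infty]$ on all of $\ell^2(X,m)$, which buys three things:
\begin{itemize}
\item it removes the density step;
\item it yields the Dirichlet property directly, including that normal contractions preserve $D(Q^\sigma)$;
\item it lets you read off $\deg_\sigma(x)+c_\sigma(x)=Q^\sigma(1_x)<\infty$, so no abstract representation theorem is needed.
\end{itemize}
Your regularity argument also makes explicit a step the paper leaves implicit, namely that $D(Q)$ is a core for $Q^\sigma$.

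The one step that fails as written is your Cauchy--Schwarz bound for $f\in\F^\sigma$. The correct estimate is $\sum_y b_\sigma(x,y)|f(y)|\le\big(\sum_y b_\sigma(x,y)^2/m(y)\big)^{1/2}\|f\|_2$, with $b_\sigma^2$ rather than $b_\sigma$. Finiteness of $\sum_{y} b_\sigma(x,y)^2/m(y)$ is equivalent to $1_x\in D(L^\sigma)$ (test $Q^\sigma(1_x,1_y)$). That holds for locally finite graphs or for $\sigma\le 1/2$, but you have no control over it for non-locally finite graphs when $\sigma>1/2$.

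You do not need it. For $f\in D(Q^\sigma)$,
\[
\sum_y b_\sigma(x,y)|f(y)| \le \deg_\sigma(x)|f(x)|+\deg_\sigma(x)^{1/2}\Big(\sum_y b_\sigma(x,y)(f(x)-f(y))^2\Big)^{1/2},
\]
and the last sum is at most $Q^\sigma(f)$. Hence $D(Q^\sigma)\subseteq\F^\sigma$, and the polarized formula for $Q^\sigma(f,1_x)$ converges absolutely, so testing with $1_x$ gives $L^\sigma f=\L^\sigma f$. This is exactly the content of your fallback via \cite{KLW} and of the paper's appeal to \cite[Theorem~1.6]{KLW}.
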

\begin{proof}
    The statements about the domains follow from the spectral theorem since $ \lambda\mapsto \lambda^\sigma $ for $ \sigma\in (0,1) $ is integrable with respect to a finite measure on $ (0,\infty) $  whenever  $ \lambda\mapsto \lambda$ is. Furthermore, by spectral calculus we have for $ f\in D(L^{\sigma/2}) $
    \begin{align*}
        Q^{\sigma}(f)=\int_0^1 \lambda^{\sigma} d\mu_f  +\int_1^\infty \lambda^{\sigma} d\mu_f  \leq \mu_f([0,1]) +\int_1^\infty \lambda d\mu_f  \leq \|f\|^2 + Q(f).
    \end{align*}
    Thus, density of $ C_c(X) $ in $ D(Q) $ implies density in $ D(Q^{\sigma}) $ which implies regularity. For \(f\in D(L^{\sigma})\),
\[
\langle f,L^\sigma f\rangle
=\frac{1}{|\Gamma(-\sigma)|}\int_0^\infty
\langle f,(I-e^{-tL})f\rangle\,\frac{dt}{t^{1+\sigma}} .
\]

Since the semigroup is Markovian,  $f\mapsto \langle f,(I-e^{-tL})f\rangle$ on $D(L^\sigma)$ is Markovian. Thus, the form of $L^\sigma$ is a regular Dirichlet form.

Now, we turn to the representation of the form. Using \(p_t(x,y)=p_t(y,x)\) we obtain
\begin{multline*}
\langle f,(I-e^{-tL})f\rangle
=\sum_{x\in X} f(x)^2 m(x)-\sum_{x,y\in X} f(x)f(y)\,p_t(x,y)\,m(x)m(y) \\
=\frac12\sum_{x,y\in X}\bigl(f(x)-f(y)\bigr)^2 p_t(x,y)\,m( x)m( y)  +\sum_{x\in X} f(x)^2\big( 1- q_t(x) \big) m( x) .
\end{multline*}
By Tonelli's theorem (all integrands are non‑negative), we interchange the sum and the integral
\begin{align*}
\langle f,L^\sigma f\rangle
&=\frac12\sum_{x,y\in X}\bigl(f(x)-f(y)\bigr)^2\,
\Bigg(\frac{1}{|\Gamma(-\sigma)|}
\int_0^\infty p_t(x,y)\frac{dt}{t^{1+\sigma}}\Bigg)m(x)m(y) \\
&\quad +\sum_{x\in X} f(x)^2\,
\Bigg( \frac{1}{|\Gamma(-\sigma)|}
\int_0^\infty\bigl(1-q_t(x)\bigr)\frac{dt}{t^{1+\sigma}}\Bigg)m(x) .
\end{align*}
This gives the desired representation on $D(L^\s)$ and therefore on $D(L^{\s/2})$ by density. Since $Q^\s$ is a regular Dirichlet form, it can be represented by a graph $(b,c)$ over $ (X,m) $ cf. \cite[Theorem 1.18]{KLW} which is given by $(b_\sigma,c_\sigma)$ as calculated above.
The statement about the operator now follows from \cite[Theorem 1.6]{KLW}.
This finishes the proof.
\end{proof}

The theorem above gives in particular finiteness of $b_\s$ for $\s\in (0,1)$.

The Green's function $G^\sigma:X\times X\to(0,\infty]$ of the fractional Laplacian $ L^{\sigma} $, $ \sigma>0 $, is given as
\begin{align*}
	G^\sigma(x,y)=\lim_{E\searrow 0} (L^{\sigma}+E)^{-1}1_y(x)=\int_{0}^{\infty} e^{-tL^\sigma}1_{y}(x){\d t}
\end{align*}
see \cite[Theorem~6.26]{KLW}. We next show  that $ G^{\sigma} $ 
can be represented via a  kernel which can be understood as $ b_{-\sigma} $.

\begin{theorem}\label{p:Green}  For all $ \s\in (0,1) $, we have
    \begin{align*}
        G^\sigma(x,y) =\frac{1}{|\Gamma(\sigma)|}\int_{0}^{\infty}m(y)p_t(x,y)\frac{\d t}{t^{1-\sigma}} .
    \end{align*}
\end{theorem}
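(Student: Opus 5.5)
The plan is to compute the Green's function by spectral calculus. By \cite[Theorem~6.26]{KLW}, $G^\sigma(x,y)=\lim_{E\searrow0}(L^\sigma+E)^{-1}1_y(x)$. For each $E>0$ we use the elementary identity
\[
(\lambda^\sigma+E)^{-1}=\int_0^\infty e^{-s(\lambda^\sigma+E)}\,\d s .
\]
It is more convenient, however, to use the subordination identity
\[
\lambda^{-\sigma}=\frac{1}{\Gamma(\sigma)}\int_0^\infty e^{-t\lambda}\,t^{\sigma-1}\,\d t, \qquad \lambda>0,
\]
which follows from the substitution $u=t\lambda$ in the Gamma integral. The Green's function should then be $L^{-\sigma}1_y$, suitably interpreted, and the task is to justify this passage rigorously.

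Concretely, fix $y$ and write $\mu=\mu_{1_y}$ for the spectral measure of $1_y$ with respect to $L$. Since the semigroup is positivity preserving, we test against $1_x$ and decompose $1_x$ into spectral components; by polarization it suffices to handle nonnegative functions and the measures $\mu_{1_x,1_y}$. The argument proceeds in three steps.

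First, for $E>0$, define $g_E(\lambda)=(\lambda^\sigma+E)^{-1}$. Then $\langle 1_x,(L^\sigma+E)^{-1}1_y\rangle=\int g_E\,\d\mu_{1_x,1_y}$.

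Second, as $E\searrow0$, the functions $g_E$ increase monotonically to $\lambda^{-\sigma}$ on $(0,\infty)$. Transience of $L^\sigma$ is not assumed, so both sides may be $+\infty$. We therefore avoid signed measures by working directly with the semigroup: the quantity $(L^\sigma+E)^{-1}1_y(x)=\int_0^\infty e^{-sE}e^{-sL^\sigma}1_y(x)\,\d s$ is nonnegative and increasing as $E\searrow0$, and by monotone convergence it tends to $\int_0^\infty e^{-sL^\sigma}1_y(x)\,\d s$.

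Third, we express $e^{-sL^\sigma}$ by Bochner subordination: $e^{-sL^\sigma}=\int_0^\infty e^{-tL}\,\eta^\sigma_s(\d t)$, where $\eta^\sigma_s$ is the one-sided $\sigma$-stable probability law on $[0,\infty)$, characterized by $\int e^{-t\lambda}\eta_s^\sigma(\d t)=e^{-s\lambda^\sigma}$. This holds by the spectral theorem, since the scalar identity holds for every $\lambda\ge0$ and all kernels involved are nonnegative. Applying Tonelli's theorem, with everything nonnegative and $e^{-tL}1_y(x)=m(y)p_t(x,y)$, gives
\[
G^\sigma(x,y)=\int_0^\infty m(y)p_t(x,y)\,\nu(\d t), \qquad \nu=\int_0^\infty\eta^\sigma_s\,\d s .
\]
It remains to identify the potential measure $\nu$. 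Its Laplace transform is $\int_0^\infty e^{-s\lambda^\sigma}\,\d s=\lambda^{-\sigma}=\frac1{\Gamma(\sigma)}\int_0^\infty e^{-t\lambda}t^{\sigma-1}\,\d t$. Uniqueness of Laplace transforms of Radon measures on $[0,\infty)$ then gives $\nu(\d t)=t^{\sigma-1}\d t/\Gamma(\sigma)$, and since $\Gamma(\sigma)>0$ this yields the claim.

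The main obstacle is justifying the subordination step at the operator level in this generality, where the graph may be non-locally finite or stochastically incomplete. The plan is to use the spectral theorem for scalar functions of $L$ tested against $1_x,1_y$. To stay within finite quantities, we apply it first to $e^{-\varepsilon L}$-regularized or $E$-shifted versions, and only then use monotone convergence with positive kernels (positivity of $p_t$ by \cite[Theorem~1.26]{KLW}) to pass to the limit. This keeps the argument valid even when $G^\sigma(x,y)=\infty$.
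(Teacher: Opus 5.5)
Your argument is correct, but it takes a different route from the paper.

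\textbf{The paper's route.} The paper never passes through $\int_0^\infty e^{-sL^\sigma}\,\d s$ or subordination. It works with the spectral measure $\mu_h$ of $h\in\ell^2(X,m)$. Since both $(\lambda^\sigma+E)^{-1}$ and $(\lambda+E)^{-\sigma}$ increase to $\lambda^{-\sigma}$ as $E\searrow 0$, monotone convergence gives $\lim_E\langle (L^\sigma+E)^{-1}h,h\rangle=\lim_E\langle (L+E)^{-\sigma}h,h\rangle$. For $h=1_x/\sqrt{m(x)}$ it then inserts the Gamma-integral formula for $(L+E)^{-\sigma}$ and removes $E$ by positivity of the semigroup, which settles the diagonal $x=y$. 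The off-diagonal entries follow by polarization, using that $G^\sigma(x,y)<\infty$ if and only if $G^\sigma(x,x)<\infty$.

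\textbf{Your route.} You replace the comparison of two different regularizations of $\lambda^{-\sigma}$ by a single exact identity:
\begin{itemize}
\item you write $G^\sigma(x,y)=\int_0^\infty e^{-sL^\sigma}1_y(x)\,\d s$;
\item you subordinate $e^{-sL^\sigma}$ to $e^{-tL}$ via the one-sided stable laws $\eta^\sigma_s$;
\item you identify $\nu=\int_0^\infty \eta^\sigma_s\,\d s$ by uniqueness of Laplace transforms. This is legitimate: $\nu$ is locally finite because its Laplace transform $\lambda^{-\sigma}$ is finite for every $\lambda>0$.
\end{itemize}

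\textbf{Trade-offs.} The price of your route is heavier scalar machinery: Bernstein's theorem for the existence of $\eta^\sigma_s$, and Laplace uniqueness. The gain is that every integrand is a nonnegative off-diagonal kernel. Tonelli therefore handles $x\neq y$ and the value $+\infty$ directly. You need no polarization and no finiteness dichotomy. The paper's polarization step does need that dichotomy, not only for $G^\sigma$ but implicitly also for the heat-kernel integral on the right-hand side.

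\textbf{Two minor remarks.}
\begin{itemize}
\item Your first step with the signed measures $\mu_{1_x,1_y}$ is never used and can be dropped. Note also that $\langle 1_x,\cdot\rangle$ in $\ell^2(X,m)$ carries a factor $m(x)$.
\item The obstacle you flag at the end is not really there. For fixed $s$, the subordination identity involves only bounded operators and bounded scalar integrands. The spectral theorem with Fubini, followed by polarization, therefore gives $e^{-sL^\sigma}1_y(x)=\int_0^\infty e^{-tL}1_y(x)\,\eta^\sigma_s(\d t)$ outright. No $e^{-\varepsilon L}$-regularization is needed, and stochastic incompleteness or non-local finiteness plays no role.
\end{itemize}
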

\begin{proof}
Let $\mu = \mu_{h}$ be the  real-valued spectral measure for $L$ with respect to $h \in \ell^2(X,m)$. Since $({\lambda^\sigma + E} )^{-1}, ({\lambda + E} )^{-\sigma}\nearrow {\lambda^{-\sigma}}$ as $E \searrow 0$ for $\lambda > 0$,  we have by  the spectral theorem and the monotone convergence theorem,
\begin{multline*}
\lim_{E \searrow 0} \langle (L^\sigma + E)^{-1} h,h  \rangle=\lim_{E \searrow 0}\int_0^\infty ({\lambda^\sigma + E})^{-1} \, d\mu(\lambda) = \int_0^\infty \lambda^{-\sigma} \, d\mu(\lambda)	\\
=\lim_{E \searrow 0}\int_0^\infty ({\lambda + E})^{-\sigma} \, d\mu(\lambda)
= \lim_{E \searrow 0}\langle (L + E)^{-\sigma} h, h \rangle,
\end{multline*}
where both sides can be infinite.
Letting $h=e_x = 1_x/\sqrt{m}$ for some $x \in X$, we have
\begin{multline*}
	G^{\sigma}(x,x)=\lim_{E \searrow 0}\langle(L^{\sigma}+E)^{-1}e_x,e_x\rangle=	\lim_{E \searrow 0}\langle(L+E)^{-\sigma}e_x,e_x\rangle\\=	\lim_{E \searrow 0}(L+E)^{-\sigma}1_x(x)
	=\lim_{E \searrow 0}\frac{1}{|\Gamma(\s)|}\int_{0}^{\infty}e^{-t(L+E)}1_x(x)\frac{\d t}{t^{1-\sigma}}\\
	=\frac{1}{|\Gamma(\s)|}\int_{0}^{\infty}e^{-tL}1_x(x)\frac{\d t}{t^{1-\sigma}},
\end{multline*}
where the last equality follows by monotone convergence which is applicable since the semigroup is positivity preserving. The statement for $x\neq y$ follows by polarization and taking into account that $G^\sigma> 0$ and $G^\sigma(x,y)<\infty$ if and only if $G^\sigma(x,x)<\infty$, \cite[Theorem~6.26]{KLW}.% whenever $x,y \in X$ are in the same connected component.
\end{proof}

\subsection{Hardy weights for the fractional Laplacian}

%In this section, we restrict ourselves for simplicity to connected graphs. The general case can be recovered by considering the connected components separately. 
For  $o\in X $ and $\alpha>0$, we recall the function
\begin{align*}
    k_{\alpha}(x)=k_{\alpha,o}(x)=\frac{1}{|\Gamma(\alpha)|}\int_{0}^{\infty}m(o)p_t(x,o) \frac{\d t}{t^{1-\alpha}}, \quad x \in X,
\end{align*}
and observe that $ k_{\alpha} $ is strictly positive by strict positivity of $ p_t $ and for $ \alpha\in (0,1) $, we have $ k_{\alpha,o}=G^{\alpha}(\cdot,o) $.
For $\alpha = 0$, we set $k_0 = k_{0,o} = 1_o$. 
 Furthermore, we  defined 
\begin{align*}
    \alpha_{\max}=\sup\{\alpha>0\mid k_{\alpha,o}(o)<\infty \}.
\end{align*}
%Next, we show that $ \alpha_{\max} $ is independent of the choice of $ o\in X $ and that $ k_{\alpha} $ is finite for all $\alpha\in (0,\alpha_{\max})$.
 
\begin{lemma}\label{lem:amax}
%	Let $(b,c)$ be a connected graph over $ (X,m) $ and $ o\in X $.
	The number $ \alpha_{\max} $ is independent of $o\in X$
and  $k_\alpha=k_{\alpha,o}$ is finite for all $\alpha\in (0,\alpha_{\max})$
\end{lemma}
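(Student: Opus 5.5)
The plan is a local Harnack comparison for the heat kernel, uniform in $t$ once $t$ is bounded away from zero, plus monotonicity of the integrand in $\alpha$ for the large-time part.

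\emph{Step 1: splitting the integral.} Split
\[
k_{\alpha,o}(x)=\frac{1}{|\Gamma(\alpha)|}\int_0^\infty m(o)p_t(x,o)\,t^{\alpha-1}\d t
\]
into $\int_0^1$ and $\int_1^\infty$. The small-time part is always finite, since $m(o)p_t(x,o)=e^{-tL}1_o(x)\le 1$ and $t^{\alpha-1}$ is integrable near $0$ for $\alpha>0$. So finiteness of $k_{\alpha,o}(x)$ depends only on the tail $\int_1^\infty p_t(x,o)t^{\alpha-1}\d t$.

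\emph{Step 2: monotonicity in $\alpha$.} On $[1,\infty)$ we have $t^{\alpha-1}\le t^{\alpha'-1}$ for $\alpha<\alpha'$. Hence if $k_{\alpha',o}(o)<\infty$, then $k_{\alpha,o}(o)<\infty$ for all $\alpha<\alpha'$. Consequently the set $\{\alpha>0 \mid k_{\alpha,o}(o)<\infty\}$ is an interval with left endpoint $0$ (possibly empty, in which case $\alpha_{\max}$ is a supremum of the empty set, and that degenerate case is treated consistently). Therefore $k_{\alpha,o}(o)<\infty$ for every $\alpha\in(0,\alpha_{\max})$.

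\emph{Step 3: comparison of heat kernel values.} This is the key step: show that for $x,y,z,o\in X$ there is a constant $C$ with
\[
p_t(x,y)\le C\,p_{t+1}(z,o)\qquad\text{for all } t\ge 0 .
\]
Use the semigroup property together with symmetry of $p_t$:
\[
p_{t+2}(o,o)=\sum_{u,v}p_1(o,u)\,p_t(u,v)\,p_1(v,o)\,m(u)m(v)\ge p_1(o,x)\,p_t(x,y)\,p_1(y,o)\,m(x)m(y).
\]
Strict positivity of $p_1$, \cite[Theorem~1.26]{KLW}, then gives $p_t(x,y)\le C_{x,y}\,p_{t+2}(o,o)$. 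The same argument with other base points gives the reverse-type comparisons.

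\emph{Step 4: finiteness of $k_{\alpha,o}(x)$ for all $x$.} Insert Step 3 into the tail:
\[
\int_1^\infty p_t(x,o)t^{\alpha-1}\d t\le C\int_1^\infty p_{t+2}(o,o)t^{\alpha-1}\d t\le C'\int_3^\infty p_s(o,o)s^{\alpha-1}\d s,
\]
using $(s-2)^{\alpha-1}\le C s^{\alpha-1}$ for $s\ge 3$; this holds for $\alpha\ge1$ trivially and for $\alpha<1$ by comparability of $s-2$ and $s$. The right-hand side is finite whenever $k_{\alpha,o}(o)<\infty$. Together with Step 2 this proves finiteness of $k_{\alpha,o}$ on all of $X$ for $\alpha\in(0,\alpha_{\max})$.

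\emph{Step 5: independence of $o$.} Apply Step 3 with base points $o,o'$, namely $p_t(o',o')\le C\,p_{t+2}(o,o)$ and symmetrically. The tail integrals for $o$ and $o'$ are then mutually dominated, so $k_{\alpha,o}(o)<\infty$ if and only if $k_{\alpha,o'}(o')<\infty$. Hence $\alpha_{\max}$ does not depend on $o$.

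The only delicate points are the uniformity of the Harnack-type comparison in $t$, which Step 3 handles via the semigroup identity, and the shift-of-variable bound on $t^{\alpha-1}$ in Step 4. Both are elementary.
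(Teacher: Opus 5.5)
Your proof is correct and follows essentially the paper's route. The small-time part is harmless, finiteness passes from $\alpha'$ to all smaller $\alpha$, and the semigroup property together with strict positivity of $p_1$ compares heat kernel values at different vertices up to a time shift, with symmetry of $p_t$ giving independence of $o$; the paper uses $p_t(x,o)\ge C_{x,y}\,p_{t-1}(y,o)$, whereas you route everything through the diagonal $p_{t+2}(o,o)$, and your Step 3 as announced (with $t+1$ and general $z$) is not quite what you prove, though the $t+2$ version you do prove suffices.
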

\begin{proof}
	 The convergence of the integral defining $k_{\alpha}$ depends only at the behavior at infinity for $\alpha > 0$ since $ m(o)p_t(x,o)\le 1/m(x) $ and $ t\mapsto t^{1-\alpha} $ is integrable at $ 0 $. This gives finiteness for all $\beta<\alpha$ and fixed $x,y\in X$. For $x,y\in X$, one readily checks that for $t>1$ 
    \begin{align*}
      p_t(x,o) &\geq  m(o)^{-1} e^{-(t-1)L} 1_o(y) e^{-L}1_y(x) =C_{x,y}p_{t-1}(y,o)
    \end{align*} cf.~the proof of \cite[Theorem~6.26~(c)]{KLW}. 
	   Since the convergence of the integral depends only at the behavior at infinity, we infer that if $k_{\alpha}(x)$ is finite for some $ x\in X $, then $k_{\alpha}(y)$ takes finite values for all $ y\in X $. Using symmetry of $ p_{t} $, we get also the independence of $ o $ in regard to the finiteness.
\end{proof}

Next, we show a key result which shows that the functions $ k_\alpha $ are superharmonic potentials for the fractional Laplacian and give rise to Hardy weights.

\begin{proposition}\label{p:exponents}
	Let %$(b,c)$ be a connected graph over $ (X,m) $ and 
	$ \sigma\in (0,1] $ and 
	$\alpha\in [\sigma,\alpha_{\max})$. Then, $k_\alpha\in \F^\sigma$ and
	\begin{align*}
		\L^{\sigma}k_{\alpha}=k_{\alpha-\sigma}.
	\end{align*} 
	Furthermore, %if $k_\sigma=G^\sigma(\cdot,\,o)$ is  finite, then 
	$k_{\alpha-\sigma}\in \G^\sigma$  and
	\begin{align*}		k_{\alpha}= G^\sigma k_{\alpha-\sigma}.	
    \end{align*} 
    	In particular, $ k_{\alpha} $ is a superharmonic potential and  
	\begin{align*}
		w_{\sigma,\al}=\frac{\L^\sigma k_{ \alpha} }{k_{\alpha} }=\frac{k_{\al-\sigma}}{k_{\al}}=\frac{k_{\al-\sigma} }{G^\sigma k_{\al-\sigma}}
	\end{align*}
	is a non-negative weight which is strictly positive if $\alpha > \sigma$.
\end{proposition}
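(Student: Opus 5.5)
The plan is to prove the identity $\L^\sigma k_\alpha = k_{\alpha-\sigma}$ at the level of the semigroup, then obtain the Green representation from it, and read off the weight properties at the end.

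\textbf{Step 1: the key identity.} For $\sigma=1$ we have $\L^1=\L$, and the identity follows by differentiating the heat kernel, $\L e^{-tL}1_o = -\partial_t e^{-tL}1_o$, and integrating by parts in $t$ against $t^{\alpha-1}/\Gamma(\alpha)$. This gives $\L k_\alpha = k_{\alpha-1}$, or $1_o$ when $\alpha=1$.

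For $\sigma\in(0,1)$ the representation of Theorem~\ref{t:fractionalgraph} does not depend on $\sigma$ being small, so I work with the kernels $b_\sigma$, $c_\sigma$. For each $x$ I write
\[
m(x)\L^\sigma k_\alpha(x) = \sum_y b_\sigma(x,y)\bigl(k_\alpha(x)-k_\alpha(y)\bigr) + c_\sigma(x)k_\alpha(x),
\]
which, using the definitions of $b_\sigma$ and $c_\sigma$, equals
\[
\frac{m(x)}{|\Gamma(-\sigma)|}\int_0^\infty \bigl(k_\alpha(x) - e^{-sL}k_\alpha(x)\bigr)\frac{ds}{s^{1+\sigma}}.
\]
By Tonelli and the semigroup property, $e^{-sL}k_\alpha = \frac{1}{\Gamma(\alpha)}\int_0^\infty e^{-(t+s)L}1_o\, t^{\alpha-1}\,dt$. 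Hence the bracket equals $\frac{1}{\Gamma(\alpha)}\int_0^\infty u(t)\bigl(t^{\alpha-1}-(t-s)_+^{\alpha-1}\bigr)dt$ with $u(t)=e^{-tL}1_o(x)$.

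Exchanging the $s$- and $t$-integrals reduces everything to the scalar identity
\[
\frac{1}{|\Gamma(-\sigma)|\Gamma(\alpha)}\int_0^\infty \bigl(t^{\alpha-1}-(t-s)_+^{\alpha-1}\bigr)\frac{ds}{s^{1+\sigma}} = \frac{t^{\alpha-\sigma-1}}{\Gamma(\alpha-\sigma)}.
\]
By scaling this reduces to a Beta-integral computation, equivalently the fractional calculus identity $D^\sigma t^{\alpha-1}$. For $\alpha=\sigma$ the right-hand side is read as a delta at $t=0$, giving $1_o$.

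\textbf{Step 2: justifying the exchanges (main obstacle).} The integrand changes sign, so Fubini is not free. I would split $b_\sigma$-sum and $c_\sigma$-term first: $\sum_y b_\sigma(x,y)k_\alpha(y)$ is a nonnegative Tonelli computation. Its finiteness, i.e.\ $k_\alpha\in\F^\sigma$, follows from $\sum_y m(y)p_s(x,y)k_\alpha(y)=e^{-sL}k_\alpha(x)\le C\,k_\alpha(x)$-type bounds. I would get these by splitting $\int_0^1$ and $\int_1^\infty$ in $s$: for large $s$ use $e^{-sL}k_\alpha\le k_\alpha$, which holds since $t^{\alpha-1}$ is decreasing for $\alpha\le1$. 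In general I would use monotone truncation, replacing $t^{\alpha-1}$ by cutoffs and passing to the limit by monotone convergence on the positive and negative parts separately, both finite because $\alpha<\alpha_{\max}$.

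\textbf{Step 3: Green representation.} I would show $G^\sigma k_{\alpha-\sigma}=k_\alpha$ directly from Theorem~\ref{p:Green}. Summing over $y$ with Tonelli and the semigroup property,
\[
G^\sigma k_{\alpha-\sigma}(x)=\frac{1}{\Gamma(\sigma)\Gamma(\alpha-\sigma)}\int\!\!\int s^{\sigma-1}t^{\alpha-\sigma-1}e^{-(s+t)L}1_o(x)\,ds\,dt,
\]
and the Beta identity turns this into $k_\alpha(x)<\infty$. Everything is nonnegative, so no justification is needed, and this gives $k_{\alpha-\sigma}\in\G^\sigma$. For $\sigma=1$ one uses $G=\int_0^\infty e^{-tL}dt$ in the same way.

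Proposition~\ref{prop:characterisation_potentials} then yields $\L^\sigma k_\alpha=k_{\alpha-\sigma}$ independently of Step 1. This could replace most of Steps 1–2, leaving only $k_\alpha\in\F^\sigma$, which that proposition also provides. Finally, $k_\alpha$ is a superharmonic potential because $k_{\alpha-\sigma}\ge0$. The weight $k_{\alpha-\sigma}/k_\alpha$ is nonnegative, and it is strictly positive for $\alpha>\sigma$ by strict positivity of $p_t$.
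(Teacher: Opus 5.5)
Your proposal is correct, but the argument that actually proves it is Step~3 combined with Proposition~\ref{prop:characterisation_potentials}. That is a genuinely different route from the paper's.

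The paper regularizes. It uses $(L+\varepsilon)^{-\beta}1_o=k_{\beta,\varepsilon}$ and gets $(L+\varepsilon)^{\sigma}k_{\alpha,\varepsilon}=k_{\alpha-\sigma,\varepsilon}$ and $k_{\alpha,\varepsilon}=(L+\varepsilon)^{-\sigma}k_{\alpha-\sigma,\varepsilon}$ from the spectral theorem. It identifies $(L+\varepsilon)^{\sigma}$ with the graph $(b_{\sigma,\varepsilon},c_{\sigma,\varepsilon})$ via Theorem~\ref{t:fractionalgraph} and lets $\varepsilon\to0$. The delicate point is the left-hand side of the first identity. There the diagonal part $\frac{k_{\alpha,\varepsilon}(x)}{m(x)}\bigl(\sum_y b_{\sigma,\varepsilon}(x,y)+c_{\sigma,\varepsilon}(x)\bigr)$ is separated from the off-diagonal sum. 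Finiteness of the right-hand side plus monotone convergence then give $k_\alpha\in\mathcal{F}^\sigma$ and $\mathcal{L}^\sigma k_\alpha=k_{\alpha-\sigma}$, and the Green representation comes afterwards.

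You reverse the order. $G^\sigma k_{\alpha-\sigma}=k_\alpha$ is a purely nonnegative kernel computation using Theorem~\ref{p:Green}, Tonelli, Chapman--Kolmogorov and $\int_0^r s^{\sigma-1}(r-s)^{\alpha-\sigma-1}\,ds=B(\sigma,\alpha-\sigma)\,r^{\alpha-1}$. The case $\alpha=\sigma$ is Theorem~\ref{p:Green} itself, since $k_0=1_o$. Proposition~\ref{prop:characterisation_potentials}, applied to the connected graph $(b_\sigma,c_\sigma)$, then gives $k_\alpha\in\mathcal{F}^\sigma$ and $\mathcal{L}^\sigma k_\alpha=k_{\alpha-\sigma}$ at once. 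That graph is transient because $G^\sigma(\cdot,o)=k_\sigma<\infty$ by Lemma~\ref{lem:amax}.

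Your route avoids the regularized graphs and every limit of signed expressions. Its only input is that $G^\sigma$ is the Green's function of $(b_\sigma,c_\sigma)$, which the paper also needs. The paper's route, conversely, gets the pointwise identity directly from functional calculus and treats $\alpha=\sigma$ and $\alpha>\sigma$ uniformly, with no Beta integral.

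You should drop Steps~1--2 rather than present them as the main line, because as written they do not work:
\begin{enumerate}
\item The justification of $e^{-sL}k_\alpha\le k_\alpha$ has the monotonicity backwards. We have $e^{-sL}k_\alpha(x)=\frac{1}{\Gamma(\alpha)}\int_s^\infty u(t)(t-s)^{\alpha-1}\,dt$, and $(t-s)^{\alpha-1}\le t^{\alpha-1}$ holds for $\alpha\ge1$, not for $\alpha\le1$. For $\alpha<1$ the inequality is false in general: formally $\mathcal{L}k_\alpha=\mathcal{L}^{1-\alpha}1_o$, which is negative off $o$.
\item No bound of the type $e^{-sL}k_\alpha(x)\le Ck_\alpha(x)$ can yield $\sum_y b_\sigma(x,y)k_\alpha(y)<\infty$, since $\int_0^1 s^{-1-\sigma}\,ds=\infty$. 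You need cancellation against the diagonal term, for instance through $\int_0^\infty|k_\alpha(x)-e^{-sL}k_\alpha(x)|\,s^{-1-\sigma}\,ds\le C\int_0^\infty u(t)\,t^{\alpha-\sigma-1}\,dt$, which is finite for $\alpha>\sigma$.
\item At $\alpha=\sigma$ your scalar identity gives $0$ for every $t>0$ (as $1/\Gamma(0)=0$), while $\mathcal{L}^\sigma k_\sigma(o)=1$. So the $s$--$t$ interchange is invalid there, and ``reading it as a delta'' is not an argument.
\end{enumerate}
Since Step~3 covers all cases, none of this is needed.
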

  \begin{proof}We define $b_{\beta,\ep} : X\times X\to [0,\infty]$ for $\beta\in \R\setminus \N_0$ and $\ep>0$ as 
		\[b_{\beta,\eps} (x,y)
		= \frac1{|\Gamma(-\beta)|} \int_0^\infty 
		m(x)e^{-t\ep} e^{-tL}1_{y}(x) \frac{\d t}{t^{1+\beta}} \] 
	 and $ b_{0,\ep}(x,y)=1_{y}(x) $ for $ \beta=0 $. Correspondingly, we define $$  k_{\beta,\ep}=\frac{1}{m}b_{-\beta,\ep}(\cdot,o)  $$ and for $\sigma\in (0,1)$ the killing term 
\[c_{\sigma,\ep}(x)=\frac{1}{|\Gamma(-\sigma)|}\int_0^\infty\bigl(1-m(x)e^{-t\ep} e^{-tL}1_y(x)\bigr)\,
\frac{dt}{t^{1+\sigma}}. \]
For  $ \beta >0 $,  it follows from the spectral theorem that	
		\begin{align*}
			 (L+\ep)^{-\beta} 1_{o}(x)= \frac{1}{m(x)}  b_{-\beta,\eps} (x,o) = k_{\beta,\varepsilon}(x),	
		\end{align*}
	and therefore, $ k_{\beta,\eps}\in D((L+\eps)^{\gamma})$ for $\gamma<\beta$. 
 Moreover, by Theorem~\ref{t:fractionalgraph} and its proof, $(L+\ep)^{\sigma}$ is associated with the graph $(b_{\sigma,\ep},c_{\sigma,\ep})$ over $(X,m)$.

Let $0<\s\leq\alpha<\alpha_{\max}$. By the spectral theorem, we have for $\ep>0$ that $(L+\ep)^{-\alpha}$ maps into $D((L+\ep)^{\sigma})$ and  therefore, $ k_{\alpha,\ep}\in D((L+\ep)^{\sigma}) $. Hence, by the  spectral calculus
	\begin{align*}
			 (L+\ep)^\s k_{\alpha,\eps} 
		= (L+\ep)^\s (L+\ep)^{-\alpha} 1_{o}
		= (L+\ep)^{-(\alpha-\s)} 1_{o}
		=k_{\al-\s,\eps}.
		\end{align*}
		Observe that	$k_{\beta,\eps}(x)$, $b_{\sigma,\eps}(x,y)$, $c_{\sigma,\eps}(x) 	 $ converge monotonously to $ k_{\beta} (x)$, $b_{\sigma}(x,y)$, $c_{\sigma}(x) 	 $ for $ x,y\in X $, as $ \eps\to0 $. Hence, the right hand side converges pointwise to $ k_{\alpha-\sigma} $ as $ \eps\to0 $.
To see that the left hand side converges pointwise to $ \L^{\sigma}k_{\alpha} $, we use the representation of $ (L+\eps)^{\sigma} $ and split up the left hand side into two terms as follows
    \begin{multline*}
        (L+\eps)^{\sigma}k_{\alpha,\eps}(x)\\
		=\frac{k_{\alpha,\eps}(x)}{m(x)}\left(\sum_{y\in X}b_{\sigma,\eps}(x,y) +c_{\sigma,\eps}(x)   \right)  -\frac{1}{m(x)}\sum_{y\in X}b_{\sigma,\eps}(x,y)k_{\alpha,\eps}(y).
    \end{multline*}
    The first term converges to $ \frac{k_{\alpha}}{m} (x)(\sum_y b_{\sigma}(x,y) + c_{\sigma}(x) )$ as $ \eps\to0 $. Since the right hand side is finite, we know that the limit of the second term exists. By monotone convergence, we conclude that this limit is given by
    \(\frac{1}{m(x)}\sum_{y\in X}b_{\sigma}(x,y)k_{\alpha}(y).\)
    Thus, we obtain that the left hand side converges pointwise to $ \L^{\sigma}k_{\alpha} (x) $ as $ \eps\to0 $.
    This shows the first statement.

The second statement follows similarly using spectral calculus
\begin{align*}
    k_{\alpha,\eps} &= (L+\eps)^{-\sigma}(L+\eps)^{-(\alpha-\sigma)}1_{o} = \frac{1}{|\Gamma(\sigma)|}\int_{0}^{\infty}\,e^{-t(L+\eps)}k_{\alpha-\sigma,\eps} \frac{\d t}{t^{1-\sigma}}.
\end{align*}
Now, letting $ \eps\to 0 $ and using monotone convergence  yields the claim for $\sigma \in (0,1)$ by the representation of the Green's function shown above in Theorem~\ref{p:Green}. For $\sigma=1$, it follows also from the spectral theorem that 
$(L+\varepsilon) k_{\alpha,\varepsilon} = k_{\alpha -1, \varepsilon}$ and it is clear that the left hand side converges pointwise to $\mathcal{L}k_{\alpha}$ as $\varepsilon \to 0$. This finishes the proof. 
\end{proof}

We can now prove our second main result, Theorem~\ref{t:poscritical}, stated in the introduction which in particular characterizes positive criticality of the Hardy weights $ w_{\sigma,\alpha} $.

\begin{proof}[Proof of Theorem~\ref{t:poscritical}]
By the proposition above, $w_{\sigma,\al}={k_{\al-\sigma} }/{G^\sigma k_{\al-\sigma}}$ is a Hardy weight. By Theorem~\ref{thm:classification_positive-critical_weights_Hardy}, $w_{\sigma,\al}$ is positive critical if and only if $$\sum_X k_{\al-\sigma}  k_{\al} \, m =\sum_X k_{\al-\sigma} G^\sigma k_{\al-\sigma} \, m <\infty,  $$ where the first equality also follows from the proposition above.
\end{proof}

\subsection{Asymptotics of the fractional Hardy weights} \label{sec:asymp}

For graphs endowed with an intrinsic metric with jump size $1$, we derive explicit asymptotics for the fractional Hardy weights assuming suitable heat kernel estimates and a growth condition on metric balls. We will prove Theorem~\ref{thm:MAIN2} stated in the introduction, along with stronger variants under the assumption of heat kernel asymptotics.   

\medskip

To get any meaningful Hardy inequality from the proposition above, we need to understand the asymptotic behavior of $k_\alpha$.  To this end, we need to assume bounds on the heat kernel.

% For the standard Laplacian on $\Z^d$, these are well-known, cf., e.g., \cite{LawlerLimic}.  
Let $\varrho$ be an intrinsic metric with jump size $1$ on the graph. For bounded graphs, this will typically be the combinatorial distance. For unbounded graphs, one does indeed need instrinsic metrics, cf. \cite{BHY17,Folz11,KR24}.

With regards to the heat kernel, for discrete spaces there are no Gaussian bounds for all times. The Gaussian bounds are only known to hold for large times,  and fail for small times, cf. \cite{KLMST}. So for large times, we consider more general bounds than Gaussian which include also sub-Gaussian bounds as they typically arise for fractal type  \eqref{e:GB}. These examples are discussed in more detail in Section~\ref{sec:examples}.
For the short time bounds, there are no Gaussian bounds, but instead one has universal short time bounds given by Davies-Gaffney type estimates, cf. \cite{BHY17}, which are sufficient for our purposes. In order to employ the Davies-Gaffney type estimate, we need a mild lower bound on the measure. Precisely, we say that  $m$ is {\em sub-exponentially bounded from below} if
	\begin{align*}
	 \liminf_{|x| \to \infty}  \frac{1}{|x|} \log m(x) \geq 0,
\end{align*}where $ 
|x|=\rho(x,o) $.

\begin{theorem}[Asymptotics of the Riesz kernel]\label{theorem:RieszAsymptotics}
Assume $m$ being sub-exponentially bounded from below and \eqref{e:GB} holds for $o\in X$, $d\ge 1$, $\beta\ge 2$. Then, for $ \al>0 $,  the Riesz kernel $k_{\alpha}$ is finite if and only if   $\alpha<d/\beta$. In this case,  
    \[  k_\alpha(x ) \asymp \, |x|^{-d+\alpha\beta}, 
\]
where the two-sided constant can be chosen uniform over all $\alpha$ taken from a compact subset of $(0,d/\beta)$. 
%In particular, we have $d/2 \leq \alpha_{\max}$ in both cases. 
Furthermore,  for every compact subset $D$ of $(0,d/\beta)$ and each fixed $x \in X$, the map $D\rightarrow (0,\infty),$  $\alpha\mapsto k_\alpha(x)$ extends to an analytic function  in a complex neighborhood of $D$.
\end{theorem}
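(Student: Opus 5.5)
We split the integral defining $k_\alpha(x)=\frac{m(o)}{\Gamma(\alpha)}\int_0^\infty p_t(x,o)t^{\alpha-1}\,\d t$ at $t=|x|$ (and at $t=1$ when $x=o$ or $|x|<1$). The short-time part $\int_0^{|x|}$ is treated by a Davies--Gaffney type bound, and the long-time part $\int_{|x|}^\infty$ by \eqref{e:GB}. Throughout, $|x|$ is large; the finitely many remaining $x$ with $|x|$ small are handled by Lemma~\ref{lem:amax}, since finiteness at one point gives finiteness everywhere and positivity is automatic.

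\emph{Long times.} Substitute $t=|x|^\beta s$. The upper bound in \eqref{e:GB} gives
\begin{align*}
\int_{|x|}^\infty p_t(x,o)t^{\alpha-1}\,\d t\le C_2|x|^{-d+\alpha\beta}\int_{|x|^{1-\beta}}^\infty s^{\alpha-d/\beta-1}e^{-c_2 s^{-1/(\beta-1)}}\,\d s .
\end{align*}
At $s\to\infty$ the integrand behaves like $s^{\alpha-d/\beta-1}$, so the integral converges iff $\alpha<d/\beta$. At $s\to 0$ it is controlled by the stretched exponential. Hence this part is $\le C|x|^{-d+\alpha\beta}$, with $C$ uniform for $\alpha$ in compact subsets of $(0,d/\beta)$. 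The lower bound in \eqref{e:GB} gives the matching lower estimate: restrict to $s\in[1,2]$, where $t\ge|x|$ holds automatically because $\beta\ge2$. Hence $k_\alpha(x)\gtrsim|x|^{-d+\alpha\beta}$.

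For $\alpha\ge d/\beta$ we fix $x$ with $|x|\ge1$ and use the lower bound $p_t(x,o)\ge C_1t^{-d/\beta}e^{-c_1}$ for $t\ge|x|^\beta$. Then the integrand is $\gtrsim t^{\alpha-d/\beta-1}$, which is not integrable at infinity, so $k_\alpha(x)=\infty$. Lemma~\ref{lem:amax} upgrades this to $k_\alpha\equiv\infty$ and gives $\alpha_{\max}=d/\beta$.

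\emph{Short times.} We need $\int_0^{|x|}p_t(x,o)t^{\alpha-1}\,\d t=O(|x|^{-d+\alpha\beta})$. We use the Davies--Gaffney/Carne-type estimate for intrinsic metrics with jump size $1$ (cf.\ \cite{BHY17}), in the form
\begin{align*}
\langle e^{-tL}1_o,1_x\rangle\le\sqrt{m(o)m(x)}\,e^{-\zeta(t,|x|)},
\end{align*}
where $\zeta(t,r)\ge r\log(r/(et))$ for $t\le r$, and in particular $\zeta(t,r)\ge c\,r$ for $t\le r/e^2$. Then $m(o)p_t(x,o)\le\sqrt{m(o)/m(x)}\,e^{-\zeta}$. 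Sub-exponential lower boundedness gives $m(x)^{-1/2}\le e^{\eps|x|}$ for any $\eps>0$ and $|x|$ large. Hence, on $t\le|x|/e^2$, the contribution is $\le C|x|^{\alpha}e^{(\eps-c)|x|}$, which is exponentially small and so dominated by the polynomial $|x|^{-d+\alpha\beta}$.

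The window $t\in[|x|/e^2,|x|]$ is the main obstacle. Here $\zeta$ is only of order $|x|$ with a small constant, or may even vanish near $t=|x|$, so the Davies--Gaffney bound alone may not beat $m(x)^{-1/2}$. The first thing to try is to extend the use of \eqref{e:GB} down to $t\ge|x|/e^2$: either by a semigroup step $p_t=\sum_z p_{t-s}p_s\,m(z)$ combined with the hypothesis, or by noting that the hypothesis applies for $t\ge|x|$ and that $|x|/e^2$ is comparable to $|x|$. In that regime the Gaussian factor $\exp(-c(|x|^\beta/t)^{1/(\beta-1)})\le\exp(-c|x|)$ is exponentially small. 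Failing this, we use the refined Carne form $\zeta(t,r)=r\,\mathrm{arsinh}(r/t)-\sqrt{t^2+r^2}+t$, which is $\ge c\,r$ for $t\le r$ with an explicit constant, and choose $\eps$ smaller than that constant.

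\emph{Analyticity.} For complex $\alpha$ with $\operatorname{Re}\alpha$ in a compact subset of $(0,d/\beta)$, $|t^{\alpha-1}|=t^{\operatorname{Re}\alpha-1}$. The bounds above therefore give locally uniform domination, and both $1/\Gamma(\alpha)$ and the parameter integral are holomorphic by Morera's theorem together with dominated convergence.
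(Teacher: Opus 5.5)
Your proposal is correct and follows essentially the paper's proof: you split at $t=|x|$, treat the long-time part with \eqref{e:GB} after rescaling (restricting to $t\in[|x|^\beta,2|x|^\beta]$ for the lower bound is a harmless simplification of the paper's Gamma-function computation), and treat the short-time part with the Davies--Gaffney bound of \cite{BHY17} combined with the sub-exponential lower bound on $m$. The one thing to fix is the window $t\in[|x|/e^2,|x|]$: applying \eqref{e:GB} there because $|x|/e^2$ is comparable to $|x|$ is not legitimate, since the hypothesis is assumed only for $t\ge|x|$ (and a semigroup step would need heat kernel bounds away from $o$). Your fallback is exactly the paper's argument and closes this window, so use it from the start: $\partial_t\zeta(t,r)=1-\sqrt{t^2+r^2}/t<0$ gives $\zeta(t,r)\ge\zeta(r,r)=(\operatorname{arsinh}1+1-\sqrt2)\,r>0$ for all $t\le r$, so no split at $|x|/e^2$ is needed.
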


%\begin{remark}	For the standard lattice graph over $\Z^d$, the asymptotic of the Riesz kernel as in~(a) holds with $\mu=2,\nu=1, \theta=d/2, a=4, A = (4\pi)^{-d/2}$ and $\delta=1$, even over the slightly larger range $\alpha \in (-1,d/2)$, see~ \cite[Theorem~A.1]{HKP}	(Note the different conventions: we have $k_{\alpha}$ = $\kappa_{-\alpha}$ with $\kappa$ as defined in the just cited paper.)\end{remark}

We start with a proposition which gives the necessary short time bounds for the heat kernel in the integral which defines the Riesz kernel. The proof relies on Davies-Gaffney type estimates as proved in \cite{BHY17}.

\begin{proposition}[Short time upper bounds for the heat kernel] \label{prop:SU}
	Assume that 
	$m$ is sub-exponentially bounded from below. For every bounded interval $I \subseteq (0,\infty)$, there are $c,C > 0$ such that for every $\alpha \in I$ and each $x \in X$, we have 
	\begin{align*}
		\int_0^{|x|} p_t(x,o) \frac{dt}{t^{1-\alpha}} \, \leq\, \frac{C}{\alpha}e^{-c|x|}% \exp \left( -\frac12|x| \log |x| \right).
	\end{align*}
\end{proposition}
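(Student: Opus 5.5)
The plan is to combine the Davies--Gaffney type estimate from \cite{BHY17} with the sub-exponential lower bound on $m$. For an intrinsic metric $\varrho$ with jump size $1$, the Davies--Gaffney estimate controls $\langle e^{-tL}1_A,1_B\rangle$ for sets $A,B$ at distance $r=\varrho(A,B)$. We apply it with $A=\{o\}$ and $B=\{x\}$. It gives, for all $t>0$,
\[
m(x)m(o)p_t(x,o)=\langle e^{-tL}1_o,1_x\rangle \le \sqrt{m(o)m(x)}\,\exp\bigl(-\zeta_t(r)\bigr),
\]
where $\zeta_t(r)$ is the discrete Gaussian-type exponent. Concretely, $\zeta_t(r)=r\,\mathrm{arsinh}(r/t)-\sqrt{t^2+r^2}+t$, or some comparable expression; we only need a lower bound for it. Hence
\[
p_t(x,o)\le (m(o)m(x))^{-1/2}e^{-\zeta_t(|x|)}.
\]

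First, we lower bound $\zeta_t(r)$ in the regime $t\le r$. The function $r\,\mathrm{arsinh}(r/t)-\sqrt{t^2+r^2}+t$ is decreasing in $t$. Its value at $t=r$ is $r(\mathrm{arsinh}(1)-\sqrt2+1)=:2c_0 r$ with $c_0>0$. So $\zeta_t(r)\ge 2c_0 r$ for all $0<t\le r$. In fact we even get $\zeta_t(r)\gtrsim r\log(r/t)$, but the linear bound suffices. If the precise form in \cite{BHY17} differs, we only use that it is monotone in $t$ and at least linear in $r$ at $t=r$.

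Next, we use that $m$ is sub-exponentially bounded from below. This gives, for every $\delta>0$, a constant $C_\delta$ with $m(x)^{-1/2}\le C_\delta e^{\delta|x|}$ for all $x$; the finitely many $x$ with small $|x|$ are handled because balls contain only finitely many vertices with $|x|\le R$, or, if not, because $m(x)>0$ pointwise and the estimate for small $|x|$ follows from $p_t(x,o)\le 1/m(o)$ as below. Taking $\delta=c_0$, we get
\[
\int_0^{|x|}p_t(x,o)\frac{dt}{t^{1-\alpha}}\le C\,e^{-c_0|x|}\int_0^{|x|}t^{\alpha-1}\,dt=\frac{C}{\alpha}|x|^{\alpha}e^{-c_0|x|}.
\]
Since $\alpha$ lies in the bounded interval $I$, we have $|x|^{\alpha}\le C' e^{(c_0/2)|x|}$ uniformly in $\alpha\in I$. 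This yields the bound with $c=c_0/2$.

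For the finitely many small cases, and for $x$ with $|x|<1$ such as $x=o$, we use the trivial bound $p_t(x,o)\le (m(o)m(x))^{-1/2}$, which follows from Cauchy--Schwarz and contractivity. This gives the integral at most $C|x|^\alpha/\alpha$, which is absorbed after enlarging $C$. Note that a general $x$ can have $|x|$ bounded but non-integer; the sub-exponential bound only concerns $|x|\to\infty$, so for bounded $|x|$ we need $\inf m(x)>0$ over that region. This is the main obstacle. If balls are not finite, we instead keep the factor $m(x)^{-1/2}$ explicitly for small $|x|$ and use $p_t(x,o)=p_t(o,x)m(x)/m(x)$-symmetry, i.e.\ $p_t(x,o)\le 1/m(o)$ from $e^{-tL}1_o\le 1$ after rescaling. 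This gives a bound uniform in $x$, and the whole thing is absorbed into $C$.
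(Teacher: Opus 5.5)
Your proposal is correct and follows the paper's argument: the Davies--Gaffney bound from \cite{BHY17}, the monotonicity of $t\mapsto t\,\xi(|x|/t)$ on $(0,|x|]$ giving decay $e^{-\xi(1)|x|}$, the sub-exponential lower bound on $m$ with a rate below $\xi(1)$, and absorption of $|x|^{\alpha}$. The paper phrases the monotonicity via the substitution $t=|x|/u$ and the monotonicity of $\xi(u)/u$. Your use of $p_t(x,o)\le 1/m(o)$ for bounded $|x|$ is a useful extra detail, since the paper simply asserts $\sqrt{m(o)m(x)}\ge K e^{-\kappa|x|}$ for all $x\in X$, which tacitly requires $m$ to be bounded below on bounded balls.
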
 	
%\end{document}
\begin{proof}
To formulate  Davies-Gaffney type estimates as proved in \cite{BHY17}, we consider the function 
\[
\xi: (0,\infty) \to (0, \infty), \quad \xi(r) = r \operatorname{arcsinh} r + 1 - \sqrt{1+r^2}.
\]
Then, \cite[Corollary~1.1]{BHY17} gives the following estimate for the heat kernel: there is $C > 0$, such that for each $x,y \in  X$ and $t > 0$, one has 
	\[
	p_t(x,y) \, \leq\, \frac{C}{\sqrt{m(x)m(y)}} \exp \left( -t \xi \Bigg( \frac{\varrho(x,y)}{t} \Bigg) \right).
	\]
	We consider this estimate for $y = o$  and integrate it with respect to $dt / t^{1-\alpha}$. We continue with the right hand side and compute by substitution with $t=\frac{|x|}{u}$ and $dt = -\frac{|x|}{u^2} du$ and setting $\Phi(u) = \xi(u) / u$  
	\begin{align*}
	I(x): = \int_0^{|x|}\exp \left( -t \xi \Bigg( \frac{|x|}{t} \Bigg) \right)\frac{dt}{t^{1-\alpha}}
	 = |x|^{\alpha} \int_{1}^{\infty} \frac{1}{u^{1+\alpha}} \exp\big( -|x|\Phi(u)\big) \, du.
	\end{align*}
	We observe further that the function $\Phi$ is monotonically increasing since 
	\begin{align*}
		\Phi^{\prime}(u) &= \frac{1}{\sqrt{u^2+1}} - \frac{1}{u^2} + \frac{u^{-2}}{\sqrt{u^2+1}} = \frac{u^2 + 1 - \sqrt{u^2+1}}{u^2\sqrt{u^2+1}} \geq 0.
	\end{align*}
	Consequently, we can estimate
	\begin{align*}
		I(x) &\leq |x|^{\alpha} \exp\big( -\xi(1)|x|\big) \int_{1}^{\infty} \frac{du}{u^{1+\alpha}}=  \frac{|x|^{\alpha}}{\alpha} \exp\big( -\xi(1)|x| \big).
	\end{align*}
%	Since $ \xi(r)\geq \frac12 r\log(r)$, there exists $C > 0$ such that
%	\[ 
%	I(x) \leq \frac{C}{\alpha}|x|^{\alpha} \exp\left(-\frac12 |x| \log |x| \right)
%	\]
%	 for all $x \in X$   and all $\alpha >0$. 
Our assumption on $m$ to be sub-exponentially bounded from below gives $\sqrt{m(o)m(x)} \geq K \exp(-\kappa|x|)$ for some constants $K >0$ and $ \kappa \in (0,\xi(1)) $ and  all $x\in X$. Combining these considerations with Davies-Gaffney type estimate above, we obtain 
	 \[
	 \int_0^{|x|} p_t(x,o) \frac{dt}{t^{1-\alpha}} \leq \frac{I(x)}{\sqrt{m(o)m(x)}}\leq \frac{C}{\alpha}|x|^{\alpha}e^{-c|x|}
	 \]
	 for some $0 < c < 1$. Since $\alpha$ is taken from a bounded interval, 
	 this gives the desired estimate, possibly after modifying the constants $c$ and $C$. 
\end{proof}

With the short time bounds at hand, we can now turn to the proof of Theorem~\ref{theorem:RieszAsymptotics}.

\begin{proof}[Proof of Theorem~\ref{theorem:RieszAsymptotics}] 
	The proof of the theorem  is based on splitting the integral defining $k_\alpha$ into a short time and a long time part. The short time part is controlled by the bounds provided by Proposition~\ref{prop:SU} and the long time part by the bounds in \eqref{e:GB}. We observe that the convergence of the integral at $0$ is always guaranteed since $\alpha > 0$ by assumption. Thus, finiteness of the Riesz kernel is determined and will be derived from the assumptions on the long time asymptotics.

	The analyticity statement will be argued along the way as follows: we observe that for any compact subset $ D $ of $(0,d/\beta)$ and  fixed $t>0$, $x,y\in X$ the function 
	$f:\alpha\mapsto p_t(x,y)t^{-1+\alpha}/\Gamma(\alpha) $ is analytic in a neighborhood of  $ D $. We will show that  $f$ admits an integrable majorant that is independent for $\alpha$  taken from a complex neighborhood of $D$, both in the short time and the long time region. This then gives analyticity of the integral by the Weierstra{\ss} theorem on uniformly convergent sequences of analytic functions.

%Pick $x \in X$ with $|x| \geq R_0$, where  $R_0 > 0$ is chosen according to~Proposition~\ref{prop:SU}. 
For $ x\in X $, we split the integral defining $k_\alpha$ into two regions
\begin{align*}
 k_\alpha(x)&=\frac{m(o)}{\Gamma(\alpha)}\int_0^\infty p_t(x,o) \, \frac{dt}{t^{1-\alpha}} = \frac{m(o)}{\Gamma(\alpha)} \left(\int_{0 }^{ |x|} \dots +\int_{|x| }^{\infty} \dots
\right),
\end{align*}
and denote the two integrals in the bracket by $I_1^\alpha(x)$ and $I_2^\alpha(x)$, respectively.

Invoking Proposition~\ref{prop:SU}, given an arbitrary compact interval $D \subseteq (0,d/\beta)$, there are $C, c > 	0$  such that for $\alpha \in D$, 
$$ I^{\alpha}_1(x) \leq \frac{C}{\alpha} \exp(-c|x|) ,$$ 
which  is superexponentially small for $|x|\to\infty$.  This also shows analyticity of the integral $I_1^{\alpha}(x)$ as well by the argument above.
 
For  the second region which consists of $t >  |x|$, we first consider the bounds in \eqref{e:GB} and set $ \mu=\beta/(\beta-1) $, $ \nu=1/(\beta-1) $ and $ \theta=d/\beta $.
We start by estimating the lower bound of $I_2^\alpha(x)$ which gives with the change of variables  $u = \frac{c_1|x|^\mu}{t^\nu}$, and consequently $t = \big(\frac{c_1|x|^{\mu}}{u}\big)^{1/\nu}$, $dt = -\frac{1}{\nu}\big({c_1|x|^{\mu}}\big)^{1/\nu}{u^{-1-1/\nu}}du$
\begin{align*}
    I_2^\alpha(x) &\ge C_1\int_{|x|}^\infty {e^{-c_1|x|^\mu/t^\nu}} \frac{dt}{t^{\theta+ 1-\alpha}}\\
&=\frac{C_1}{\nu} \left(c_1{|x|^\mu}\right)^{-\frac{\theta-\alpha}{\nu}}\int_{0}^{c_1|x|^{\mu-\nu}} e^{-u} u^{(\theta-\alpha)/\nu-1}du.
\end{align*}
We see that the integral on the right hand side converges if and only if $ \alpha<\theta=d/\beta $ which gives non-finiteness of $ k_\alpha(x) $ for $ \alpha \geq d/\beta $.
Using the definition of the Gamma function, \(
 \Gamma\left(z\right)=\int_0^\infty e^{-u} u^{z-1} du, 
\)
we obtain by resubstituting the choice of $\mu , \nu$ and $\theta$ that 
\begin{multline*}I_2^\alpha(x) \ge  \frac{C_1}{\nu}\left(c_1{|x|^\mu}\right)^{-\frac{\theta-\alpha}{\nu}} \left(   \Gamma\big((\theta-\alpha)/\nu\big) -\int_{c_1|x|^{\mu-\nu}}^{\infty} e^{-u} u^{(\theta-\alpha)/\nu-1}du\right) \\
	= {C_1}(\beta-1){\,{c_1}^{-(d/\beta-\alpha)(\beta-1)} \Gamma\left(\left(\frac{d}{\beta}-\alpha\right)(\beta-1)\right)} |x|^{-d+\alpha\beta}   -O\left(e^{-c^{\prime}|x|^c}\right)
\end{multline*}
for some constants $c,c^{\prime}>0$ that can be chosen independently from $\alpha$ in the given range.
Analogously, the upper bound of $I_2^\alpha(x)$ gives
\begin{align*}
	I_2^\alpha(x) &\le {C_2}(\beta-1){\,{c_2}^{-(d/\beta-\alpha)(\beta-1)} \Gamma\left(\left(\frac{d}{\beta}-\alpha\right)(\beta-1)\right)} |x|^{-d+\alpha\beta} .
\end{align*}
Hence, for $\alpha$ taken from a compact subset of $(0,d/\beta)$, we have the desired two sided bound independent of $\alpha$ and analyticity of the integral by the argument above. This also gives finiteness of $k_\alpha$ for $\alpha < d/\beta$.

Combining the estimates for the short time and the long time region, we obtain the desired asymptotics and analyticity of $k_\alpha$.
\end{proof}

We can now combine the asymptotics of the Riesz kernel with the Hardy weights from Proposition~\ref{p:exponents} to obtain asymptotics of the Hardy weights.

\begin{theorem}[Asymptotics of the fractional Hardy weights]\label{theorem:HardyAsymptotics}
Assume $m$ being sub-exponentially bounded from below  and $\sigma\in (0,1]$. 
 If  \eqref{e:GB} hold for $d\ge 1$, $ \beta\ge 2 $,  then the fractional Laplacian is transient if and only if $ \sigma<d/\beta $. In this case, there is a Hardy weight $ w_{\sigma,\al} $
 for all $ \alpha\in (\sigma,d/\beta) $  which satisfies the asymptotics
    \[  w_{\sigma,\al}(x) \asymp |x|^{-\beta\sigma},\]
    and the constants in the two-sided estimate can be chosen uniformly if $\alpha$ is taken from a compact subset of $(\sigma,d/\beta)$. 
\end{theorem}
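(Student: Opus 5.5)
The plan is to reduce everything to Theorem~\ref{theorem:RieszAsymptotics} together with Proposition~\ref{p:exponents}.

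\emph{Transience.} For $\sigma\in(0,1)$, Theorem~\ref{p:Green} identifies the Green's function of $L^\sigma$ with the Riesz kernel, $G^\sigma(\cdot,o)=k_{\sigma,o}$. Transience of $L^\sigma$ is equivalent to finiteness of $G^\sigma$; by \cite[Theorem~6.26]{KLW} it suffices to check this at one point. So transience holds if and only if $k_\sigma(o)<\infty$. By Theorem~\ref{theorem:RieszAsymptotics} this happens if and only if $\sigma<d/\beta$. The same theorem gives $\alpha_{\max}=d/\beta$.

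For $\sigma=1$, the Green's function of $L$ is $\int_0^\infty e^{-tL}1_o\,dt$, which is $k_{1,o}$ up to the normalization $\Gamma(1)=1$. The identical criterion applies, or one can invoke Proposition~\ref{p:exponents} with $\sigma=1$ directly.

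\emph{Hardy weight.} Fix $\alpha\in(\sigma,d/\beta)$. Since $\alpha<\alpha_{\max}$, Proposition~\ref{p:exponents} shows that $k_\alpha=G^\sigma k_{\alpha-\sigma}$ is a strictly positive superharmonic function with $\mathcal{L}^\sigma k_\alpha=k_{\alpha-\sigma}$. The ground state transform (Proposition~\ref{lem:GST}) applied to $v=k_\alpha$ then yields
\[
\mathcal{Q}^\sigma(\varphi)\ge \sum_X m\,\frac{k_{\alpha-\sigma}}{k_\alpha}\,\varphi^2 .
\]
Hence $w_{\sigma,\alpha}$ is a Hardy weight, and it is strictly positive because $\alpha>\sigma$.

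\emph{Asymptotics.} Both $\alpha$ and $\alpha-\sigma$ lie in $(0,d/\beta)$. Theorem~\ref{theorem:RieszAsymptotics} therefore gives
\[
k_{\alpha-\sigma}(x)\asymp|x|^{-d+(\alpha-\sigma)\beta},\qquad k_\alpha(x)\asymp|x|^{-d+\alpha\beta}
\]
for $|x|$ large. Taking the quotient gives $w_{\sigma,\alpha}(x)\asymp|x|^{-\beta\sigma}$.

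\emph{Uniformity.} Let $D\subseteq(\sigma,d/\beta)$ be compact. Then $D$ and $D-\sigma$ are compact subsets of $(0,d/\beta)$, so the two-sided constants from Theorem~\ref{theorem:RieszAsymptotics} are uniform in $\alpha\in D$.

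\emph{Main obstacle.} The asymptotics in Theorem~\ref{theorem:RieszAsymptotics} are stated for large $|x|$, so the finitely many remaining points need separate treatment. On those points, $w_{\sigma,\alpha}$ is strictly positive and finite, so the comparison with $|x|^{-\beta\sigma}$ holds after adjusting constants; at $x=o$ one uses the convention $|x|^{-\beta\sigma}$ replaced by $1$ or restricts to $|x|\ge1$.

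For uniformity over $\alpha\in D$ on this finite set, I would use the analyticity statement of Theorem~\ref{theorem:RieszAsymptotics}. It makes $\alpha\mapsto k_\alpha(x)$ continuous and positive on $D$, so it is bounded above and below there. This settles the uniformity on the finite set.
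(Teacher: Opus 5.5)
Your proposal is correct and follows essentially the same route as the paper. Theorem~\ref{theorem:RieszAsymptotics} yields $\alpha_{\max}=d/\beta$ and, through $k_\sigma=G^\sigma(\cdot,o)$ (and $k_1=G(\cdot,o)$ when $\sigma=1$), the transience criterion; Proposition~\ref{p:exponents}, combined with the quotient of the two-sided Riesz kernel bounds for $\alpha$ and $\alpha-\sigma$, gives the Hardy weight and its uniform asymptotics. Your explicit ground state transform step and your separate handling of the bounded region only spell out what the paper leaves implicit, since it reads $\asymp$ as a statement for large $|x|$.
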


\begin{proof}
	We notice that the Riesz kernel $k_\alpha$ is finite if and only if $\alpha \in (0,d/\beta)$ by Theorem~\ref{theorem:RieszAsymptotics}. 
	Hence, $ d/\beta = \alpha_{\max} $. Since $ k_\sigma=G^{\sigma} $ for $ \sigma\in (0,1) $,  we conclude that  transience is satisfied if and only if $ \sigma<d/\beta $.
	For $\sigma=1$, it is a direct consequence of~\eqref{e:GB} that $G(x,o)$ is finite if and only if $d/\beta > 1$. 
	Finally, the statement about the Hardy weight  follows by combining the above asymptotics of the Riesz kernel,  Theorem~\ref{theorem:RieszAsymptotics},   with Proposition~\ref{p:exponents}.
\end{proof}

Next, we turn to the question of criticality of the Hardy weights $w_{\sigma,\alpha}$ and show that the range of $\alpha$ decomposes into a positive critical, a null-critical and a subcritical regime. To this end, we need to assume some growth condition on the metric balls with respect to $\varrho$.

% We denote  balls of finite radius $r >0$ with respect to $\rho$ by \[B_r(x) = \big\{ y \in X \mid \varrho(y,x) \leq r  \big\}, \quad  x \in X. \]
%Given a graph $(b,c)$ over $(X,m)$ with a pseudo-metric $\rho$ and a fixed vertex $o\in X$, we will always denote $|x|=\varrho(x,o)$.

%We will say that the graph is \emph{Ahlfors regular of dimension} $d>0$  (with respect to $o$), i.e., there are constants $c_1,c_2>0, r_0$ such that for all $x\in X$ and $r\ge r_0$%\begin{align}\label{e:AR}\tag{AR}    c_1 r^{d} \leq m(B_r(o)) \leq c_2 r^{d}.\end{align}

We recall that given $d\ge 1 $, $ \beta\ge2$ and $ \sigma\in (0,1] $,  we   set
\(
    \alpha_0 = \frac{d +\sigma\beta}{2\beta}.
\)
If we assume 
$ \sigma \le d/\beta	 $, then we have $ \alpha_0\le d/\beta $ as well. Note that for $ \beta=2 $, we have that $ \alpha_0 $ is the middle point of $ \sigma $ and $ d/2 $.

\begin{theorem}[Asymptotics of the fractional Hardy weights]\label{theorem:HardyAsymptotics2}
%Let $(b,c)$ be a connected graph over $ (X,m) $ with exponentially lower bounded $ m $ and let $ \rho $ be a pseudo-metric on $X$. 
Assume $ m  $ is  sub-exponentially bounded from below and let  $d\ge 1$, $\beta\ge 2$ be  such that \eqref{e:AR}  and  \eqref{e:GB} hold.
Then, $ w_{\sigma,\al}={k_{\al-\sigma}}/{k_{\al}} $ is a strictly positive Hardy weight for all $ \sigma \in (0,1]$ with $ \sigma<d/\beta $ and $ \al\in (\sigma,d/\beta) $ and
	\begin{itemize}
		\item[(a)] for $\sigma <\alpha<\al_0 $, the weight $ w_{\sigma,\al} $ is positive critical,
		\item[(b)] for $ \al=\alpha_0 $, the weight $ w_{\sigma,\al} $ is null critical	%\end{itemize}
     %   If additionally  \eqref{e:GA} holds with the above parameters, then
      %  \begin{itemize}
	%	\item[(c)] for $ \alpha_0 <\al<d/2$, the weight $ w_{\sigma,\al} $ is subcritical.
	\end{itemize}
\end{theorem}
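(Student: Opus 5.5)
The plan is to treat (a) and (b) separately. Both rest on the two-sided asymptotics $k_\gamma(x)\asymp|x|^{-d+\gamma\beta}$ of Theorem~\ref{theorem:RieszAsymptotics}, which hold uniformly for $\gamma$ in compact subsets of $(0,d/\beta)$. Strict positivity of $w_{\sigma,\alpha}$ and the Hardy inequality are already given by Proposition~\ref{p:exponents} together with Theorem~\ref{theorem:HardyAsymptotics}.

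\emph{Part (a).} By Theorem~\ref{t:poscritical} it suffices to decide whether $\sum_X k_{\alpha-\sigma}k_\alpha\, m<\infty$. For $\alpha\in(\sigma,d/\beta)$ both $\alpha-\sigma$ and $\alpha$ lie in $(0,d/\beta)$, so
\[
k_{\alpha-\sigma}(x)k_\alpha(x)\asymp |x|^{-\gamma},\qquad \gamma=2d-(2\alpha-\sigma)\beta .
\]
I will decompose $X$ into dyadic shells $A_j=B_{2^{j+1}}(o)\setminus B_{2^j}(o)$, which have $m(A_j)\le C2^{jd}$ by \eqref{e:AR}. This bounds the sum by $C\sum_j 2^{j(d-\gamma)}$, which is finite exactly when $\gamma>d$, i.e.\ when $\alpha<\alpha_0$. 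The finitely many vertices with $|x|<1$ contribute a finite amount. This proves (a).

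\emph{Part (b), failure of positive criticality.} Conversely, \eqref{e:AR} gives a constant $K$ such that $m(B_{Kr}\setminus B_r)\ge c\,r^d$ for large $r$. On the shells $B_{K^{j+1}}\setminus B_{K^j}$ the lower bound $\gtrsim |x|^{-d}$ then makes the sum diverge at $\alpha=\alpha_0$, since there $\gamma=d$. Hence $w_{\sigma,\alpha_0}$ is not positive critical by Theorem~\ref{t:poscritical}, and it remains to show that it is critical.

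\emph{Part (b), criticality.} I plan to verify condition (iii) of Theorem~\ref{lem:characterisaton_criticality} for $\L^\sigma-w_{\sigma,\alpha_0}$ using the minimizers from (a). For $\alpha_n\nearrow\alpha_0$ set $v_n=k_{\alpha_n}/k_{\alpha_n}(o)$. By (a) and Theorem~\ref{thm:classification_positive-critical_weights_Hardy}, $v_n\in\D_0$. Proposition~\ref{prop:v_in_D_0_summabilities} gives $\mathcal{Q}^\sigma(v_n)=\sum_X w_{\sigma,\alpha_n}v_n^2 m$, and therefore
\[
0\le(\mathcal{Q}^\sigma-w_{\sigma,\alpha_0})(v_n)=\sum_X\big(w_{\sigma,\alpha_n}-w_{\sigma,\alpha_0}\big)v_n^2\,m ,
\]
where nonnegativity comes from the extension of the Hardy inequality to $\D_0$ (Lemma~\ref{lem:extension_hardy_inequality}). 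By the analyticity in Theorem~\ref{theorem:RieszAsymptotics}, $v_n\to k_{\alpha_0}/k_{\alpha_0}(o)$ pointwise and $v_n(o)=1$. The remaining task is to show that the right-hand side tends to $0$. Once that holds, approximating each $v_n$ in $\D_0$ by finitely supported functions (using \cite[Lemma~6.6]{KLW} and dominated convergence) and passing to a diagonal sequence yields the null-sequence required in (iii). Criticality follows, and hence null-criticality.

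\emph{Main obstacle.} The hard step is proving that $\sum_X(w_{\sigma,\alpha_n}-w_{\sigma,\alpha_0})v_n^2m\to0$. Pointwise this difference is small, but $\sum_X w_{\sigma,\alpha_0}v_n^2 m$ blows up like $(\alpha_0-\alpha_n)^{-1}$. One therefore needs the uniform-in-$x$ estimate $|w_{\sigma,\alpha}-w_{\sigma,\alpha_0}|(x)\lesssim(\alpha_0-\alpha)|x|^{-\beta\sigma}$ with a constant small enough, or with cancellation strong enough, to beat this divergence. Cauchy estimates from analyticity alone give only pointwise control, so I would first try to obtain the needed uniform control of $\partial_\alpha k_\alpha$ directly from the integral representation, by differentiating under the integral and reusing the splitting into short and long times from the proof of Theorem~\ref{theorem:RieszAsymptotics}. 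If this does not yield decay, the fallback is the ground state transform (Proposition~\ref{lem:GST}) with $v=k_{\alpha_0}$ and logarithmic cutoffs $\eta_n$. This requires bounding $\mathcal{Q}^\sigma_v(\eta_n)$, which in turn needs off-diagonal upper bounds on $b_\sigma$ derived from \eqref{e:GB}.
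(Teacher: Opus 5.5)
Your part (a), and the divergence of $\sum_X k_{\alpha_0-\sigma}k_{\alpha_0}\,m$ showing that $w_{\sigma,\alpha_0}$ is not positive critical, are correct and match the paper; you use dyadic shells where the paper uses summation by parts. Your set-up for criticality is also the paper's: $k_\alpha\in\mathcal{D}_0^\sigma$ for $\alpha<\alpha_0$, and $0\le(\mathcal{Q}^\sigma-w_{\sigma,\alpha_0})(k_\alpha)=\sum_X(w_{\sigma,\alpha}-w_{\sigma,\alpha_0})k_\alpha^2\,m$. The gap is the step you flag yourself. Your diagonal construction needs this quantity to tend to $0$, and neither route you sketch delivers that. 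The uniform estimate $|w_{\sigma,\alpha}-w_{\sigma,\alpha_0}|\le C|\alpha-\alpha_0|\,|x|^{-\sigma\beta}$ combined with \eqref{e:AR} only gives $C|\alpha-\alpha_0|\sum_r r^{-1+2\beta(\alpha-\alpha_0)}\le C$, i.e.\ boundedness. The fallback is not available under the hypotheses. \eqref{e:GB} controls $p_t(x,o)$ only for the single vertex $o$, so it gives no off-diagonal bounds on $b_\sigma(x,y)$ for general pairs. Those bounds are exactly what the energy $\mathcal{Q}^\sigma_{k_{\alpha_0}}(\eta_n)$ of a cutoff would require.

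The missing idea is that boundedness suffices, and this is exactly how the paper concludes. Set $h=\mathcal{Q}^\sigma-w_{\sigma,\alpha_0}$. Your diagonal sequence gives $(e_n)$ in $C_c(X)$ with $\sup_n h(e_n)<\infty$ and $e_n\to v$ pointwise, where $v=k_{\alpha_0}/k_{\alpha_0}(o)$. A Banach--Saks argument, cf.\ \cite[Theorem~5.3]{KLW} and \cite[Proposition~4]{KN}, then yields criticality, as follows.
\begin{itemize}
\item Suppose $h$ were subcritical. Then point evaluations would be continuous with respect to $\sqrt{h}$, by Theorem~\ref{lem:characterisaton_criticality}~(iii).
\item Hence, in the completion, a subsequence converges weakly to an element representing $v$, and its Ces\`aro means $\bar e_N$ converge in norm.
\item By the ground state transform with $v$ and Fatou, $h(\bar e_N)=\mathcal{Q}^\sigma_v(1-\bar e_N/v)\le\liminf_{M}h(\bar e_M-\bar e_N)$, and the right-hand side tends to $0$ as $N\to\infty$.
\item So $(\bar e_N)$ is a null-sequence with $\bar e_N(o)\to v(o)>0$, which contradicts subcriticality by Theorem~\ref{lem:characterisaton_criticality}~(iii).
\end{itemize}

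Two remarks on your plan. First, differentiating under the integral is a sound way to get the uniform first-order bound. In $\partial_\alpha\log k_{\alpha-\sigma}-\partial_\alpha\log k_\alpha$ the leading terms $\beta\log|x|$ cancel. The remainders are expectations of $\log(t/|x|^\beta)$, which are bounded by \eqref{e:GB} and the short-time estimate. This route also avoids the bounds uniform in $x$ on a complex neighbourhood that the paper's appeal to analyticity implicitly needs. Second, the decay you wanted is within reach too. The map $\alpha'\mapsto(\mathcal{Q}^\sigma-w_{\sigma,\alpha'})(k_\alpha)$ is nonnegative and vanishes at $\alpha'=\alpha$, so its derivative vanishes there. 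A uniform bound $|\partial_\alpha^2w_{\sigma,\alpha}|\lesssim|x|^{-\sigma\beta}$ follows the same way, from variances of $\log(t/|x|^\beta)$. Taylor's formula then gives $(\mathcal{Q}^\sigma-w_{\sigma,\alpha_0})(k_\alpha)\lesssim(\alpha_0-\alpha)^2\sum_X|x|^{-\sigma\beta}k_\alpha^2\,m=O(\alpha_0-\alpha)$.
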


%\begin{remark} 	For the standard lattice graph defined over $(X,m) = (\Z^d,1)$ and parameters $\sigma \in (0,1]$ (and additionally $\sigma < d/2$ if $d \in \{1,2\}$), $\mu=2,\nu=1$ and $\alpha_0 = \frac{d}{4} + \frac{\sigma}{2}$, the range of $(\sigma,d/2)$ decomposes exactly according to these alternatives, see \cite[Theorem~3.5]{HKP}.\end{remark}

\begin{proof} The assertion~(a) follows directly from Theorem~\ref{t:poscritical}, Theorem~\ref{theorem:RieszAsymptotics} and annular Ahlfors regularity \eqref{e:AR} since
\begin{multline*}
    \sum_{X}k_{\alpha-\sigma }k_{\alpha}m \asymp 
	\sum_{x\in X}\frac{m(x)} {|x|^{ 2d -2\alpha\beta  +\sigma\beta  }}\asymp 
	\sum_{r=1}^\infty \frac{m(B_r)-m(B_{r-1})}{r^{ 2d -2\alpha\beta  +\sigma\beta  }}
	\\
	\asymp
	\sum_{r=1}^\infty m(B_r)\left( \frac{1}{r^{ 2d -2\alpha\beta  +\sigma\beta  }} -\frac{1}{(r+1)^{ 2d -2\alpha\beta  +\sigma\beta  }}\right)\\
	\asymp\sum_{r=1}^\infty  \frac{m(B_r)}{r^{ 2d -2\alpha\beta  +\sigma\beta +1 }}
	\asymp\sum_{r=1}^\infty
	\frac{r^{d}}{r^{ 2d -2\alpha\beta  +\sigma\beta +1 }},
	%\sum_{x\in X}\frac{m}{|x|^{2\frac{\mu}{\nu}(d/2-\alpha)+\frac{\mu}{\nu}\sigma}} \asymp \sum_{r=1}^\infty \frac{r^{d-1}}{r^{2\frac{\mu}{\nu}(d/2-\alpha)+\frac{\mu}{\nu}\sigma}}. 
\end{multline*}
where we used a telescoping sum argument and the mean value theorem.
The sum converges if and only if $ \alpha< (d+\sigma\beta)/2\beta=\alpha_0$.
    
We turn to (b) and show that for $ \al=\alpha_0 $, the weight $ w_{\sigma,\al} $ is  critical which gives null-criticality by (a). 

By Proposition~\ref{p:exponents}, we have  $k_{\alpha}=G^\sigma k_{\alpha-\sigma }$ and by Theorem~\ref{t:poscritical}  and (a) we have   $\sum_{X}k_{\alpha-\sigma }G^\sigma k_{\alpha-\sigma }m <\infty$  if and only if  $ \alpha<\alpha_0 $.  Thus, by Proposition~\ref{prop:equivalence_P_D_0}, we infer  $k_{\alpha}\in \mathcal{D}_0^\sigma$ for $\alpha<\alpha_0$. 
Thus,
\begin{align*}
	0&\leq (\mathcal{Q}^\sigma-w_{\sigma,\al_0})(k_{\alpha})=(\mathcal{Q}^\sigma-w_{\sigma,\al})(k_{\alpha})+(w_{\sigma,\al}-w_{\sigma,\al_0})(k_{\alpha})\\
	&=\sum_{X}(w_{\sigma,\al}-w_{\sigma,\al_0})k_{\alpha}^2 m.
	\end{align*}
Since $\alpha\mapsto \big(k_{\alpha-\sigma}/k_{\alpha}\big)(x)= w_{\sigma,\alpha}(x)$ is analytic for each fixed $x\in X$ by Theorem~\ref{theorem:RieszAsymptotics} (observe that $k_{\alpha}>0$ for $\alpha>0$), we apply Schwarz' theorem to obtain from the asymptotics of $w_{\sigma,\alpha}$ that
for $\alpha$ close to $\alpha_0$,
\begin{align*}
     |w_{\sigma,\al}(x)-w_{\sigma,\al_0}(x)| \le C|\alpha-\alpha_0| |x|^{- \sigma\beta}.
\end{align*}
Using Theorem~\ref{theorem:RieszAsymptotics} to estimate $k_\alpha$ and Ahlfors regularity \eqref{e:AR} gives by an analogous  argument as in (a) that for $\alpha$ close to $\alpha_0$, 
\begin{multline*}
 (\mathcal{Q}^\sigma-w_{\sigma,\al_0})(k_{\alpha})=(\mathcal{Q}^\sigma-w_{\sigma,\al})(k_{\alpha})	+  (w_{\sigma,\al}-w_{\sigma,\al_0} )(k_\alpha)\\
 \leq
  C|\alpha-\alpha_0| \sum_{x\in X}\frac{m(x)}{|x|^{ 2d - 2\alpha\beta + \sigma \beta }}
 \le
 % C|\alpha-\alpha_0|\sum_{r=1}^\infty   r^{-d-1 +2\alpha\beta - \sigma \beta} =
 C|\alpha-\alpha_0|\sum_{r=1}^\infty   r^{-1 + 2\beta(\alpha-\alpha_0)}
 \le C,
\end{multline*}
where the bound is uniform for $ \alpha $ close to $ \alpha_0 $.
Since  $k_{\alpha}$ is in $\mathcal{D}_0^\sigma$  and
$k_{\alpha}(x)\to k_{\alpha_0}(x)$ for all $ x\in X $  as $ \alpha\nearrow \alpha_0 $ by monotone convergence, 
we can extract a sequence $(e_n^\alpha)$ in $C_c(X)$ such that $ e_n^\alpha\to k_{\alpha} $, $\alpha<\alpha_0$ in the norm of $\mathcal{D}_0^\sigma$. Furthermore, by a Banach-Saks type argument, cf.~\cite[Theorem~5.3]{KLW}, \cite[Proposition~4]{KN} and a diagonal sequence argument, we obtain a null-sequence which shows that $ w_{\sigma,\al_0} $ is a critical Hardy weight by Theorem~\ref{lem:characterisaton_criticality}. By (a), the weight $ w_{\sigma,\al_0} $ is not positive critical, so it is null-critical. This shows (b).  
%Finally, to show (c) we assume \eqref{e:GA}. We observe that the constant $\Psi(\alpha)$ for $ \alpha_0 <\al<d/2$ is strictly decreasing in $\alpha$ cf \cite[Lemma~3.2]{FLS08} \Hmm{Correct reference?} and therefore, the Hardy weight $ w_{\sigma,\al} $ cannot be optimal at infinity as it is by a constant smaller than $ w_{\sigma,\al_0} $ outside of a finite set. Thus, for these $\alpha$ the Hardy weight $ w_{\sigma,\al} $ cannot be null-critical, see Proposition~\ref{lem:optimalnearinfty}. They can also not be positive critical by (a). Hence, they must be subcritical. This finishes the proof.
\end{proof}

\subsection{Examples and applications}\label{sec:examples}
In this section we discuss examples of graphs where our results for the fractional Laplacians apply. Note that the results of the previous sections require two-sided heat kernel bounds of Gaussian type and Ahlfors regularity. However, such estimates and bounds are only required with respect to one fixed vertex $o\in X$ and for large times. In consequence, our results yield optimal Hardy weights for the Laplacian and the fractional Laplacian on graphs.

However, the examples discussed here are  a small selection from the literature and we draw heavily from the excellent monongraph \cite{Barlowbook} which gives a comprehensive discussion of examples of graphs with Gaussian and sub-Gaussian heat kernel bounds. To simplify the presentation, we will restrict ourselves to graphs $ b $ over $ (X,\deg) $, where  $\deg(x) = \sum_{y \in X} b(x,y)$ for $x \in X$. The metric which we consider is the combinatorial metric, i.e., $\varrho(x,y)$ is the length of the shortest path connecting $x$ and $y$ from $ X$ denoted by $ d_c(x,y) $.

\subsubsection{Discrete vs.\@ continuous time and stability}
In the literature, two-sided Gaussian heat kernel bounds have been established for a variety of graphs. However, these bounds are mostly proven for discrete times in the form
\begin{align}\label{e:DGB}\tag{DHB}
C_1n^{-\frac{d}{\beta}}  e^{-c_1\left(\frac{|x|^\beta}{n}\right)^{\frac{1}{\beta-1}}} \leq  p^{{(n)}}(x,o) \leq C_2n^{-\frac{d}{\beta}}  e^{-c_2\left(\frac{|x|^\beta}{n}\right)^{\frac{1}{\beta-1}}} ,
\end{align}
where $p^{(n)}(x,y)$ is the $n$-step transition probability of the random walk associated to the graph Laplacian $L$ for a graph $b$ over $(X,\deg)$ and $ C_1,C_2,c_1,c_2>0 $ are constants. Indeed, one has to be a bit more careful in regards to the lower bound and bipartite graphs but we avoid these technicalities here and refer to \cite[Definition~4.14]{Barlowbook} for the technical details. Specifically, one can write $L=I-P$ and  $p^{(n)}(x,y)$ is given as the kernel of the operator $P^n$. 
%A natural choice of the parameters is
%\begin{align*}     \theta=\alpha/\beta, \quad \mu=\beta/(\beta-1), \quad \nu=1/(\beta-1),\end{align*}
%for some $\alpha \geq 1$ and $\beta \ge 2$. 
Gaussian bounds then correspond to the case $\beta=2$, while $\beta > 2$ corresponds to sub-Gaussian bounds which appear for fractal type graphs.

To derive continuous time bounds from discrete time bounds, one needs to assume some regularity on the graph. One says that the graph has \emph{controlled weights} if there is $C>0$ such that $$C b(x,y) \geq  \deg(x)$$ for all $x,y \in X$ with $b(x,y)>0$ and \emph{controlled degree} if there is $ C>0 $
$$C^{-1}\le \deg(x)\leq C $$
for all $x\in X$. In this case, one can pass from discrete time bounds to continuous time bounds for large times, see e.g. \cite{Barlowbook}.

\begin{proposition}[Theorems~5.24 and 5.25 in \cite{Barlowbook}] Let $b $ be a graph over $(X,\deg)$ with controlled weights and controlled degree. Let $d \geq 1$, $\beta \ge 2$ such that \eqref{e:AR} holds. Then, the following are equivalent:
\begin{itemize}
    \item[(i)] The graph satisfies discrete time heat kernel bounds \eqref{e:DGB}.
    \item[(ii)] The graph satisfies continuous time heat kernel bounds \eqref{e:GB}.
\end{itemize}    
\end{proposition}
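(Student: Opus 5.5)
The statement is the equivalence, for graphs with controlled weights, controlled degree and \eqref{e:AR}, of the discrete-time bounds \eqref{e:DGB} and the continuous-time bounds \eqref{e:GB}. I would prove it by the standard comparison of the continuous-time heat kernel with the discrete-time one. Since $m=\deg$, we have $L=I-P$, and expanding the exponential gives the Poisson subordination
\[
p_t(x,o)=\sum_{n\ge 0} e^{-t}\frac{t^n}{n!}\,p^{(n)}(x,o),
\]
so the continuous-time kernel is the discrete kernel averaged over a Poisson$(t)$ number of steps.

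For (i) $\Rightarrow$ (ii), I insert \eqref{e:DGB} into this sum and split it into the window $|n-t|\le t/2$ and its complement.

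\emph{Inside the window.} Here $n\asymp t$, so the two-sided bounds $n^{-d/\beta}\exp(-c(|x|^\beta/n)^{1/(\beta-1)})$ transfer to the same form in $t$, at the cost of changing the constants $c_1,c_2$.

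\emph{Outside the window.} The Poisson mass there is at most $e^{-ct}$ by a Chernoff bound. Since $t\ge|x|$, I need $e^{-ct}$ to be dominated by the right-hand side of \eqref{e:GB} at time $t$.
\begin{itemize}
\item The exponent $(|x|^\beta/t)^{1/(\beta-1)}$ is at most $|x|$ when $t\ge |x|$, so the Gaussian factor is at least $e^{-c|x|}\ge e^{-ct}$.
\item The polynomial factor $t^{-d/\beta}$ is absorbed by shrinking $c$.
\end{itemize}
For the upper bound I also need $p^{(n)}\le C$ for small $n$, which holds by controlled degree.

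For the lower bound, the main obstacle is parity: on bipartite graphs $p^{(n)}(x,o)$ vanishes for $n$ of the wrong parity. This is why the lower bound in \eqref{e:DGB} is stated for $p^{(n)}+p^{(n+1)}$, and that form is exactly what the Poisson averaging needs. The lower bound also needs $|x|\le n$ for $p^{(n)}>0$, which is guaranteed inside the window because $t\ge|x|$; here controlled weights, ensuring $p^{(n)}$ is comparable to a lazy walk, keeps the constants uniform.

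For (ii) $\Rightarrow$ (i), Poisson averaging cannot simply be inverted, so this is the genuinely harder direction. I would not invert it directly, but go through the stable characterizations in \cite{Barlowbook}:
\begin{itemize}
\item \eqref{e:GB} together with \eqref{e:AR} implies the parabolic Harnack inequality, respectively the equivalent conditions of volume doubling, Poincaré inequality and cutoff Sobolev / resistance estimates.
\item These conditions are stated purely in terms of the graph and are therefore independent of time discretization.
\item For the discrete walk, with controlled weights making $P$ uniformly lazy up to $P^2$, they imply \eqref{e:DGB}.
\end{itemize}

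The step I expect to need the most care is the Chernoff absorption of the off-window Poisson mass against the sub-Gaussian factor, uniformly in $|x|\le t$. I would handle it using $\exp(-c|x|^{\beta/(\beta-1)}t^{-1/(\beta-1)})\ge e^{-c|x|}\ge e^{-ct}$ and shrinking $c$.
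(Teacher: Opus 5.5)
Your forward direction (i)$\Rightarrow$(ii), via $p_t(x,o)=\sum_{n\ge0}\pi_t(n)\,p^{(n)}(x,o)$ with $\pi_t(n)=e^{-t}t^n/n!$, is the right argument. The upper bound works as you wrote it: on the window use $n\asymp t$, and off the window use $p^{(n)}\le C$, the Chernoff bound and $(|x|^\beta/t)^{1/(\beta-1)}\le |x|\le t$.

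The lower bound, however, contains a step that fails as written. Inside $|n-t|\le t/2$ you only have $n\ge t/2\ge |x|/2$, not $n\ge |x|$. For $t=|x|$ the window contains all $n\in[|x|/2,|x|)$. There $p^{(n)}(x,o)=0$, the lower bound of \eqref{e:DGB} is unavailable, and for large $t$ these $n$ carry Poisson mass close to $1/2$.

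The repair is a one-sided window: keep only pairs $(n,n+1)$ with $t\le n\le 2t$. Then:
\begin{itemize}
\item $n\ge t\ge |x|$ and $n^{-d/\beta}\ge (2t)^{-d/\beta}$;
\item $(|x|^\beta/n)^{1/(\beta-1)}\le (|x|^\beta/t)^{1/(\beta-1)}$;
\item $\pi_t(n+1)/\pi_t(n)=t/(n+1)\in[1/3,1]$ for $t\ge 1$;
\item $\inf_{t\ge 1}\mathbb{P}(t\le N_t\le 2t)>0$ for a Poisson variable $N_t$ with mean $t$.
\end{itemize}
Together these give the lower bound in \eqref{e:GB}. Note that controlled weights and \eqref{e:AR} play no role in this direction, and it works vertex by vertex.

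The paper gives no argument of its own here; it imports both directions from Barlow's book. For (ii)$\Rightarrow$(i) you also defer to deep external results, and you correctly identify why. Inverting the Poisson average termwise costs a factor $\pi_t(n)^{-1}\asymp\sqrt{t}$ at $n\approx t$, which destroys the polynomial factor.

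Your chain runs: two-sided continuous-time bounds $\Rightarrow$ parabolic Harnack inequality $\Leftrightarrow$ time-independent conditions (volume doubling, Poincar\'e, cutoff Sobolev or resistance estimates) $\Rightarrow$ \eqref{e:DGB}. For the last step you could use Delmotte for $\beta=2$, and Grigor'yan--Telcs or Barlow--Bass for $\beta>2$. This is a legitimate route at the same level of rigor as the paper's citation. But it is a roadmap, not a proof: you must check that each cited equivalence is available for the relevant walk (discrete or continuous time) under the hypotheses at hand.

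Be aware also that this route requires \eqref{e:GB} and \eqref{e:AR} uniformly in the base point, i.e.\ for all pairs of vertices and all balls. That is how Barlow formulates them and how the proposition should be read. Elsewhere in the paper these conditions are only imposed at the fixed vertex $o$. From such single-vertex information your converse argument does not go through, whereas your forward direction does.
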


So, the proposition above allows to derive the continuous time bounds \eqref{e:GB} from discrete time bounds \eqref{e:DGB} for large times. In consequence, our results apply to graphs satisfying discrete time heat kernel bounds as well.

\subsubsection{Graphs with Gaussian heat kernel bounds}
In this section, we discuss examples of graphs which are known to satisfy Gaussian heat kernel bounds \eqref{e:GB} for $d\ge 1$ and $\beta=2$.

The classical result in this regard is that Gaussian heat kernel bounds are equivalent to the combination of a volume doubling property and a Poincar\'e inequality, see e.g. \cite{Delmotte1999,Barlowbook}. We recall that a graph $b$ over $(X,\deg)$ satisfies the weak Poincar\'e inequality, cf.~\cite[Definition~3.28]{Barlowbook} if there are $C>0$ and $ \lambda\ge 1 $ such that for all $o\in X$, $r>0$ and $f:X\to \R$, one has 
\begin{align}\label{e:PI}\tag{P}
	\sum_{x\in B_r(o)}  \deg(x)(f(x)-f_{B_r(o)})^2 \leq C r^2 \sum_{x,y\in B_{\lambda r}(o)} b(x,y)(f(x)-f(y))^2,
\end{align}
where $f_{B} = \frac{1}{\deg(B)}\sum_{ B} \deg f$.% and $\deg(B) = \sum_{ B} \deg$. 

Another classical result is that under the condition of Ahlfor's regularity \eqref{e:AR}, Gaussian heat kernel bounds are equivalent to Nash inequalities, see e.g. \cite[Theorem~5.29]{Barlowbook}. We recall that a graph $b$ over $(X,\deg)$ satisfies a Nash inequality if there are $C>0$, $d\ge 1$ such that for all $f\in \ell^{1}(X,\deg)\cap\ell^{2}(X,\deg)$,
\begin{align}\label{e:N}\tag{N}
\|f\|_2^{2+4/d} \leq C \|f\|_1^{4/d} \mathcal{Q}(f).
\end{align}
Note that for $ d>1 $, the Nash inequality is equivalent to an isoperimetric inequality
\begin{align}\label{e:I}\tag{I}
	\inf_{W\subseteq X \mbox{\scriptsize{finite}}}\frac{b(\partial W)}{\deg(W)^{\frac{d-1}{d}}} >0,
\end{align}
where $ \partial W= (W\times X\setminus W)\cup (X\setminus W \times W) $, see \cite[Theorem~3.7]{Barlowbook}. Here, we denote $b(E) = \sum_{(x,y) \in E} b(x,y)$ for a set $E \subseteq X \times X$.

Note that while we require the heat kernel bounds \eqref{e:GB} to hold only with respect to one fixed vertex $o\in X$, the conditions in the proposition give the corresponding bounds for all $o\in X$.

\begin{proposition}[Theorem~6.18 and 6.19 in \cite{Barlowbook}]\label{prop:GBequ} Let $b $ be a graph over $(X,\deg)$ with controlled weights and controlled degree. Let $d \geq 1$.
\begin{itemize}
	\item [(a)] \eqref{e:GB} hold if and only if \eqref{e:AR} and \eqref{e:PI} hold (for all $ o\in X $).
	\item [(b)] If \eqref{e:AR} holds, then \eqref{e:GB} (for all $ o\in X $) hold if and only if \eqref{e:N} holds, which in the case $ d>1 $ is equivalent to \eqref{e:I}.
\end{itemize}		
\end{proposition}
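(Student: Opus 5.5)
The plan is not to reprove these characterizations from scratch. Instead, I will reduce both parts to the corresponding statements in \cite{Barlowbook}, after checking that our setting and normalizations match his. First, I will observe that for a graph $b$ over $(X,\deg)$ with controlled weights and controlled degree, the formal Laplacian is bounded. It coincides with $I-P$, where $P$ is the transition operator of the simple random walk with conductances $b$. Hence $p_t(x,y)$ is exactly the continuous time heat kernel considered in \cite{Barlowbook}, taken with respect to the reversible measure $\deg$. Since $\deg\asymp 1$, the volumes $\deg(B_r(o))$ and the counting measure of balls are comparable, so \eqref{e:AR} is unambiguous. Controlled weights gives the standard condition $p_0$ (bounded geometry) that Barlow assumes throughout, and the combinatorial metric $d_c$ is an intrinsic metric with jump size $1$ up to a constant. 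By the preceding proposition (Theorems~5.24 and 5.25 in \cite{Barlowbook}), the continuous time bounds \eqref{e:GB} with $\beta=2$ for large times are equivalent to the discrete time bounds \eqref{e:DGB}. For small times, \eqref{e:GB} is only required for $t\ge |x|\ge 1$, so the near-diagonal short-time regime plays no role.

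For (a), I will invoke Delmotte's theorem in the form of \cite[Theorem~6.18/6.19]{Barlowbook}: under $p_0$, two-sided Gaussian bounds for all base points are equivalent to volume doubling together with the weak Poincaré inequality \eqref{e:PI}. What remains is to match volume doubling with \eqref{e:AR}. In one direction, \eqref{e:AR} uniformly in the center gives doubling. In the other, Gaussian bounds with exponent $d/2$ evaluated on the diagonal force $\deg(B_{\sqrt t}(o))\asymp t^{d/2}$ via the standard on-diagonal estimate $p_t(o,o)\asymp 1/V(o,\sqrt t)$. This yields \eqref{e:AR} with the same $d$. This identification of the exponent is the point I would check most carefully: the statement's $t^{-d/2}$ prefactor must correspond precisely to volume growth $r^d$, and this holds because of the near-diagonal lower bound.

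For (b), assuming \eqref{e:AR}, I will use \cite[Theorem~5.29]{Barlowbook}. Under Ahlfors regularity, the Nash inequality \eqref{e:N} with parameter $d$ is equivalent to the on-diagonal upper bound $p_t(x,x)\lesssim t^{-d/2}$. Under $p_0$ and \eqref{e:AR}, this is upgraded to full Gaussian upper and lower bounds, via Davies' method for the off-diagonal upper bound and a chaining and near-diagonal lower bound argument. This upgrade is exactly the content of the cited theorem, so I will not redo it. Finally, for $d>1$, I will use \cite[Theorem~3.7]{Barlowbook}, which shows that \eqref{e:N} is equivalent to the isoperimetric inequality \eqref{e:I}: the implication \eqref{e:I}$\Rightarrow$\eqref{e:N} comes from the co-area formula and the resulting Sobolev inequality, and the converse comes from testing \eqref{e:N} on indicator functions $1_W$.

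The main obstacle I expect is bookkeeping rather than new mathematics: making sure that Barlow's assumptions ($p_0$, measure $\deg$, discrete versus continuous time, and the bipartite caveat in the lower bound) are exactly implied by controlled weights and controlled degree. I also need to make sure that the bounds hold for all $o\in X$, which is stronger than what \eqref{e:GB} requires elsewhere in the paper, so no loss occurs when the result is applied there.
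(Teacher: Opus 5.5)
The paper gives no argument beyond citing Theorems~6.18 and~6.19 in \cite{Barlowbook}, so your reduction takes the same route. Your treatment of (a) is fine in outline: Delmotte's theorem, with volume doubling identified with \eqref{e:AR} holding uniformly in the centre.

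One caveat in (a): the estimate $p_t(o,o)\asymp 1/\deg(B_{\sqrt t}(o))$ is normally derived \emph{from} doubling, so using it to get \eqref{e:AR} is circular. Obtain \eqref{e:AR} from \eqref{e:GB} directly instead. The lower bound together with $\sum_y p_t(o,y)\deg(y)\le 1$ gives $\deg(B_{\sqrt t}(o))\lesssim t^{d/2}$. The upper bound, stochastic completeness and a Davies--Gaffney tail estimate for $|y|>t$ (as in Proposition~\ref{prop:SU}) give $\deg(B_{A\sqrt t}(o))\gtrsim t^{d/2}$.

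The genuine gap is in (b), where you assert that \eqref{e:N} and \eqref{e:AR} upgrade to ``full Gaussian upper and lower bounds''. Davies' method does give the upper half of \eqref{e:GB}. Doubling plus this upper bound then gives the on-diagonal lower bound $p_t(x,x)\gtrsim t^{-d/2}$. But your chaining step needs the near-diagonal lower bound $p_t(x,y)\gtrsim t^{-d/2}$ for $\varrho(x,y)\le\delta\sqrt t$. That requires a regularity input such as \eqref{e:PI} (equivalently a parabolic Harnack inequality), and \eqref{e:N} with \eqref{e:AR} does not supply it.

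In fact the implication is false for $d\ge 3$. Glue two copies of $\mathbb{Z}^d$ at one vertex $z$.
\begin{itemize}
\item This graph has controlled weights and degree and satisfies \eqref{e:AR} uniformly in the centre.
\item It satisfies \eqref{e:I}, hence \eqref{e:N}, with exponent $d$. Every finite $W$ has at least half its volume in one sheet, and its boundary edges there are those of a subset of $\mathbb{Z}^d$.
\item Yet \eqref{e:PI} fails on the balls $B_r(z)$. For $f=\pm1$ on the two sheets, the left side is $\asymp r^d$ while the right side is $O(r^2)$. So \eqref{e:GB} fails by (a).
\end{itemize}
Concretely, let $o$ and $x$ lie on different sheets at distance $R$ from $z$, and take $t=R^2$. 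The walk must pass through $z$, which it hits with probability $\lesssim R^{2-d}$, so $p_t(x,o)\lesssim R^{2-2d}$. But \eqref{e:GB} demands $p_t(x,o)\gtrsim R^{-d}$.

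Hence the ``if'' direction of (b), read with two-sided bounds, cannot be proved by your argument or supplied by any citation. What does hold is this: under \eqref{e:AR}, \eqref{e:N} is equivalent to the upper half of \eqref{e:GB} for all $o$. The lower half requires \eqref{e:PI}, through part (a).

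Your argument for \eqref{e:N}$\Rightarrow$\eqref{e:I} also fails. Testing \eqref{e:N} on $f=1_W$ gives $\deg(W)^{1+2/d}\le C\deg(W)^{4/d}\mathcal{Q}(1_W)$, i.e.\ $b(\partial W)\gtrsim\deg(W)^{(d-2)/d}$. That is an isoperimetric inequality of dimension $d/2$, not \eqref{e:I}, which needs the exponent $(d-1)/d$. The co-area argument only yields \eqref{e:I}$\Rightarrow$\eqref{e:N}. Passing back from an $L^2$-type inequality to the $L^1$-type inequality \eqref{e:I} is not achieved this way.
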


Given the equivalences above, 
another important result in this regard is that certain heat kernel bounds are stable under rough isometries. A \emph{rough isometry} between two graphs $b_1$ over $(X_1,\deg_1)$ and $b_2$ over $(X_2,\deg_2)$ with controlled weights 
 is a map $\phi:X_1\to X_1$  such that there is a constant $C>0$ with
\begin{align*}
    C^{-1} d_1(x,y) - C \leq d_2(\phi(x),\phi(y)) \leq C d_1(x,y) + C,
\end{align*} 
for all $x,y \in X_1$,
\begin{align*}
    X_2 \subseteq \bigcup_{x \in X_1} B_C(\phi(x)),
\end{align*}
and 
\begin{align*}
    C^{-1} \deg_1(x) \leq \deg_2(\phi(x)) \leq C \deg_1(x),
\end{align*}
for all $x \in X_1$. Here,   $d_i$ is the graph distance on $X_i$, $i=1,2$, and $B_C(\phi(x))$ is the ball of radius $C$ around $\phi(x)$ with respect to the graph distance on $X_2$. 
A specific example of two rough isometric graphs $b_1$ over $(X,\deg_1)$ and $b_2$ over $(X,\deg_2)$ with controlled weights are so called {controlled weight perturbations}, i.e.,  if there is $C>0$ such that   
%\begin{align*}
$$      C^{-1} b_1\le b_2\leq C b_1.  $$
%\end{align*} 

\begin{proposition}[Theorem~6.19 in \cite{Barlowbook}]\label{prop:GBri} Let $b $ be a graph over $(X,\deg)$ with controlled weights and controlled degree. Let $d \geq 1$ be such that \eqref{e:AR} holds and let $ \beta=2 $. Then, 
    \eqref{e:GB} are stable under rough isometries % and controlled weight perturbations 
	 (with the same parameters $ d $ and $ \beta $).
\end{proposition}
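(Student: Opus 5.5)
The plan is to reduce stability of \eqref{e:GB} to stability of the geometric conditions that characterize it. By Proposition~\ref{prop:GBequ}~(a), for graphs with controlled weights and controlled degree, \eqref{e:GB} with $\beta=2$ (for all $o\in X$) is equivalent to \eqref{e:AR} together with \eqref{e:PI}. So it suffices to show three things: that controlled weights and controlled degree pass to $b_2$ (this should be part of the standing assumptions on both graphs), that \eqref{e:AR} of dimension $d$ transfers along a rough isometry $\phi$, and that \eqref{e:PI} transfers as well. Applying Proposition~\ref{prop:GBequ}~(a) on $X_2$ then gives \eqref{e:GB} with the same $d$ and $\beta=2$. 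Throughout I work with the uniform version of \eqref{e:GB} (for all centers), since that is what the equivalence in Proposition~\ref{prop:GBequ} provides.

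The first step is a bounded-geometry observation. Controlled weights, $Cb(x,y)\ge\deg(x)$ on edges, imply that each vertex has at most $C$ neighbours. Hence balls of a fixed combinatorial radius have uniformly bounded cardinality. If $\phi(x)=\phi(y)$, the lower distance bound gives $d_1(x,y)\le C^2$, so the fibres of $\phi$ have uniformly bounded size. Combined with $\deg_2(\phi(x))\asymp\deg_1(x)$ and controlled degree, this yields the following for large $r$. The set $\phi(B^1_{r}(x))$ lies in $B^2_{Cr+C}(\phi(x))$. Conversely, $B^2_r(\phi(x))$ is covered by $C$-neighbourhoods of $\phi(B^1_{C(r+2C)+C^2}(x))$. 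Therefore $\deg_2(B^2_r(\phi(o)))\asymp \deg_1(B^1_{r'}(o))$ with $r'\asymp r$, which transfers \eqref{e:AR} with the same exponent $d$. Moving the center from a general point of $X_2$ to a nearby point $\phi(o)$ costs only a bounded shift of the radius.

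The main obstacle is the transfer of \eqref{e:PI}, which I would handle by a Kanai-type discretization argument. Given $f$ on $X_2$, define $g=f\circ\phi$ on $X_1$. For an edge $x\sim y$ in $X_1$, the points $\phi(x),\phi(y)$ are at distance at most $2C$. Joining them by a path of bounded length, and using bounded geometry, one gets
\[
\sum_{x,y\in B^1_{\lambda r}} b_1(x,y)\,(g(x)-g(y))^2 \le C'\sum_{u,v\in B^2_{\lambda' r}} b_2(u,v)\,(f(u)-f(v))^2 .
\]
On the left side of \eqref{e:PI}, compare $f$ on $B^2_r$ with $g$ on $B^1_{r''}$ as follows. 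Every $u$ lies within distance $C$ of some $\phi(x)$, and the resulting error $\sum_u \deg_2(u)\,|f(u)-f(\phi(x_u))|^2$ is again bounded by the local energy, since each $u$ is used boundedly often. Using that the mean minimizes $\sum\deg(f-a)^2$, one can replace $f_{B^2_r}$ by the transferred mean of $g$. Applying \eqref{e:PI} on $X_1$ and enlarging $\lambda$ then gives \eqref{e:PI} on $X_2$ with new constants. The delicate bookkeeping is keeping track of the radius enlargements and of the bounded multiplicities; this is exactly why the weak form of \eqref{e:PI}, with $\lambda\ge1$, is the appropriate one.
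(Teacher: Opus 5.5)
Your argument is correct and is essentially the standard proof behind the result the paper cites; the paper gives no argument of its own beyond the reference to Barlow's Theorem~6.19. That proof likewise uses the characterization in Proposition~\ref{prop:GBequ}~(a) to reduce to the invariance of \eqref{e:AR} and of the weak Poincar\'e inequality \eqref{e:PI} under rough isometries, the latter via the Kanai-type transfer $g=f\circ\phi$ you describe. Some minor bookkeeping needs tidying: the approximation error $\sum_u \deg_2(u)\,|f(u)-f(\phi(x_u))|^2$ is bounded by the local energy without a factor $r^2$ (harmless, since $B_r$ is a single vertex for $r<1$), and it is the edges of $X_2$, not the vertices $u$, that are used boundedly often, with $b_1(x,y)\le\deg_1(x)\lesssim b_2(e)$ along the connecting paths coming from controlled weights and controlled degree.
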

%\begin{proof}     The stability of \eqref{e:DGB} under rough isometries is explicitly stated in  which gives the stability of \eqref{e:GB} by the proposition above. %The stability under controlled weight perturbations follows from  \cite[Theorem~6.31 ~(a)]{Barlowbook} combined with \cite[Theorem~6.31 ~(a)]{Barlowbook} \end{proof}

With these preparations above, we can now give explicit examples of graphs satisfying Gaussian heat kernel bounds and, therefore, our main results are applicable. One example are Cayley graphs of finitely generated groups of polynomial growth. By Gromov's theorem, such groups are virtually nilpotent \cite{Gromov,Kleiner}.
Secondly, we give the example of graphs satisfying a Bakry-Emery curvature-dimension type inequality $ CDE'(d_0,0) $ which is found in \cite[Definition~2.4]{HLLY} and says
\begin{align*}
\Gamma_2(f) - \Gamma_1\big(f,\frac{\Gamma_1(f)}{f}\big) \geq \frac{1}{d_0} f(x)^2 (\mathcal{L} \log f)^2,
\end{align*}
where $ \Gamma_1(f)= \frac{1}{2} \big( \mathcal{L} f^2 - 2f \mathcal{L} f \big) $ and $ \Gamma_2(f) = \frac{1}{2} \big( \mathcal{L} \Gamma_1(f) - 2\Gamma_1(f,\mathcal{L} f) \big) $.
A third example is that of semiplanar graphs with non-negative sectional curvature in the sense of \cite{HJL}. A semiplanar graph is embedded into a surface without self-intersection such that  each face is homeomorphic to a closed
disk with finite edges as the boundary. The sectional curvature is the angular defect at each vertex assuming the faces are regular polygons. We refer the reader to \cite{HJL} for the  details where it is also shown that such graphs with non-negative sectional curvature have at most quadratic volume growth.

\begin{corollary}
	Let $b$ be a graph over $(X,m)$ with controlled weights and controlled degree which  is roughly isometric to a graph  with controlled weights and controlled degree satisfying \eqref{e:AR} for $ d\ge 1 $ and either of the following conditions hold true:
	\begin{itemize}
	\item [(a)] The graph is a Cayley graph of a finitely generated group and $ d>1 $.
	\item [(b)] The graph has  non-negative Bakry-\'Emery curvature $CDE'(d_0,0)$ for some $ d_0>0 $ and $d \geq 1$.
	\item [(c)] The graph is semiplanar  with non-negative sectional curvature and $ d=2. $
	\end{itemize} 
	Then, for all $ \sigma\in (0,1] $, $ \sigma<d/2 $, the Hardy weight $$  w_{\sigma,\al_0}  \asymp |x|^{-2\sigma}$$ is an optimal Hardy weight for  $ \al_0=(d/2+\sigma)/2 .$
\end{corollary}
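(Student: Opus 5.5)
The plan is to reduce the corollary to Theorem~\ref{theorem:HardyAsymptotics2}(b) together with Proposition~\ref{lem:optimalnearinfty}. Once \eqref{e:AR} and \eqref{e:GB} with $\beta=2$ are verified for the given graph $b$ over $(X,\deg)$, the conclusion follows quickly. Since $\deg$ is bounded above and below, $m=\deg$ is trivially sub-exponentially bounded from below. For $\sigma\in(0,1]$ with $\sigma<d/2$, we have $\alpha_0=(d/2+\sigma)/2\in(\sigma,d/2)$. Theorem~\ref{theorem:HardyAsymptotics2}(b) then shows that $w_{\sigma,\alpha_0}$ is null-critical. Proposition~\ref{lem:optimalnearinfty} adds optimality near infinity, so $w_{\sigma,\alpha_0}$ is optimal. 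Finally, Theorem~\ref{theorem:HardyAsymptotics} gives $w_{\sigma,\alpha_0}\asymp |x|^{-2\sigma}$.

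The work is therefore in verifying \eqref{e:GB} with $\beta=2$, and \eqref{e:AR}, for $b$ itself. The approach is to check the Gaussian bounds on the model graph to which $b$ is roughly isometric, and then transfer them. Ahlfors regularity transfers under rough isometries with controlled degree: balls map into enlarged balls, and degrees are comparable up to constants. The heat kernel bounds transfer by Proposition~\ref{prop:GBri}. Throughout, I read the hypotheses (a)--(c) as referring to the model graph. In each case I establish \eqref{e:GB} there via Proposition~\ref{prop:GBequ}(a), that is, by proving volume doubling (implied by \eqref{e:AR}) together with the Poincar\'e inequality \eqref{e:PI}.

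Case (a): for a Cayley graph with $m(B_r)\asymp r^d$, the group has polynomial growth. The Poincar\'e inequality on Cayley graphs is classical: every group element is a word of length at most $2r$ in the generators, and one averages $|f(x)-f(xg)|$ along these words, which gives \eqref{e:PI} with $\lambda=3$. Hence \eqref{e:GB} holds. Gromov's theorem is not actually needed beyond identifying these groups. For $d>1$ one could instead verify the isoperimetric inequality \eqref{e:I} and use Proposition~\ref{prop:GBequ}(b).

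Case (b): for $CDE'(d_0,0)$, I would cite the known results (Horn--Lin--Liu--Yau and related work) that non-negative curvature in this sense yields volume doubling and the Poincar\'e inequality, hence Gaussian bounds. Case (c): for semiplanar graphs with non-negative sectional curvature, I would cite \cite{HJL} and follow-up work establishing quadratic volume growth and Poincar\'e-type or isoperimetric inequalities. Combined with $d=2$ and Proposition~\ref{prop:GBequ}, this gives \eqref{e:GB}.

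The main obstacle is that cases (b) and (c) rest entirely on external geometric results. The delicate points are the precise hypotheses: in (b), whether $CDE'$ by itself gives doubling and Poincar\'e without extra assumptions such as a degree bound, which is covered by our controlled-degree assumption; in (c), whether one has a genuine Poincar\'e inequality or only an isoperimetric one. If only isoperimetry is available, I would instead use Proposition~\ref{prop:GBequ}(b) with \eqref{e:AR} and $d=2>1$.
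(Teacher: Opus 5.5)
Your proposal is correct, and its skeleton matches the paper's. You verify \eqref{e:AR}, \eqref{e:GB} with $\beta=2$ and the sub-exponential lower bound for $m=\deg$, carry them along the rough isometry, and apply Theorem~\ref{theorem:HardyAsymptotics2}(b) with $\alpha_0=(d/2+\sigma)/2$.

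The one genuinely different ingredient is case (a).
\begin{itemize}
\item The paper takes the isoperimetric inequality \eqref{e:I} from Coulhon--Saloff-Coste and uses Proposition~\ref{prop:GBequ}(b). This is exactly where the hypothesis $d>1$ enters, since \eqref{e:I} is equivalent to \eqref{e:N} only for $d>1$.
\item You instead prove \eqref{e:PI} directly by averaging along words of length at most $2r$, and then use Proposition~\ref{prop:GBequ}(a). Your route is elementary and self-contained, and it would also cover $d=1$ (with $\sigma<1/2$). The paper's route is a one-line citation.
\end{itemize}
One small correction in your phrasing: the doubling you need in (a) comes from \eqref{e:AR} together with translation invariance of the Cayley graph. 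It does not follow from \eqref{e:AR} at a single vertex alone. In (b) and (c), the cited works (Horn--Lin--Liu--Yau, Hua--Jost--Liu) supply volume doubling themselves.

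In (b) you take doubling and Poincar\'e from Horn--Lin--Liu--Yau, whereas the paper uses their discrete Gaussian bounds \eqref{e:DGB}. Both routes feed into Proposition~\ref{prop:GBequ}.

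You also make explicit two steps the paper leaves tacit:
\begin{itemize}
\item Null-criticality yields optimality near infinity via Proposition~\ref{lem:optimalnearinfty}. This is needed because ``optimal'' means null-critical and optimal near infinity.
\item In cases (a) and (c), the bounds \eqref{e:GB} must be transferred from the model graph to $b$ by Proposition~\ref{prop:GBri}.
\end{itemize}
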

\begin{proof} One first observes that \eqref{e:AR} is stable under rough isometries, cf.~\cite[Exercise 4.16]{Barlowbook}.
	
	(a) By \cite[Theoreme 1]{CS}, see also \cite[Theorem~4.9]{Grigoryanbook}, we see that the graph satisfies \eqref{e:I}. Hence, by  Proposition~\ref{prop:GBequ} above, the graph satisfies \eqref{e:GB} with $d$ being the growth rate of the volume of balls with respect to the graph distance. Furthermore, the assumption on controlled degrees, gives that the measure is sub-exponentially bounded from below. Hence, the assumptions of Theorem~\ref{theorem:HardyAsymptotics2} are satisfied and the assertion follows.
	
	(b) In \cite[Theorem~2.2]{HLLY} it is shown that $ CDE'(d_0,0) $ implies discrete Gaussian heat kernel bounds \eqref{e:DGB}. Thus, the statement follows from Propositions~\ref{prop:GBequ},~\ref{prop:GBri} and  Theorem~\ref{theorem:HardyAsymptotics2}.

	(c) In \cite[Theorem 1.2]{HJL} the authors show that such graphs satisfy  a weak Poincar\'e inequality \eqref{e:PI}. Hence, by Proposition~\ref{prop:GBequ}, the graph satisfies \eqref{e:GB} with $d=2$. The assertion then follows from Theorem~\ref{theorem:HardyAsymptotics2}.
\end{proof}

We refer the reader who is interested in possibly unbounded graphs which have non-negative Bakry-Emery curvature of type $ CDE' $ to \cite{GLLY} where they show a parabolic Harnack inequality. This in turn can be seen to be equivalent to Gaussian heat kernel bounds even in the unbounded case \cite{BC}.

%For instance, such bounds are known to hold for graphs satisfying the so-called \emph{volume doubling property} and a \emph{Poincar\'e inequality}, see e.g. \cite{Delmotte1999,HK}. Examples of such graphs include \emph{Cayley graphs of finitely generated groups of polynomial growth} (in particular, lattices $\Z^d$), \emph{graphs with polynomial volume growth and non-negative Ollivier curvature}, \emph{graphs with polynomial volume growth and non-negative Bakry-\'Emery curvature}, and \emph{graphs satisfying certain isoperimetric inequalities}, see e.g. \cite{HLLY,MW,LPZ,CLP16}.

\subsubsection{Fractal-type graphs}
In this section, we discuss graphs which allow for heat kernel bounds beyond $ \beta
=2 $. Typically, these are fractal type graphs such as the Sierpinski gasket. Again we draw from \cite{Barlowbook} for an abstract characterization which uses the effective resistance given for a graph $ b $ over $ X $ as 
\begin{align*}
R(x,y)=\sup\{|f(x)-f(y)|^{2}\mid 
\mathcal{Q}(f)\leq1\}.
\end{align*}
We say that a graph $b$ over $(X,\deg)$ satisfies the \emph{resistance condition} if there are $C>0$, $\beta>0$ such that for all $x,y\in X$ and $r=d_c(x,y)>0$,
\begin{align}\label{e:RC}\tag{R}
	C^{-1}\frac{ r^{\beta} }{\deg(B_r(x))}\leq R(x,y) \leq C \frac{ r^{\beta} }{\deg(B_r(x))}.
\end{align}

Combining Proposition~\ref{prop:GBequ} with results from \cite{Barlowbook} gives the following characterization of heat kernel bounds. See also \cite{GT} for a characterization in terms of volume and Green function estimates.

\begin{proposition}[Theorem~6.9 and Corollary 6.10 in \cite{Barlowbook}]
	Let $b$ be a graph over $(X,\deg)$ with controlled weights and controlled degree. Let $\beta>d $. Then, the following are equivalent:
	\begin{itemize}
		\item[(i)] The graph satisfies heat kernel bounds \eqref{e:GB} with parameters $d$ and $\beta$.
		\item[(ii)] The graph satisfies Ahlfors regularity \eqref{e:AR} with parameter $ d $ and the resistance condition \eqref{e:RC} with parameter  $\beta$.
	\end{itemize}
	In this case, $ d\ge1 $ and  $2\le \beta\leq d+1 $.
\end{proposition}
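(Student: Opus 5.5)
The plan is to reduce the statement to the cited results of \cite{Barlowbook} and then check the parameter restrictions directly. The equivalence has to be transported between the discrete and continuous time settings, because \cite[Theorem~6.9]{Barlowbook} is formulated for the discrete time kernel $p^{(n)}$ in \eqref{e:DGB}. Since the graph has controlled weights and controlled degree, the proposition quoted above (Theorems~5.24 and 5.25 in \cite{Barlowbook}) lets us pass between \eqref{e:DGB} and \eqref{e:GB}. That proposition assumes \eqref{e:AR}, which holds in both directions: in (ii) by hypothesis, and in (i) because, as I expect, two-sided heat kernel bounds force \eqref{e:AR}. Granting this, it suffices to prove the equivalence \eqref{e:DGB} $\Longleftrightarrow$ \eqref{e:AR}$+$\eqref{e:RC}.

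For (ii) $\Rightarrow$ (i), I would use the strongly recurrent regime $\beta>d$. There \eqref{e:RC} gives $R(x,y)\asymp r^{\beta-d}\to\infty$, so the walk is recurrent. One then bounds the on-diagonal kernel by $p_{2n}(x,x)\lesssim 1/\deg(B_r(x))$ at the scale $r^\beta\asymp n$, using $R\cdot \deg(B_r)\asymp r^\beta$ and the fact that mean exit times from balls are comparable to $r^\beta$. Off-diagonal bounds follow from the chaining argument in \cite[Chapter~6]{Barlowbook}. For (i) $\Rightarrow$ (ii), integrating \eqref{e:DGB} in space over balls yields \eqref{e:AR}. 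The heat kernel bounds also give Green function estimates on balls with killing, and through $R(x,y)\asymp g_{B}(x,x)$ at the appropriate scale these give \eqref{e:RC}. These are exactly the steps of \cite[Theorem~6.9, Corollary~6.10]{Barlowbook}, which I would cite rather than redo.

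The final constraints I would verify by hand from \eqref{e:AR} and \eqref{e:RC}.
\begin{itemize}
\item For $d\ge1$: the graph is connected and infinite, so a geodesic ray from $o$ puts at least $r$ vertices in $B_r(o)$. With degrees bounded below this gives $\deg(B_r)\gtrsim r$, and \eqref{e:AR} forces $d\ge1$.
\item For $\beta\le d+1$: take a path of length $r$ from $x$ to $y$. Controlled weights make every conductance on it $\gtrsim1$, so by the series law $R(x,y)\lesssim r$. Comparing with the lower bound in \eqref{e:RC} gives $r^{\beta-d}\lesssim r$.
\item For $\beta\ge2$: test $R$ with $f(z)=(r-d_c(z,x))_+/r$. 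Its energy is supported on edges in $B_r(x)$ with increments at most $1/r$, so $\mathcal{Q}(f)\lesssim \deg(B_r(x))/r^2$. Hence $R(x,y)\gtrsim r^2/\deg(B_r(x))$, and the upper bound in \eqref{e:RC} gives $r^\beta\gtrsim r^2$.
\end{itemize}

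The main obstacle is the transfer step. First, one must make sure that \eqref{e:AR} is available to invoke Theorems~5.24 and 5.25, which requires extracting volume bounds from heat kernel bounds that are only assumed at the single vertex $o$. Second, one must check that the bipartite and parity technicalities in the discrete lower bound do not obstruct the argument. If the one-vertex version is insufficient, I would strengthen the hypothesis to hold uniformly in the base point, as in \cite{Barlowbook}.
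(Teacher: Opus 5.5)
Your proposal is correct and is essentially the paper's route: the paper gives no argument beyond citing \cite[Theorem~6.9 and Corollary~6.10]{Barlowbook}, and your reduction to those results, with the discrete/continuous-time transfer via Theorems~5.24 and 5.25, is the same reduction made explicit. Your elementary checks are sound: $d\ge 1$ from volume growth along a geodesic, $\beta\le d+1$ from the series law, and $\beta\ge 2$ from the cutoff test function $(r-d_c(\cdot,x))_+/r$. Your caveat that the conditions should be read uniformly in the base point, as in \cite{Barlowbook}, is also the appropriate reading of the statement.
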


In particular, the proposition above gives a characterization of heat kernel bounds for $\beta >2$ which are typical for fractal type graphs. 

\begin{figure}
	\includegraphics[width=\textwidth]{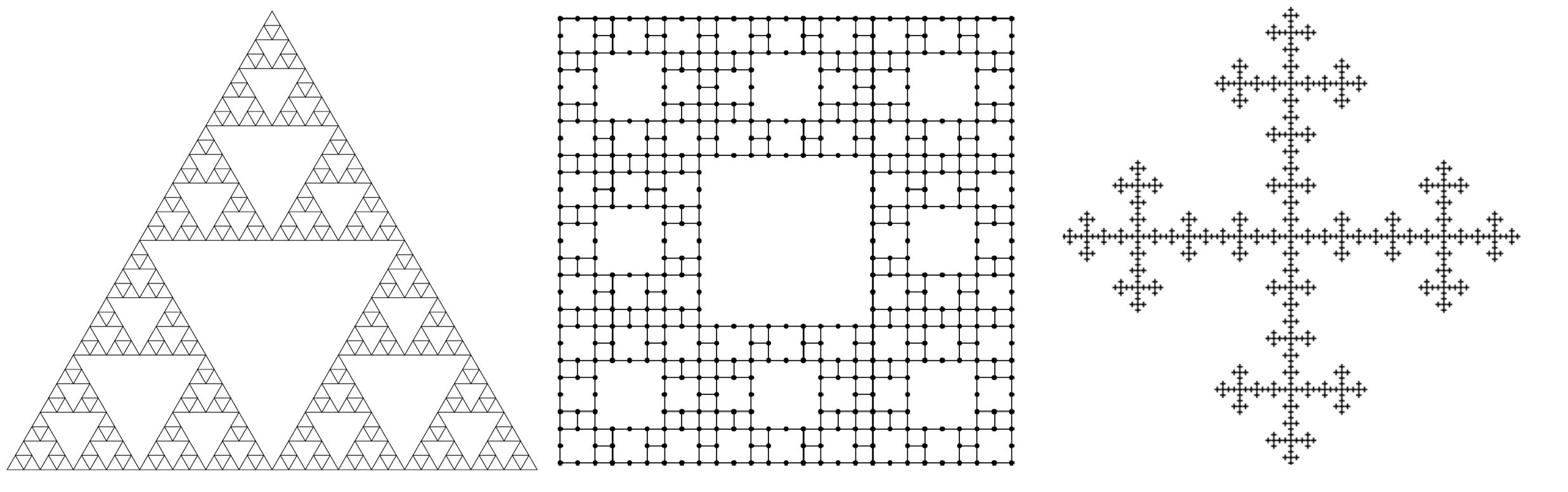}
	\caption{From left to right: Sierpinski gasket, Sierpinski carpet and Vicsek set.}
\end{figure}

\begin{example}
	(a) The \emph{Sierpinski gasket} graph satisfies \eqref{e:GB} with $d=\log 3/\log 2$ and $\beta=\log 5/\log 2$, see e.g. \cite[Corollary~6.11]{Barlowbook} or \cite{Jo}. Hence, for $ \sigma\in (0,\log 3/\log 5) $, the fractional Laplacian is transient and there is a Hardy weight $ w_{\sigma,\al} $ for all $ \alpha\in (\sigma,\log 3/\log 5) $  which satisfies the asymptotics
	\[  w_{\sigma,\al}(x) \asymp |x|^{- \sigma\log 5/\log 2}.\]
which is optimal for $$  \alpha=\alpha_0=\frac{\log 3+ \sigma\log 5 }{2\log 5 }    $$
by Theorem~\ref{theorem:HardyAsymptotics2}.

(b) In \cite{BB}, the authors study the \emph{Sierpinski carpet} graph which is a planar graph with holes and show that it satisfies \eqref{e:GB} with for certain values of $ d $ and $ \beta $, where in particular $ \beta >2 $ is only shown to exist but not accessible to computation.

(c) The \emph{Vicsek set} is for example studied in \cite{Zhou}.
We assume a branching of $ 2 $, which means there are 4 arms in the Vicsek set. Then the parameters are given as $ d=\log 5/\log 3 $ with $ \beta=\log 15/\log 3 $ which gives transience of the fractional Laplacian for $ \sigma\in (0,\log 5/\log 15) $ and there exists a Hardy weight $ w_{\sigma,\al} $ for all $ \alpha\in (\sigma,\log 5/\log 15) $  which satisfies the asymptotics
	\[  w_{\sigma,\al}(x) \asymp |x|^{-\sigma \log 15 /\log 3}\]
	which is optimal for $$  \alpha=\alpha_0=\frac{\log 5+ \sigma \log 15}{2\log 15 }    $$
by Theorem~\ref{theorem:HardyAsymptotics2}.
\end{example}

\textbf{Acknowledgement.} The authors acknowledge the financial support of the DFG. The second and the third author are grateful for the hospitality and support  of  the IIAS.

\bibliographystyle{alpha}
\bibliography{mynewbib}

%\printbibliography
\end{document}